\newtheorem{definition}{Definition}[section]
\newtheorem{lemma}{Lemma}[section]
\newtheorem{theorem}{Theorem}[section]
\newtheorem{Proposition}{Proposition}[section]
\numberwithin{equation}{section}
\newtheorem{Remark}{Remark}[section]
\def \b#1{\overline{#1}}
\def \wt#1{\widetilde{#1}}
\def \wh#1{\widehat{#1}}
\def \wth#1{\widehat{\widetilde{#1}}}
\newcommand{\vep}{\varepsilon}
\newcommand{\p}{\prime}
\newcommand{\mf}{\mathbf}
\newcommand{\Rmnum}[1]{\expandafter\@slowromancap\romannumeral #1@} \makeatother
\title{Elliptic soliton solutions:  $\tau$ functions, vertex operators and bilinear identities}
\author{Xing Li, ~~ Da-jun Zhang\footnote{Corresponding author. Email: djzhang@staff.shu.edu.cn}\\
	{\small\it Department of Mathematics,
		Shanghai University, Shanghai 200444,  P.R. China}}
\date{\today}
\begin{document}

\maketitle

\begin{abstract}
We establish a bilinear framework for elliptic soliton solutions
which are composed by the Lam\'e-type plane wave factors.
$\tau$ functions in Hirota's form are derived and vertex operators that generate such $\tau$ functions are presented.
Bilinear identities are constructed and an algorithm  to calculate residues and bilinear equations is formulated.
These are investigated in detail for the KdV equation and sketched for the KP hierarchy.
Degenerations by the periods of elliptic functions are investigated,
giving rise to the bilinear framework associated with
trigonometric/hyperbolic and rational functions.
Reductions by dispersion relation are considered by employing the so-called
elliptic $N$-th roots of the unity.
$\tau$ functions, vertex operators and bilinear equations of the KdV hierarchy and Boussinesq equation
are obtained from those of the KP.
We also formulate two ways to calculate bilinear derivatives involved with the Lam\'e-type plane wave factors,
which shows that  such type of plane wave factors result in quasi-gauge property of bilinear equations.

\vskip 6pt

\noindent
\textbf{Key Words:} elliptic soliton solution,  $\tau$ function, vertex operator, bilinear identity,
Weierstrass function, Lam\'e function.
\end{abstract}

\tableofcontents


\vskip 20pt

\section{Introduction}\label{sec-1}

The profound theory developed by Sato and his collaborators in 1980s
brings a deep insight on integrable systems \cite{MJD-book-1999}.
$\tau$ functions, vertex operators and bilinear identities together play a central role in this celebrated theory.
In particular, via vertex operators, $\tau$ functions and hence soliton equations are connected to affine Lie algebras.
These $\tau$ functions, generically,  are composed by   a plane wave factor (PWF) with a
\textit{linear exponential function}   
$e^{k t_1+k^2 t_2 + k^3 t_3 +\cdots}$.

In this paper we will develop Sato's theory for integrable systems,
aiming to establish a bilinear framework for the  $\tau$ functions, vertex operators and bilinear identities
that are associated with a Lam\'e-type PWF
\begin{equation}
\frac{\sigma(x+k)}{\sigma(x)\sigma(k)}e^{-\zeta(k)x+\zeta'(k)t_2-\frac{1}{2!}\zeta''(k)t_3+\cdots}.
\end{equation}
The Lam\'e function,
$y=\frac{\sigma(x+k)}{\sigma(x)\sigma(k)}e^{-\zeta(k)x}$,
is a doubly periodic function with respect to $k$,
bearing the name as it is a solution of the Lam\'e equation (the Schr\"odinger equation with an elliptic potential $\wp(x)$)
 \begin{equation}
 y''+(A+B\wp(x))y=0,
 \end{equation}
where $A=-\wp(k)$ and $B=-2$.
Here $\sigma, \zeta$ and $\wp$ are the Weierstrass $\sigma, \zeta$ and $\wp$ functions,
where $\wp(x)$ is an elliptic function, i.e. doubly periodic and meromorphic.
Elliptic curves can paly a role in integrable systems either as elliptic type solutions
or as elliptic deformations of the equations themselves, 
either way brings richer insight to integrable systems than trigonometric/hyperbolic and rational cases.
Apart from the famous finite-gap integration method developed by  Novikov, Matveev, Dubrovin, Its and Krichever
(see \cite{BBEIM-1994,Matveev} and the references therein),
a second pioneer work is \cite{AMM-CPAM-1977}
which extended the connection between the Korteweg-de Vries (KdV) equation and Calogero-Moser model
from rational  to elliptic case.
Soliton solutions based on the Lam\'e function have emerged in \cite{74Wahlqulst} in 1976 for the KdV equation.
In 2010 Nijhoff and Atkinson \cite{IMRN2010} developed a direct approach to obtain elliptic $N$-soliton solutions
for some quadrilateral equations that are consistent-around-cube and classified in \cite{ABS-2003}.
Their approach relies on Cauchy matrix and discrete (and elliptic) Lam\'e type PWFs.
The obtained solutions are termed as elliptic $N$-soliton solutions \cite{IMRN2010}.
Later, their approach was applied to
the lattice potential Kadomtsev-Petviashvili (KP) equation \cite{YN-JMP-2013}.
More recently, an elliptic direction linearisation approach was established in \cite{NSZ-2019},
and elliptic $N$-th roots of unity was introduced to construct
elliptic soliton solutions of the discrete Boussinesq type equations
and to deal with dimension reductions \cite{NSZ-2019}.

We shall now sketch the plan of this paper and describe main results in more detail.
The KdV equation will serve as our first introductory model  to bear details.
We will follow \cite{IMRN2010} and still use the term \textit{elliptic  soliton solutions},
although in continuous case these solutions are no longer elliptic (but still doubly periodic
with respect to parameters $k_j$ (see, e.g. Theorem \ref{T-3})
and expressed in terms of Weierstrass functions).
We will begin with an elliptic 1-soliton solution of the lattice potential KdV (lpKdV) equation.
By showing continuum limits of the equation and solution, we are able to have a full profile
as well as a comparison of the Lam\'e-type PWFs from fully discrete to continuous.
As a new feature, all these PWFs are no longer solutions to the linear part of the corresponding nonlinear equations.
This is different from the case of usual solitons composed by linear exponential functions.

Section \ref{sec-3} will play a role to present details that how a $\tau$ function for elliptic $N$-soliton solutions
is obtained from a Wronskian and from vertex operators,
how a bilinear identity is constructed and how explicit bilinear equations arise from the bilinear identity.
The KdV equation is still the model equation of this section.
We will begin by deriving its elliptic 1- and 2-soliton solutions from the bilinear KdV equation \eqref{bilinear-a}
using the standard Hirota's procedure, but the procedure is more complicated than the usual soliton case.
These two solutions are presented in Eq.\eqref{1ss-f1} and \eqref{2ss}.
Details of the derivation and some bilinear derivative formulae involved with the Lam\'e-type PWFs
are given in Appendix \ref{A-2}.
A key and new feature is the gauge property for bilinear derivatives of the usual soliton case
is not valid any longer for the Lam\'e-type PWFs,
and instead, we have quasi-gauge property (see Proposition \ref{P-B1}).
As a consequence, a KdV-type bilinear equation does not always admit an elliptic 2-soliton solution
and even elliptic 1-soliton.
This is also different from the usual soliton case where a KdV-type bilinear equation always has  a 2-soliton solution
\cite{Hir-1980,Hir-book-2004}.
The formula of $\tau$ function in Hirota's form for the elliptic $N$-soliton solution
is secured from a Wronskian that satisfies the bilinear KdV equation.
This formula is presented in Eq.\eqref{f-Hirota} in Theorem \ref{T-2}.
To obtain it, the quasi-gauge property  and some formulae and identities
of the Weierstrass functions are employed.
The vertex operator to generate such a $\tau$ function is given in Theorem \ref{T-3}.
After that, we will present a bilinear identity \eqref{bil-int-KdV} and its residue form \eqref{bil-res-KdV} in Theorem \ref{T-4}.
The identity is constructed by using double-periodicity of the integrand
and implementing the integration around the fundamental period parallelogram.
It turns out that the integrand has $2N$ simple poles and an essential singularity at $q=0$ (mod period lattice).
Similar to the usual soliton case, the integral bilinear identity equals to the residue of the integrand at $q=0$,
but the way to achieve the residue is not straightforward at all.
We will develop an algorithm for this matter in Sec.\ref{sec-3-4}
and a practical formula for calculating residues as well as bilinear equations 
is presented Eq.\eqref{bil-D-KdV-4} in Theorem \ref{T-5}.
After the exploration of the KdV equation with necessary details,
we will move to the KP equation in  Sec.\ref{sec-4}
and sketch the main results  in Theorems \ref{T-7}, \ref{T-8}, \ref{T-9} and \ref{T-10}.

In section \ref{sec-5} we will discuss period degenerations of the elliptic soliton solutions when the
discriminant $\Delta=g_2^3-27g_3^2=0$.
This will give rise to soliton solutions of trigonometric/hyperbolic type and rational type.
The degenerations are straightforward.
That is to say, one can directly substitute the degenerated Weierstrass functions (see Proposition \ref{P-A2})
into the $\tau$ functions and bilinear equations we obtain in sections \ref{sec-3} and \ref{sec-4}.
The degenerated results for the KP hierarchy are given in Theorem \ref{T-11} and \ref{T-12}.
Three types of PWFs of the KP hierarchy are given in \eqref{PWF-KP-e},  \eqref{PWF-KP-t} and \eqref{PWF-KP-r},
respectively.
Note that Theorem \ref{T-11} presents a more concise expression for the trigonometric/hyperbolic-type $\tau$ function and
the associated vertex operator,
which allows a direct replacement of $\sigma(x)$ and $\zeta(x)$ by $\sin(\alpha x)$ and $\alpha \cot(\alpha x)$ respectively.
In section \ref{sec-5} we will also investigate reductions by dispersion relations
(corresponding to periodic reductions of the usual soliton case).
Elliptic $N$-th roots of the unity (see \cite{NSZ-2019} and Definition \ref{D-1} in this paper) will be used.
However, different from the usual soliton case,
when $N\geq 3$, the  elliptic $N$-th roots of the unity are not simultaneously the elliptic $(kN)$-th roots of the unity
where $k\in \mathbb{N}$, (see Remark \ref{R-A1} in Appendix \ref{A-1}).
This means one cannot get elliptic soliton solutions for the Gel'fand-Dickey (with $N\geq 3$) hierarchy
from those of the KP hierarchy by reduction using elliptic $N$-th roots of the unity.

We have introduced the plan of our paper as well as the main results and some new features associated with
the Lam\'e-type PWFs.
The paper also contains a section where we will present conclusions and mention some further topics
based on the framework of this paper.
In addition, there are three appendices,
which include a collection of the Weierstrass functions and the related properties and identities,
some calculating formulae involved with Hirota's bilinear operator and the Lam\'e-type PWFs,
and proofs for the elliptic $N$-soliton solutions in Wronskian forms that satisfy respectively the bilinear KdV equation
and KP equation.

\section{Lam\'e-type plane wave factors}\label{sec-2}

PWF is an elementary block of $N$-soliton solutions.
In this section we begin by exploring PWFs and dispersion relations of elliptic solitons, 
for fully discrete, semi-discrete and continuous cases.
We will consider usual 1-soliton solution and elliptic 1-soliton solution of the lpKdV equation
and implement continuum limits of both the equation and solution,
so that one can make a comparison for the usual and elliptic cases.

Recalling the KdV equation (with scaled coefficients for our convenience)
\begin{equation}\label{kdv}
u_t=\frac{3}{2}uu_x+\frac{1}{4}u_{xxx}
\end{equation}
and its potential form $(u=v_x)$
\begin{equation}\label{pkdv}
v_t=\frac{3}{4}v_x^2+\frac{1}{4}v_{xxx},
\end{equation}
which admits 1-soliton solution
\begin{equation}
v=\frac{4k e^{2k x+2k^3 t}}{1+e^{2k x+2k^3 t}}.
\end{equation}
The PWF is
\begin{equation}\label{PWF-kdv}
\rho(k)=e^{2k x+2k^3 t},
\end{equation}
which is a solution of the linear part of the (potential) KdV equation and indicates the dispersion relation of the equation.

The lpKdV equation reads \cite{bookHJN,NC-AAM-1995,NCW-LNP-1985}
\begin{equation}\label{lpkdv}
(w-\wh{\wt{w}})(\wh{w}-\wt{w})=p^2-q^2,
\end{equation}
where we adopt notations
\[w=w(n,m),~~ \wt w=w(n+1,m),~~ \wh w=w(n,m+1),~~ \wth w=w(n+1,m+1),\]
$n,m\in \mathbb{Z}$, $p$ and $q$ are spacing parameters of the $n$-direction and $m$-direction, respectively.
This equation has a background solution $w_0=pn+qm+c$ and a usual 1-soliton solution \cite{09HZ}
\begin{equation}\label{1ss}
w=w_0+\frac{k (1-\rho)}{1+\rho},
\end{equation}
where
\begin{equation}\label{PWF-lpkdv}
\rho=\biggl(\frac{p+k}{p-k}\biggr)^n\biggl(\frac{q+k}{q-k}\biggr)^m\rho_{00}
\end{equation}
is the PWF.
Here $c, k$ and $\rho_{00}$ are constants.
Removing the background $w_0$ by introducing $v=w-w_0$,
the lpKdV equation \eqref{lpkdv} is converted to
\begin{equation}\label{lpkdv-v}
(v-\wh{\wt{v}}-p-q)(\wh{v}-\wt{v}-p+q)=p^2-q^2.
\end{equation}
The PWF \eqref{PWF-lpkdv} solves the linear part of the above equation.

With new  parametrizations
\begin{equation}\label{pq-para}
p^2=\wp(\delta)-e_1,~~ q^2=\wp(\vep)-e_1,
\end{equation}
 the lpKdV equation \eqref{lpkdv} allows a background solution
 \begin{equation}
w_0 =\zeta(\xi)-n\zeta(\delta)-m\zeta(\vep)-c_0
\end{equation}
where
\begin{equation}
\xi=n\delta + m \vep,
\end{equation}
$e_1,   c_0 \in \mathbb{C}$,
and $\delta, \varepsilon$ serve as lattice parameters.
For the Weierstrass functions $\sigma(x), \zeta(x)$ and $\wp(x)$ 
and related notations and properties please refer to Appendix \ref{A-1}.
The elliptic 1-soliton solution of the lpKdV equation is  \cite{IMRN2010}
\begin{equation}\label{1ss-ell}
	w=w_0+\frac{\eta_{-k}(\xi)+\eta_{k}(\xi) \rho}{1+\rho},
\end{equation}
where
\begin{equation}\label{eta}
\eta_x(y)=\zeta(x+y)-\zeta(x)-\zeta(y),
\end{equation}
and the PWF is
\begin{equation}\label{PWF-ell}
\rho=\frac{\sigma(k+\xi)}{\sigma(k-\xi)}\left(\frac{\sigma(k-\delta)}{\sigma(k+\delta)}\right)^n
\left(\frac{\sigma(k-\vep)}{\sigma(k+\vep)}\right)^m \rho_{00},
\end{equation}
with $k, \rho_{00}\in \mathbb{C}$.
Again, removing the background $w_0$ from \eqref{lpkdv} by $v=w-w_0$ yields
\begin{equation}\label{lpkdv-v-ell}
 (v-\wh{\wt{v}}+\chi_{\delta,\vep}(\xi))(\wh{v}-\wt{v}-\chi_{-\delta,\vep}(\xi+\delta))
=\wp(\delta)-\wp(\vep),
\end{equation}
where
\begin{equation}\label{chi}
\chi_{\delta,\vep}(\gamma)=\zeta(\delta)+\zeta(\vep)+\zeta(\gamma)-\zeta(\delta+\vep+\gamma).
\end{equation}
Equation \eqref{lpkdv-v-ell} admits a solution
\begin{equation}\label{v-1ss}
	v=\frac{\eta_{-k}(\xi)+\eta_{k}(\xi) \rho}{1+\rho}
\end{equation}
with PWF \eqref{PWF-ell}.
Note that for given $n,m$ and constant $\rho_{00}$ that are independent of $(k, \delta, \vep)$,
the PWF $\rho$ and $\eta_{\pm k}(\xi)$ are elliptic functions of $(k, \delta, \vep)$, and so is $v$ given above.
However, the PWF \eqref{PWF-ell} is no longer a solution of
the linear part of the equation \eqref{lpkdv-v-ell}.

To show the Lam\'e-type PWFs in semi-discrete and continuous form,
we consider continuum limits of the lpKdV equation \eqref{lpkdv-v-ell}
together with its elliptic soliton solution \eqref{v-1ss}.
Let $m \rightarrow\infty,~\vep \rightarrow 0$ while $\mu=m \vep$ be finite.
Noticing those Laurent series listed in \eqref{pzs-expan} and
\begin{align*}
&\chi_{\delta,\vep}(\xi)=\frac{1}{\vep}-\eta_{\delta}(\mu+n\delta)+\vep\wp(\mu+(n+1)\delta)
+\frac{\vep^2}{2}\wp'(\mu+(n+1)\delta)+O(\vep^3),\\
&\chi_{-\delta,\vep}(\xi+\delta)=\frac{1}{\vep}+\eta_{\delta}(\mu+n\delta)+\vep\wp(\mu+n\delta)
+\frac{\vep^2}{2}\wp'(\mu+n\delta)+O(\vep^3),
\end{align*}
in continuum limits the lpKdV equation \eqref{lpkdv-v-ell} yields
the semi-discrete pKdV equation (with a $n$-dependent coefficient $\eta_{\delta}(\mu+n\delta)$)
\begin{equation}\label{pkdv-sd}
\partial_{\mu}(v+\wt{v})+(\wt{v}-v)^2+2\eta_{\delta}(\mu+n\delta)(\wt{v}-v)=0,
\end{equation}
which admits an elliptic 1-soliton solution
\begin{equation}\label{v-1ss-sd}
	v=\frac{\eta_{-k}(n\delta+\mu)+\eta_{ k}(n\delta+\mu) \rho}{1+\rho}
\end{equation}
where the  PWF is (with $\rho_{0}\in \mathbb{C}$)
\begin{equation}\label{PWF-ell-sd}
\rho=\frac{\sigma(k+n\delta+\mu)}{\sigma(k-n\delta-\mu)}\left(\frac{\sigma(k-\delta)}{\sigma(k+\delta)}\right)^n
 e^{-2\zeta(k)\mu}\rho_{0}.
\end{equation}
Strictly speaking, this PWF is doubly periodic with respect to $k$ but not elliptic
as there is an essential singularity at $k=0$ due to $e^{-2\zeta(k)\mu}$.
However, we would like to inherit the term \textit{elliptic $N$-soliton solutions} introduced in \cite{IMRN2010}.
Note also that the PWF does not solve the linear part of Eq.\eqref{pkdv-sd} either.
In the full continuum limit, first, we let $n \rightarrow\infty,~ \delta \rightarrow 0$ while
$\nu=n\delta$ be finite, and then introduce
$x=\mu+\nu, ~ t=\frac{1}{3}\delta^2 \nu$.
The resulting equation with coordinates $(x,t)$ is
\begin{equation}\label{pkdv-ell}
	v_t-\frac{3}{2}v_x^2+3\wp(x)v_x-\frac{1}{4}v_{xxx}=0,
\end{equation}
and its elliptic 1-soliton solution takes a form
\begin{equation}\label{v-1ss-cc}
v= \frac{\eta_{-k}(x)+\eta_{ k}(x) \rho}{1+\rho},
\end{equation}
where the PWF for the continuous elliptic soliton solution is
\begin{equation}\label{PWF-ell-cc}
\rho=\frac{\sigma(k+x)}{\sigma(k-x)}
 e^{-2\zeta(k)x+\wp'(k)t+\xi^{(0)}},
\end{equation}
with parameter $\xi^{(0)}\in \mathbb{C}$  independent of $k$ or being a doubly periodic function of $k$.
Note that employing the transformation
\begin{equation}
\bar{v}=2(v+\zeta(x)+\frac{1}{8}g_2 t)
\end{equation}
one can convert Eq.\eqref{pkdv-ell} into the usual pKdV equation (i.e. \eqref{pkdv})
\begin{equation}\label{pkdv-ell-1}
\bar{v}_t-\frac{3}{4}\bar{v}_x^2-\frac{1}{4}\bar{v}_{xxx}=0.
\end{equation}
Besides, the nonpotential form of Eq.\eqref{pkdv-ell} is $(u=v_x)$
\begin{equation}\label{kdv-ell}
	u_t-3 u u_x+3\wp(x)u_x+3\wp'(x)u -\frac{1}{4}u_{xxx}=0,
\end{equation}
which, by transformation  $u \rightarrow \frac{1}{2}u+\wp(x)$,
is written as the usual KdV equation \eqref{kdv}.
However, the PWF \eqref{PWF-ell-cc} is not a solution of the
linear part of any of equations, \eqref{pkdv-ell} or \eqref{pkdv-ell-1} or \eqref{kdv-ell} or \eqref{kdv}.
Note that the elliptic 1-soliton solution $u=v_x$ with \eqref{v-1ss-cc} emerged in \cite{74Wahlqulst}.

Now let us make a comparison for the two PWFs, \eqref{PWF-ell-cc} and \eqref{PWF-kdv},
i.e. the PWFs for elliptic solitons and usual solitons.
Considering the exponential parts of them, asymptotically,
it follows from \eqref{pzs-expan} that
\[e^{-2\zeta(k)x+\wp'(k)t+\xi^{(0)}} \sim ~ e^{-\frac{2}{k}x-\frac{2}{k^3} t},\]
which  corresponds to the dispersion relation in \eqref{PWF-kdv}.
This observation motivates us to introduce a general Lam\'e-type PWF (the extended Lam\'e function)
\begin{equation}
\rho= \Phi_x(k) \exp\Bigl(-\zeta(k) t_1 + \zeta'(k) t_2+\cdots + \frac{(-1)^j}{(j-1)!}\zeta^{(j-1)}(k) t_j+\cdots\Bigr ),
\end{equation}
which is an elliptic analogue of the usual one
\begin{equation}
\rho= \exp\bigl(k t_1 + k^2 t_2 +\cdots + k^j t_j+ \cdots\bigr ),
\end{equation}
where $t_1=x$ and
\begin{equation}\label{Phi}
 \Phi_x(k)=\frac{\sigma(k+x)}{\sigma(x)\sigma(k)}.
\end{equation}
Note that the doubly-periodic feature of the PWF \eqref{PWF-ell-cc} can also be illustrated in its alternative form
\begin{align}\label{rho-ell}
\rho   &=\exp\left(  \xi^{(0)}+2\wp'(k)t-\sum_{n=1}^{\infty}\frac{2}{(2n+1)!}\wp^{(2n-1)}(k)x^{2n+1}\right ).
\end{align}

For the KdV equation \eqref{kdv}, its elliptic 1-soliton solution can be written as (cf. Eq.\ref{1ss-f1})
\begin{equation}
u=-2\wp(x)+2(\ln (1+\Phi_x(2k) e^{-2\zeta(k)x+\wp'(k)t+\xi^{(0)}}))_{xx},
\label{ell-1ss-kdv}
\end{equation}
where the $-2\wp(x)$ is a 1-gap and 1-genus solution in light of the so-called Dubrovin's equations
in finite-gap integration  \cite{Dub-FAA-1975,DN-JETP-1974} (also see \cite{Ince-1940} by Ince),
but the whole solution \eqref{ell-1ss-kdv} is a doubly periodic function of $k$
(not periodic  with respect to $x$).

Noting that $(\wp(k), \wp'(k))$, $k\in \mathbb{D}$ (see Fig.\ref{Fig-1}) are points on the elliptic curve \eqref{ell-cur},
along the line of \cite{AMM-CPAM-1977},
we can say that the elliptic soliton solution corresponds to the torus \eqref{ell-cur},
while its degenerations by fixing $g_2=\frac{4}{3}(\frac{\pi}{2 w_1})^4, g_3=\frac{8}{27}(\frac{\pi}{2 w_1})^6$
and $g_2=g_3=0$ (i.e. degenerations by periods)
correspond to a cylinder and Riemann sphere, respectively, cf. \cite{AMM-CPAM-1977}.

\section{$\tau$ function, vertex operator and bilinear identity: KdV}\label{sec-3}

We will extend the obtained elliptic 1-soliton solution of the KdV equation to its  elliptic $N$-soliton solution
and then establish a bilinear framework for such type of solutions.
The framework will consist of $\tau$ function in Hirota's form, a vertex operator for generating the $\tau$ function,
a bilinear identity and an algorithm for calculating residues  that gives rise to bilinear equations.

\subsection{$\tau$ function of elliptic $N$-soliton solutions}\label{sec-3-1}

\subsubsection{Bilinear form and elliptic 1- and  2-soliton solutions}\label{sec-3-1-1}

We begin by exploring Hirota's procedure to calculate elliptic 1- and 2-soliton solutions
for a bilinear KdV equation.
The potential KdV equation \eqref{pkdv-ell-1} can be converted into a bilinear form
\begin{equation}\label{bilinear-a}
(D_x^4-4D_xD_t-12\wp(x)D_x^2)\tau\cdot\tau=0
\end{equation}
via the transformation
\begin{equation}\label{eq:v}
\bar v =2\zeta(x)+\frac{1}{4}g_2t+2(\ln \tau)_x,
\end{equation}
where $D$ is Hirota's bilinear operator defined by\cite{74Hirota}
\[D_t^mD_x^n f\cdot g=(\partial_t-\partial_{t'})^m(\partial_x-\partial_{x'})^nf(t,x)g(t',x')|_{t'=t,x'=x},~~ m,n=0,1,2\cdots.\]
Equation \eqref{bilinear-a} is also a bilinear form of the KdV equation \eqref{kdv} while the transformation is
\begin{equation}\label{eq:u}
u= -2\wp(x)+2(\ln \tau)_{xx}.
\end{equation}
Both \eqref{eq:v} and \eqref{eq:u} have nonzero backgrounds.
An alternative bilinear form for the KdV equation is
\begin{equation}\label{bilinear-aa}
	(D_x^4-4D_xD_t-g_2)\tau' \cdot \tau'=0,
\end{equation}
while the associated transformations are
\begin{equation}
	u= 2(\ln \tau')_{xx}, \quad \bar v =\frac{1}{4}g_2t+2(\ln \tau')_x.
\end{equation}

By direct calculation (see Appendix \ref{A-2}), one can find that  Eq.\eqref{bilinear-a} admits the following solutions,
\begin{equation}\label{1ss-f1}
\tau=	f_1=1+\rho_1(x,t) 
=1+\Phi_x(2k_1)e^{\xi_1},
\end{equation}
	and
\begin{align}
\tau=	f_2 &=1+\rho_1(x,t)+\rho_2(x,t)+f^{(2)}(x,t) \nonumber \\
	&=1+\Phi_x(2k_1) e^{\xi_1}
         +\Phi_x(2k_2) e^{\xi_2}+A_{12} \frac{\sigma(x+2k_1+2k_2)}{\sigma(x)\sigma(2k_1)\sigma(2k_2)}e^{\xi_1+\xi_2},
                \label{2ss}
\end{align}
where
	\begin{equation}\label{PWF-2ss}
    \rho_i(x,t)=\Phi_x(2k_i) e^{\xi_i},~~
	\xi_i=-2\zeta(k_i)x+\wp'(k_i)t+\xi^{(0)}_i,
	~~ A_{12}=\frac{\sigma^2(k_1-k_2)}{\sigma^2(k_1+k_2)},
	\end{equation}
$k_i, \xi^{(0)}_i \in \mathbb{C}$.
These are formally similar to the usual 1-soliton and 2-soliton solutions of the KdV equation
but there is an essential difference in 2-soliton case:
the last term $f^{(2)}$ in $f_2$ is $A_{12}e^{4\zeta(k_1)k_2}\rho_1(x+2k_2,t)\rho_2(x,t)$,
rather than $A_{12}\rho_1(x,t)\rho_2(x,t)$ as in a usual two-soliton solution.
In Appendix \ref{A-2} we provide details of deriving $f_1$ and $f_2$,
as well as some formulae for higher order bilinear derivatives
and properties (e.g. the quasi-gauge property, see Proposition \ref{P-B1})
involved with the Lam\'e-type PWF $\rho_i$.
We also remark that
it is well known  a KdV-type bilinear equation (with constant coefficients)
always admits 1-soliton solution and 2-soliton solution \cite{Hir-1980},
however, such a convention does not hold even for admitting elliptic 1-soliton solution.

A $\tau$ function in Hirota's form for elliptic $N$-soliton solution is needed to introduce vertex operator.
However, for higher order elliptic soliton solutions, the calculation is much more complicated.
Next, we will first present a $N$-soliton solution in terms of Wronskian,
from which we can secure the $\tau$ function in Hirota's form.

\subsubsection{$\tau$ function in Wronskian form}\label{sec-3-1-2}

Introduce a $N$-th order column vector
\begin{equation}\label{vphi}
\varphi=(\varphi_1,\varphi_2,\cdots,\varphi_N)^T,
\end{equation}
where $\varphi_j=\varphi_j(x,t)$ are functions of $(x,t)$.
A $N$-th order Wronskian is defined as
\[f= |\varphi, \partial_x \varphi, \partial_x^2 \varphi,    \cdots,  \partial_x^{N-1}\varphi|
=|0,1,2,\cdots,N-1|=|\widehat{N-1}|,
\]
where we  employ the conventional shorthand introduced in \cite{FreN-1983}.
For an elliptic $N$-soliton solution of the KdV equation, we have the following.

\begin{theorem}\label{T-1}
The bilinear equation \eqref{bilinear-a} admits a Wronskian solution
\begin{equation}\label{tau-W}
\tau=|\widehat{N-1}|
\end{equation}
composed by vector $\varphi=(\varphi_1,\varphi_2,\cdots,\varphi_N)^T$ where each element  $\varphi_j$ satisfies
\begin{subequations}\label{Wrons-cond}
\begin{align}
&\varphi_{j,xx}=(\wp(k_j)+2\wp(x))\varphi_j, \label{phi-jx}\\
& \varphi_{j,t}=\mf{\varphi}_{j,xxx}-3\wp(x)\varphi_{j,x}-\frac{3}{2}\wp'(x)\varphi_j, \label{phi-jt}
\end{align}
\end{subequations}
for $j=1,2,\cdots,N$ and $k_j\in \mathbb{C}$.
A general solution to the above equations is
\begin{subequations}\label{phi-j}
\begin{equation}\label{phi-j-Wr}
 \varphi_{j}=a_j^+\varphi_j^++ a_j^-\varphi_j^-,
\end{equation}
where $\varphi_j^{\pm}$ are Lam\'{e}   functions
\begin{equation}\label{phi-j-gamma}
\varphi_j^{\pm}=\Phi_{x}(\pm k_j) e^{\mp \gamma_j},~~ \gamma_j=\zeta(k_j)x-\frac{1}{2}\wp'(k_j)t+\gamma_j^{(0)},
\end{equation}
\end{subequations}
where $a_j^{\pm}, k_j, \gamma_j^{(0)} \in \mathbb{C}$,  $\Phi_x(k)$ is defined in \eqref{Phi},
and in practice, $k_j$ takes value in the fundamental period parallelogram $\mathbb{D}$ of the
Weierstrass $\wp$ function (see Fig.\ref{Fig-1}).
\end{theorem}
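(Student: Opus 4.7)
The plan has two logical halves mirroring the statement of Theorem \ref{T-1}. First I would show that whenever the entries $\varphi_j$ satisfy the linear system \eqref{Wrons-cond}, the Wronskian $\tau = |\widehat{N-1}|$ solves the bilinear KdV equation \eqref{bilinear-a}. Second I would verify by direct substitution that the Lam\'e functions \eqref{phi-j} solve \eqref{Wrons-cond}, and then argue that arbitrary linear combinations of $\varphi_j^+$ and $\varphi_j^-$ exhaust the solution space.

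For the Wronskian half I would follow the Freeman--Nimmo pattern. Using \eqref{phi-jx}, every second $x$-derivative of a column $\varphi_j$ splits as $\wp(k_j)\varphi_j + 2\wp(x)\varphi_j$; the $j$-dependent piece $\wp(k_j)\varphi_j$ is a column scalar that drops out of determinantal identities in the usual way, while the $j$-independent piece $2\wp(x)\varphi_j$ is what ultimately generates the $-12\wp(x)D_x^2$ contribution in \eqref{bilinear-a}. I would express $\tau_x, \tau_{xx}, \tau_{xxx}, \tau_{xxxx}$ and $\tau_t, \tau_{xt}$ in terms of the standard shifted Wronskians $|\widehat{N-2},N|$, $|\widehat{N-2},N+1|$, $|\widehat{N-3},N-1,N|$, etc., carefully tracking the extra $\wp(x)$- and $\wp'(x)$-terms coming from the potential in \eqref{phi-jx} and from the lower order pieces of the time evolution \eqref{phi-jt}. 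Substituting into $(D_x^4 - 4D_xD_t - 12\wp(x)D_x^2)\tau\cdot\tau$ and applying Pl\"ucker's (Jacobi) identity to a suitable auxiliary determinant then reduces the bilinear equation to an algebraic identity among minors, in the spirit of the classical KdV Wronskian proof.

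For the second half, I would substitute $\varphi_j^{\pm} = \Phi_x(\pm k_j)e^{\mp \gamma_j}$ into \eqref{phi-jx} and \eqref{phi-jt} and reduce each equation to a pure algebraic identity among Weierstrass functions using the addition formulae and differential identities recorded in Appendix \ref{A-1}; the classical fact that $\Phi_x(k)e^{-\zeta(k)x}$ satisfies $y'' = (\wp(k)+2\wp(x))y$ is what makes the first equation work, and the $t$-dependence enters only through $\gamma_j$ with coefficient $\tfrac{1}{2}\wp'(k_j)$, so \eqref{phi-jt} reduces to a further $x$-identity expressible via the same set of Weierstrass identities. Linear independence of $\varphi_j^+$ and $\varphi_j^-$ (visible from their distinct behaviour as $x\to 0$), together with the linearity of \eqref{Wrons-cond}, then yields \eqref{phi-j-Wr} as the general solution. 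The main obstacle I expect is in the first half: tracking how the $x$-dependent potential $2\wp(x)$ threads through successive $x$-derivatives of the Wronskian and then recombines, via Pl\"ucker's relation, into exactly $-12\wp(x)D_x^2\tau\cdot\tau$ plus a piece that fits with $-4D_xD_t\tau\cdot\tau$, rather than leaving a residual $\wp'(x)$-term. The standard Freeman--Nimmo argument handles $j$-dependent constant eigenvalues cleanly, but here the $x$-dependent piece of the potential is what forces the appearance of the $\wp(x)$ coefficient in \eqref{bilinear-a} and requires the most delicate accounting.
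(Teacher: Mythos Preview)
Your proposal is correct and follows essentially the same route as the paper's proof in Appendix~\ref{A-3}: compute $\tau_x,\tau_{xx},\tau_{xxx},\tau_{xxxx},\tau_t,\tau_{xt}$ in shifted-Wronskian form, substitute into \eqref{bilinear-a}, and reduce via a Pl\"ucker relation (the paper's Proposition~\ref{P-C1}). The one step worth flagging is that the $j$-dependent eigenvalues $\wp(k_j)$ do not simply ``drop out''; the paper handles them through a trace-type identity (Proposition~\ref{P-C2}, applied with $\Omega_{j,s}=\wp(k_j)$ to the relation $\bigl(\sum_j\wp(k_j)\tau\bigr)^2=\tau\sum_j\wp(k_j)\bigl(\sum_j\wp(k_j)\tau\bigr)$), and it is precisely this identity that simultaneously absorbs the residual $\wp'(x)$- and $\wp''(x)$-terms you are worried about and produces the missing combination of shifted Wronskians needed before the final Pl\"ucker reduction.
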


The proof will be sketched in Appendix \ref{A-3}.
Note that such a solution in Wronskian form for the KdV equation can be alternatively obtained
using the Darboux transformation by taking $u=-2\wp(x)$ as a seed solution 
and assigning a proper dispersion relation (see \cite{MatS-1991}),
but we do need to have a $\tau$ function that serves for elliptic $N$-soliton solutions
and  satisfies a definite bilinear KdV equation.\footnote{Due to the quasi-gauge property (see Proposition \ref{P-B1})
of bilinear derivatives with respect to the Lam\'e-type PFWs,
it is necessary have some $\tau$ function to satisfy a definite bilinear equation.}

\subsubsection{$\tau$ function in Hirota's form}\label{sec-3-1-3}

To convert Wronskian \eqref{tau-W} into Hirota's form, we first investigate the Wronskian composed by
$\varphi^-=(\varphi_1^-,\varphi_2^-,\cdots, \varphi_N^-)^T$ and its derivatives,
where $\{\varphi_j^-\}$ are defined as in \eqref{phi-j-gamma}.
Such a Wronskian can be written as an explicit form.

\begin{lemma}\label{L-1}
For the forementioned $\varphi^-$, we have
\begin{align}
   &|\varphi^-,~ \partial_x \varphi^- ,~\partial_x^2 \varphi^-,~ \cdots  ,~\partial_x^{N-1}\varphi^- |\nonumber \\
=\,&(-1)^N\frac{\sigma(x-\sum_{i=1}^Nk_i)}{\sigma(x)}\cdot
\frac{\prod_{1\leq i<j\leq N}\sigma(k_i-k_j)}{\sigma^{N}(k_1)\cdots\sigma^{N}(k_N)}
\exp\left( \sum_{i=1}^N \gamma_j\right).\label{W-phi-}
\end{align}
\end{lemma}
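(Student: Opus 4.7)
The plan is to prove \eqref{W-phi-} by an analytic comparison of both sides as meromorphic functions of $x$, applying Liouville's theorem to their ratio and then fixing the remaining constant by matching leading Laurent coefficients at $x=0$. As a preliminary reduction, I split the exponent $\gamma_j=\zeta(k_j)x+d_j$ with $d_j:=-\tfrac12\wp'(k_j)t+\gamma_j^{(0)}$ independent of $x$. Writing $\varphi_j^-=\tilde\psi_j(x)\,e^{d_j}$ with $\tilde\psi_j(x):=\Phi_x(-k_j)\,e^{\zeta(k_j)x}$, the factors $e^{d_j}$ pull out of the $j$-th column, producing the prefactor $\exp(\sum_j d_j)$; combined with the exponential $\exp(x\sum_j\zeta(k_j))$ that will emerge naturally on the right, this reconstitutes $\exp(\sum_j\gamma_j)$. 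The problem is thus reduced to showing $W_0(x):=|\tilde\psi,\partial_x\tilde\psi,\dots,\partial_x^{N-1}\tilde\psi|$ equals $(-1)^N\frac{\sigma(x-K)}{\sigma(x)}\frac{\prod_{i<j}\sigma(k_i-k_j)}{\prod_i\sigma^N(k_i)}\,e^{x\sum_j\zeta(k_j)}$ with $K:=\sum_j k_j$.

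For the Liouville step I first check the quasi-periods. Using $\sigma(x+2\omega_\ell)=-\sigma(x)\,e^{2\eta_\ell(x+\omega_\ell)}$ one finds $\tilde\psi_j(x+2\omega_\ell)=\tilde\psi_j(x)\exp\bigl(2\omega_\ell\zeta(k_j)-2\eta_\ell k_j\bigr)$, so $W_0$ transforms by the multiplier $\exp\bigl(2\omega_\ell\sum_j\zeta(k_j)-2\eta_\ell K\bigr)$, which is exactly the multiplier of $\sigma(x-K)/\sigma(x)\cdot e^{x\sum_j\zeta(k_j)}$; the ratio of the two sides is therefore doubly periodic in $x$. Holomorphicity is the key technical point. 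From \eqref{phi-jx} and $\wp(x)=x^{-2}+\tfrac{g_2}{20}x^2+O(x^4)$, a Laurent-series ansatz yields $\tilde\psi_j(x)=x^{-1}-\tfrac12\wp(k_j)\,x+\tfrac16\wp'(k_j)\,x^2+O(x^3)$, \emph{without} an $x^0$ term, so the leading singular parts $\partial_x^i\tilde\psi_j\sim(-1)^i i!\,x^{-i-1}$ are independent of $j$. A Vandermonde-type cascade of column subtractions (column 1 from $2,\dots,N$; then column 2 from $3,\dots,N$; etc.) annihilates these $j$-independent redundancies and collapses the apparent pole order $N(N+1)/2$ down to the simple pole of $\sigma(x)^{-1}$ on the right. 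The ratio is consequently a doubly periodic entire function of $x$, hence a constant depending only on $k_1,\dots,k_N$.

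To fix this constant I would compare coefficients of $x^{-1}$ at $x=0$: the right-hand side contributes $(-1)^{N+1}\sigma(K)\prod_{i<j}\sigma(k_i-k_j)/\prod_i\sigma^N(k_i)$, and the column-subtraction cascade on the left produces a Vandermonde-like product in the $\wp(k_j)$ arising from the $k_j$-dependent Taylor coefficients of $\tilde\psi_j$. Converting differences via $\wp(k_j)-\wp(k_i)=-\sigma(k_i+k_j)\sigma(k_i-k_j)/[\sigma(k_i)\sigma(k_j)]^2$ then yields the required $\sigma(k_i-k_j)$ factors, with the residual $\sigma(k_i+k_j)$ factors collapsing against $\sigma(K)$ through iterated $\sigma$-addition. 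The principal obstacle is precisely this last book-keeping of $\sigma$-products and signs; in practice the cleanest route is to run an induction on $N$ in parallel, expanding the Wronskian along the last column, peeling off the $k_N$-contribution by a standard $\sigma$-addition identity, and invoking the $N=1$ base case $\varphi_1^-=-\sigma(x-k_1)/[\sigma(x)\sigma(k_1)]\cdot e^{\gamma_1}$, which is immediate from the definition of $\Phi_x(-k_1)$.
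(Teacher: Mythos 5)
Your overall strategy (quasi-periodicity in $x$ plus Liouville, then matching a single Laurent coefficient) is a genuinely different route from the paper, which instead uses the ODE \eqref{phi-jx} to perform explicit column reductions down to the Frobenius--Stickelberger determinant \eqref{eq:FS-1} with the help of \eqref{eq:ND16}. Your periodicity check is correct, your expansion $\tilde\psi_j=x^{-1}-\tfrac12\wp(k_j)x+\tfrac16\wp'(k_j)x^2+O(x^3)$ is correct, and the Liouville step would indeed close once one knows that the Wronskian has at most a simple pole at $x\equiv 0$ (the potential simple pole of the ratio at $x\equiv\sum_i k_i$ is then killed by the residue theorem for elliptic functions). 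The problem is that the two steps carrying all the content are asserted rather than proved, and the mechanism you propose for the first one does not work as stated.

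Concretely: (i) the ``Vandermonde-type cascade of column subtractions'' cannot annihilate the singular parts, because the columns $\partial_x^i\tilde\psi$ have poles of \emph{different} orders $i+1$ at $x=0$, so subtracting one column from another changes nothing singular. What does work is subtracting row $1$ from rows $2,\dots,N$ (the residues of all $\tilde\psi_j$ at $x=0$ are equal to $1$), but that only confines the poles to a single row and bounds the pole order by $N$, not by $1$; equivalently, writing $W_0=\sigma(x)^{-N}\,W(\chi_1,\dots,\chi_N)$ with $\chi_j=\sigma(x)\tilde\psi_j$ entire, you still must show $W(\chi_1,\dots,\chi_N)$ vanishes to order $N-1$ at $x=0$, and this requires the full structure of the Laurent coefficients coming from the recursion of the Lam\'e equation \eqref{phi-jx} --- i.e.\ essentially the same determinant manipulations the paper performs. (ii) The normalisation constant is the residue of $W_0$ at $x=0$, which after the dust settles is exactly a determinant of the form $|\mathbf{1},\wp(\mathbf{k}),\wp'(\mathbf{k}),\dots|$ evaluated via \eqref{eq:FS-1} together with the conversion of $\wp(k_i)-\wp(k_j)$ into $\sigma$-quotients via \eqref{eq:add-1}; you explicitly defer this (``the principal obstacle'') to an induction that is sketched in one sentence and not carried out. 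So as written the proposal is a plausible plan with the two decisive computations missing, not a proof.
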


\begin{proof}
For convenience we introduce notations $\mathbf{k}=(k_1, k_2, \cdots, k_N)^T$,
$f(\mathbf{k})=(f(k_1), f(k_2),$ $ \cdots, f(k_N))^T$,
$f(\mathbf{k}) g(\mathbf{k})=(f(k_1)g(k_1), f(k_2)g(k_2), \cdots, f(k_N)g(k_N))^T$,
and we consider the Wronskian
\begin{equation}\label{f-}
f^-=|\Phi_{x}(-\mathbf{k}) e^{\zeta(\mathbf{k})x},~ \partial_x(\Phi_{x}(-\mathbf{k}) e^{\zeta(\mathbf{k})x} ) ,~
\cdots ,~\partial_x^{N-1}(\Phi_{x}(-\mathbf{k}) e^{\zeta(\mathbf{k})x} ) |,
\end{equation}
where for conciseness we have dropped off $\wp'(k_j)t$ and $\gamma_j^{(0)}$ in $\gamma_j$
since the structure of the Wronskian is irrelevant to time.
For each $\varphi_j^-$ we have
\[ \partial_x \varphi_j^- =\eta_x(-k) \varphi_j^- ,\]
where  $\eta_x(k)$ is defined as \eqref{eta}.
In addition,  $\varphi_j^-$ is a Lam\'e   function, satisfying \eqref{phi-jx},
which indicates that
\[ \partial_x^{n}\phi^-_j=(\wp(k_j)+2\wp(x))\,\partial_x^{n-2}\phi^-_j
+2\sum_{i=1}^{n-2}\left(\begin{array}{cc}
                                      n-2\\  i
                                      \end{array}\right)
                                      (\partial_x^{i}\wp(x))\partial_x^{n-2-i}\phi^-, ~~(n\geq 2).	
\]
Using the above relations we can replace the column 
$\partial_x^j(\Phi_{x}(-\mathbf{k}) e^{\zeta(\mathbf{k})x})$ in \eqref{f-},
and after simplification it turns out that
\begin{align}
f^-=~&\left(\exp\sum_{i=1}^N\zeta(k_i)x \right)\left(\prod_{j=1}^N\Phi_x(-k_j)\right)  \nonumber\\
&\times |1 ,  ~\eta_{x}(-\mathbf{k}), ~ \wp(\mathbf{k}), ~ \wp(\mathbf{k})\eta_{x}(-\mathbf{k}),~
  \wp^2(\mathbf{k}),~  \wp^2(\mathbf{k})\eta_x(-\mathbf{k}),~\cdots ,~
  \wp^{\left[\frac{N-1}{2}\right]}(\mathbf{k})h_1(\mathbf{k},x) |, \label{f-1}
\end{align}
where in the last column $h_1(\mathbf{k},x)$ stands for
\[h_1(\mathbf{k},x)=\left\{\begin{array}{ll} 1,& N ~\mathrm{odd},\\ \eta_{x}(-\mathbf{k}), & N ~ \mathrm{even},
                               \end{array}\right.
\]
and $[x]$ is the floor function of $x$.

Next,  for the column $\wp^n(\mathbf{k})\eta_x(-\mathbf{k})$ in \eqref{f-1}, in light of the relation \eqref{eq:add-2},
we have (for $n\geq 1$)
\[\wp^n(\mathbf{k})\eta_x(-\mathbf{k})=-\frac{1}{2}\wp^{n-1}(\mathbf{k}) \wp'(\mathbf{k})
-\frac{1}{2}\wp^{n-1}(\mathbf{k}) \wp'(x)+\wp^{n-1}(\mathbf{k})\eta_x(-\mathbf{k})\wp(x),
\]
where the last two terms on the right hand side will be eliminated by those front columns in \eqref{f-1}.
We can examine all such columns in \eqref{f-1} successively from right to left.
As a result, we are able to have $f^-$ in the form
\begin{align*}
f^-=&\left(-\frac{1}{2}\right)^{\left[\frac{N}{2}\right]-1}
\left(\exp\sum_{i=1}^N\zeta(k_i)x \right)\left(\prod_{j=1}^N\Phi_x(-k_j)\right)\\
&\times |1 ,  \eta_{x}(-\mathbf{k}),  \wp(\mathbf{k}),  \wp'(\mathbf{k}),
  \wp^2(\mathbf{k}),  \wp(\mathbf{k})\wp'(\mathbf{k}),
  \wp^3(\mathbf{k}),  \wp^2(\mathbf{k})\wp'(\mathbf{k}),~\cdots ,
  \wp^{\left[\frac{N-1}{2}\right]-1}(\mathbf{k})h_2(\mathbf{k}) |, \label{f-2}
\end{align*}
where in the last column $h_2(\mathbf{k})$ is
\[h_2(\mathbf{k})=\left\{\begin{array}{ll} \wp(\mathbf{k}),& N ~\mathrm{odd},\\
                                                               \wp'(\mathbf{k}), & N ~ \mathrm{even}.
                               \end{array}\right.
\]
By virtue of the fact that $(\wp(x), \wp'(x))$ is a point on the elliptic curve \eqref{ell-cur}, i.e.
\[(\wp'(k))^2=4 \wp^3(k)-g_2\wp(k)-g_3,\]
we know that both $\wp^{(2n-2)}(x)$ and $\frac{\wp^{(2n+1)}(x)}{\wp'(x)}$
can be expressed as a linear combination of $\{\wp^s(x)\}$ with $s=n, n-2, n-3, \cdots, 2,1,0$.
Then,
we are led to
\begin{align*}
f^-=~&
\left(\exp\sum_{i=1}^N\zeta(k_i)x \right)\left(\prod_{j=1}^N\Phi_x(-k_j)\right)
\frac{1}{1! 2! \cdots (N-2)!}\\
&\times |1 ,  ~\eta_{x}(-\mathbf{k}), ~ \wp(-\mathbf{k}), ~ \wp'(-\mathbf{k}),~
  \wp''(-\mathbf{k}),~  \wp'''(-\mathbf{k}),~ \cdots ,~
  \wp^{(N-3)}(-\mathbf{k}) |, 
\end{align*}
which is further written into
\begin{align}
f^-=~&
\left(\exp\sum_{i=1}^N\zeta(k_i)x \right) \frac{(-1)^{N-1}\Phi_x(-k_1-\cdots - k_N)}{1! 2! \cdots (N-2)!(N-1)!}
\nonumber\\
&\times |1 ,   ~ \wp(-\mathbf{k}), ~ \wp'(-\mathbf{k}),~
  \wp''(-\mathbf{k}),~  \wp'''(-\mathbf{k}),~ \cdots ,~
  \wp^{(N-2)}(-\mathbf{k}) |, 
\end{align}
where use has been made of relation \eqref{eq:ND16}.
Then, employing the elliptic van der Monde determinant formula \eqref{eq:FS-1}, we have
\begin{equation}
f^-=(-1)^N\frac{\sigma(x-\sum_{i=1}^Nk_i)}{\sigma(x)}
\frac{\prod_{1\leq i<j\leq N}\sigma(k_i-k_j)}{\sigma^{N}(k_1)\cdots\sigma^{N}(k_N)}
\exp\left(\sum_{i=1}^N\zeta(k_i)x\right),
\end{equation}
which yields \eqref{W-phi-}.

\end{proof}

Next, in order to obtain the $\tau$-function in Hirota's form, we consider the Wronskian \eqref{tau-W}
composed specially by an elementary column vector (cf.\eqref{phi-j-Wr})
\begin{equation}\label{phi-j-hirota}
 \varphi_{j}= \varphi_j^+ + (-1)^j \varphi_j^-,
\end{equation}
where $\varphi_j^{\pm}$ are defined by \eqref{phi-j-gamma}.
The corresponding Wronskian
\begin{equation}\label{tau-W2}
\tau=|\widehat{N-1}|
\end{equation}
can be split and then written as a sum of $2^N$ distinct
Wronskians, each of which is generated by the elementary column vector of the following form,
\begin{equation}\label{phi-eps}
\varphi=(\phi_1, \phi_2, \cdots, \phi_N)^T,~~
\phi_j =(\varepsilon_j)^j\Phi_x(\varepsilon_j k_j)e^{-\varepsilon_j \gamma_j}
\end{equation}
where $\{\varepsilon_1, \varepsilon_2, \cdots, \varepsilon_N\}$ run over $\{1,-1\}$.
In light of Lemma \ref{L-1}, the Wronskian generated by the above $\varphi$ is
\begin{align}
\tau_{\boldsymbol{\varepsilon}}
&= (-1)^\frac{N(N-1)}{2}\prod^{N}_{j=1}(\varepsilon_j)^j\cdot
\frac{\sigma(x+\sum_{i=1}^N \varepsilon_i k_i)}{\sigma(x)}\cdot
\frac{\prod_{1\leq i<j\leq N}\sigma(\varepsilon_i k_i-\varepsilon_j k_j)}
{\sigma^{N}(\varepsilon_1 k_1)\cdots\sigma^{N}(\varepsilon_N k_N)}
\exp \left(-\sum_{i=1}^N\varepsilon_i\gamma_i\right)\nonumber\\
&= (-1)^\frac{N(N-1)}{2}\cdot
\frac{\sigma(x+\sum_{i=1}^N \varepsilon_i k_i)}{\sigma(x)}\cdot
\frac{\prod_{1\leq i<j\leq N}\varepsilon_i\sigma(\varepsilon_i k_i-\varepsilon_j k_j)}
{\sigma^{N}(k_1)\cdots\sigma^{N}(k_N)}
\exp \left(-\sum_{i=1}^N\varepsilon_i\gamma_i\right),\label{W-phi-eps}
\end{align}
where $\boldsymbol{\varepsilon}$ indicates cluster
$\boldsymbol{\varepsilon}=\{\varepsilon_1, \varepsilon_2, \cdots, \varepsilon_N\}$.
Introduce length of $\boldsymbol{\varepsilon}$ by $|\boldsymbol{\varepsilon}|$  to denote the number of
positive $\varepsilon_j$'s in the cluster $\boldsymbol{\varepsilon}$.
Rearrange the $2^N$ terms in the $\tau$ function \eqref{tau-W2}
in terms of  $|\boldsymbol{\varepsilon}|$ such that
\begin{equation}\label{tau-epl}
\tau=\sum^{N}_{l=0} \sum_{|\boldsymbol{\varepsilon}|=l}\tau_{\boldsymbol{\varepsilon}}=\sum^{N}_{l=0}\tau^{(l)},
\end{equation}
where $\tau^{(l)}=\sum_{|\boldsymbol{\varepsilon}|=l}\tau_{\boldsymbol{\varepsilon}}$,
and in particular, by $g$ we denote $\tau^{(0)}$, i.e.
\begin{align}
g=\tau^{(0)}
=(-1)^{\frac{N(N-1)}{2} }
	\frac{\sigma(x-\sum_{i=1}^Nk_i)}{\sigma(x)}\cdot
\frac{\prod_{1\leq i<j\leq N}\sigma(k_i-k_j)}{\sigma^{N}(k_1)\cdots\sigma^{N}(k_N)}
\exp\left(\sum_{i=1}^N \gamma_i\right) . \label{g}
\end{align}

Here, for convenience of this  subsection, for a function $f=f(x)$,
by $\wt f$ we specially denote the $f$ shifted in $x$ by $\sum_{i=1}^Nk_i$, i.e.
$\wt f=f(x+\sum_{i=1}^Nk_i)$.
Then we have the following.

\begin{theorem}\label{T-2} 
Let
\begin{equation}\label{f-tg}
f=\frac{\widetilde \tau}{\wt g},
\end{equation}
where  $\tau$ and $g$ are given by \eqref{tau-W2} and \eqref{g}.
Then we have
\begin{equation}\label{bilinear-f}
 (D_x^4-4D_xD_t-12\wp(x)D_x^2)f\cdot f=0.
\end{equation}
$f$ is the $\tau$ function in Hirota's form, written as
\begin{equation}\label{f-Hirota}
f=\sum_{\mu=0,1} \frac{\sigma(x+2\sum_{i=1}^N \mu_i k_i)}{\sigma(x)\prod^N_{j=1}\sigma^{\mu_j}(2k_j)}
\mathrm{exp}\left(\sum^{N}_{j=1} \mu_j \theta_j+\sum^N_{1\leq i<j}\mu_i\mu_j a_{ij}\right),
\end{equation}
i.e.
\begin{equation*}
\begin{split}
 f=1&+ \sum_{i=1}^N\Phi_x(2k_i) e^{\theta_i}
	 +\mathop{\rm{\sum}}_{1\leq l<p\leq N} \frac{\sigma(x+2k_l+2k_p)}{\sigma(x)\sigma(2k_l)\sigma(2k_p)}
A_{l,p}e^{\theta_l+\theta_p}\\
	&+\cdots
	 +\frac{\sigma(x+2\sum_{i=1}^Nk_i)}{\sigma(x)\prod^N_{j=1}\sigma(2k_j)}
\left(\prod_{1\leq i<j\leq N}A_{i,j}\right)\prod_{i=1}^{N}e^{\theta_i},
    \end{split}
\end{equation*}
where the summation of $\mu$ means to take all possible $\mu_i=\{0,1\}$  for $ i=1,2,\cdots, N$,
\begin{subequations}\label{theta-A}
\begin{align}
& \theta_i=-2\zeta(k_i)x + \wp'(k_{i})t +\theta_{i}^{(0)}, ~~ \theta_{i}^{(0)}\in \mathbb{C},\label{theta}\\
& e^{a_{ij}}=A_{ij}=\left(\frac{\sigma(k_i-k_j)}{\sigma(k_i+k_j)}\right)^2.\label{A-ij}
\end{align}
\end{subequations}
\end{theorem}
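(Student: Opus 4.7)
The plan is to prove \eqref{bilinear-f} and the Hirota-form identity \eqref{f-Hirota} in parallel by bootstrapping from Theorem \ref{T-1} and Lemma \ref{L-1}. First I would use multilinearity of the determinant to split the Wronskian $\tau = |\widehat{N-1}|$ with columns $\varphi_j = \varphi_j^+ + (-1)^j \varphi_j^-$ as $\tau = \sum_{\boldsymbol{\varepsilon}} \tau_{\boldsymbol{\varepsilon}}$, where each $\tau_{\boldsymbol{\varepsilon}}$ is a Wronskian of the elementary column \eqref{phi-eps} evaluated in closed form by Lemma \ref{L-1}, giving \eqref{W-phi-eps}. In particular $g = \tau^{(0)}$ is itself such a Wronskian (the all-minus specialization), so $g$ and its $x$-shift $\wt g$ are Wronskian solutions in their own right, satisfying the same form of bilinear equation as $\tau$ but with $\wp(x)$ replaced by $\wp(x+c)$, where $c := \sum_j k_j$.

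For the Hirota form \eqref{f-Hirota}, I would compute $\wt\tau_{\boldsymbol{\varepsilon}}/\wt g$ term-by-term under the reparametrization $\mu_j = (1+\varepsilon_j)/2 \in \{0,1\}$. The shifted sigma-ratio $\sigma(x+c+\sum\varepsilon_j k_j)/\sigma(x+c)$ equals $\sigma(x+2\sum\mu_j k_j)/\sigma(x+c)$, and division by $\wt g$ (which carries a $\sigma(x)/\sigma(x+c)$ factor) cancels the $\sigma(x+c)$ to give the clean $\sigma(x+2\sum\mu_j k_j)/\sigma(x)$ demanded by \eqref{f-Hirota}. The exponential collapses via $-\sum(1+\varepsilon_j)\gamma_j = -2\sum\mu_j \gamma_j$, which after absorbing $k$-dependent constants into $\theta_j^{(0)}$ is $\sum \mu_j \theta_j$ in the sense of \eqref{theta}. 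The $k$-dependent sigma-product $\prod_{i<j}\varepsilon_i \sigma(\varepsilon_i k_i - \varepsilon_j k_j)/\prod_{i<j}\sigma(k_i-k_j)$ reduces, by a four-case analysis of $(\mu_i,\mu_j)\in\{0,1\}^2$ together with $\sigma(-z)=-\sigma(z)$, to $\prod_{\mu_i\ne\mu_j}\sigma(k_i+k_j)/\sigma(k_i-k_j)$; combining this with the $\sigma(2k_j)^{-\mu_j}$ normalization reproduces $\prod_{i<j} A_{ij}^{\mu_i\mu_j}/\prod_j \sigma^{\mu_j}(2k_j)$ in the notation of \eqref{A-ij}, matching \eqref{f-Hirota}.

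To verify the bilinear equation \eqref{bilinear-f}, I would write $\wt\tau = \wt g f$ and apply the standard splitting identity $D_x^n(gf)\cdot(gf) = \sum_j \binom{n}{j}(D_x^j g\cdot g)(D_x^{n-j}f\cdot f)$ (where odd-$j$ contributions vanish by antisymmetry) together with $D_x D_t(gf)\cdot(gf) = g^2 D_xD_t f\cdot f + f^2 D_xD_t g\cdot g$. Starting from $(D_x^4 - 4D_xD_t - 12\wp(x+c)D_x^2)\wt\tau\cdot\wt\tau = 0$, the $f^2$-coefficient piece vanishes because $\wt g$ solves the shifted bilinear equation, leaving $\wt g^{\,2}(D_x^4 - 4D_xD_t - 12\wp(x+c)D_x^2)f\cdot f + 6(D_x^2 \wt g\cdot \wt g)(D_x^2 f\cdot f) = 0$. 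The key identity $D_x^2 \wt g\cdot \wt g = 2\wt g^{\,2}(\ln\wt g)_{xx} = 2\wt g^{\,2}(\wp(x+c)-\wp(x))$, which follows from $(\ln\sigma)'' = -\wp$ applied to the explicit product form of $\wt g$, then converts the $-12\wp(x+c)D_x^2$ coefficient into $-12\wp(x)D_x^2$, yielding exactly \eqref{bilinear-f}.

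The main obstacle is the combinatorial bookkeeping in the second step: tracking the sign $\prod_{i<j}\varepsilon_i$ in conjunction with the case-by-case values of $\sigma(\varepsilon_i k_i - \varepsilon_j k_j)$ to recover the clean symmetric product $\prod_{i<j} A_{ij}^{\mu_i\mu_j}$ and to correctly identify how the $\sigma(2k_j)$ factors enter. A secondary subtlety, crucial to the bilinear verification, is the identity $D_x^2 \wt g\cdot \wt g = 2\wt g^{\,2}(\wp(x+c) - \wp(x))$; this is precisely the concrete manifestation of the \emph{quasi-gauge} property of Proposition \ref{P-B1}, and it is what makes the shift-by-$c$ and division-by-$\wt g$ construction produce a bilinear equation in the original potential $\wp(x)$ rather than the shifted $\wp(x+c)$.
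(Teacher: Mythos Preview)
Your proposal is correct and follows essentially the same two-part strategy as the paper: splitting $D_x^n(\wt g f)\cdot(\wt g f)$ via the product identity and using $D_x^2\wt g\cdot\wt g = 2\wt g^{\,2}(\wp(x+c)-\wp(x))$ to shift the potential back to $\wp(x)$ is exactly the quasi-gauge computation the paper carries out, and your term-by-term evaluation of $\wt\tau_{\boldsymbol\varepsilon}/\wt g$ via $\mu_j=(1+\varepsilon_j)/2$ mirrors the paper's $J_{\boldsymbol\varepsilon}$ bookkeeping. One small imprecision to tighten: reducing the sigma-product to $\prod_{i<j}A_{ij}^{\mu_i\mu_j}/\prod_j\sigma^{\mu_j}(2k_j)$ requires absorbing into each $\theta_j^{(0)}$ not merely a $\sigma(2k_j)$ factor but the full constant $e^{\alpha_j}=\sigma(2k_j)\prod_{m\ne j}\sigma(k_j+k_m)/\bigl[\sigma(k_j-k_m)\,\mathrm{sgn}(m-j)\bigr]$ (the paper's $\alpha_j$), which is what makes your XOR-product $\prod_{i<j,\,\mu_i\ne\mu_j}\sigma(k_i+k_j)/\sigma(k_i-k_j)$ split cleanly into a linear-in-$\mu$ piece and the quadratic $\mu_i\mu_j$ piece.
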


\begin{proof}
The proof consists of two parts. First, we will prove that $f$ defined by \eqref{f-tg}
solves the bilinear KdV equation \eqref{bilinear-f}.
In the second part we will prove $f$ can be written into Hirota's form \eqref{f-Hirota}.

Recalling formulae \eqref{Dx2} and \eqref{Dx4}, for the function $g$ defined in \eqref{g}, we have
\begin{align*}
     &D_x^2~ \wt g\cdot \wt g=2\left(\wp(x+\hbox{$\sum_{i=1}^Nk_i$})-\wp(x)\right) \wt g^2,\\
	 &D_x^4~ \wt g\cdot \wt g=12\wp(x+\hbox{$\sum_{i=1}^Nk_i$}) D_x^2~ \wt g\cdot \wt g.
\end{align*}
Then, noticing that $\widetilde \tau=f\wt g$, and using the above relations and the quasi-gauge property
described in Proposition \ref{P-B1},
by calculation we find
\begin{equation*}
\begin{split}
0=&\, (D_x^4-4D_xD_t-12\wp(\hbox{$x+\sum_{i=1}^Nk_i)D_x^2$})\widetilde\tau\cdot \widetilde\tau\\
=&\, \wt g^2 (D_x^4-4D_xD_t-12\wp(x)D_x^2)f\cdot f,
\end{split} 
\end{equation*}
which indicates that $f=\widetilde \tau /\wt g$ solves the bilinear KdV equation \eqref{bilinear-f}.

In the second part, we are going to prove $f=\wt \tau/\wt g$ can be written as in \eqref{f-Hirota}.
In light of \eqref{tau-epl}, a generic term in $f$ is
\begin{equation}\label{f-tg-ep}
\frac{\wt \tau_{\boldsymbol{\varepsilon}}}{\wt g}
=\frac{\sigma(x+\sum_{j=1}^N (1+\varepsilon_j) k_j)}{\sigma(x)}\cdot
\left(\prod_{1\leq i<j\leq N}\frac{\varepsilon_i\sigma(\varepsilon_i k_i-\varepsilon_j k_j)}
{\sigma(k_i- k_j)}\right)
\exp \left(-\sum_{j=1}^N(1+\varepsilon_j)\wt\gamma_j\right).
\end{equation}
In particular, when $|\boldsymbol{\varepsilon}|=1$, e.g. only $\varepsilon_{j_0}=1$ while all other
$\varepsilon_{i}$'s are $-1$,
such a term is
\begin{equation*} 
\Phi_x(2k_{j_0}) e^{\alpha_{j_0}} e^{-2\wt \gamma_{j_0}},
\end{equation*}
where
\begin{equation*}
e^{\alpha_{j_0}}=\sigma(2k_{j_0}) \prod_{1\leq i \leq N \atop i\neq j_0}\frac{\sigma(k_{j_0}+ k_i)}
{\sigma(k_{j_0}- k_i)\cdot \mathrm{sgn}[i-j_0]}.
\end{equation*}
To proceed, we introduce
\[S=\{1,2,\cdots, N\},~~ J_{\boldsymbol{\varepsilon}}=\{n_1, n_2, \cdots, n_l\}\subset S,\]
where $J_{\boldsymbol{\varepsilon}}$ is associated with $\boldsymbol{\varepsilon}$ via
\[\varepsilon_i=\left\{\begin{array}{ll}
                       1, & i\in J_{\boldsymbol{\varepsilon}},\\
                       -1,& i\in S\setminus J_{\boldsymbol{\varepsilon}}.
                       \end{array}\right. \]
Eq.\eqref{f-tg-ep} is then written as
\begin{equation}\label{f-tg-ep-2}
\frac{\wt \tau_{\boldsymbol{\varepsilon}}}{\wt g}
=\frac{\sigma(x+2\sum_{i\in J_{\boldsymbol{\varepsilon}}} k_{i})}
{\sigma(x)\prod_{i\in J_{\boldsymbol{\varepsilon}}}\sigma(2k_{i})}
\left(\prod_{i\in J_{\boldsymbol{\varepsilon}}} e^{\beta_i}\right)
\exp \left(-2 \sum_{i\in J_{\boldsymbol{\varepsilon}}} \wt\gamma_i \right),
\end{equation}
where
\begin{equation*}
e^{\beta_i}=\sigma(2k_{i}) \prod_{j\in S\setminus J_{\boldsymbol{\varepsilon}}}
\frac{\sigma(k_{i}+ k_j)}
{\sigma(k_{i}- k_j)\cdot \mathrm{sgn}[j-i]}.
\end{equation*}
Then, noticing that
\[e^{\beta_i-\alpha_i}=\prod_{j\in J_{\boldsymbol{\varepsilon}}\atop j\neq i}
\frac{\sigma(k_{i}- k_j)}
{\sigma(k_{i}+ k_j)\cdot \mathrm{sgn}[j-i]},\]
which indicates that
\[\prod_{i\in J_{\boldsymbol{\varepsilon}}}e^{\beta_i-\alpha_i}
=\prod_{i,j\in J_{\boldsymbol{\varepsilon}}\atop  i<j}
\left(\frac{\sigma(k_{i}- k_j)}{\sigma(k_{i}+ k_j)}\right)^2
=\prod_{i,j\in J_{\boldsymbol{\varepsilon}}\atop  i<j} A_{ij},\]
where $A_{ij}$ is defined as in \eqref{A-ij},
the term \eqref{f-tg-ep-2} is written as
\begin{equation*}
\frac{\wt \tau_{\boldsymbol{\varepsilon}}}{\wt g}
=\frac{\sigma(x+2 \sum_{i\in J_{\boldsymbol{\varepsilon}}} k_i)}
{\sigma(x)\prod_{i\in J_{\boldsymbol{\varepsilon}}}\sigma(2k_{i})}\cdot
\left(\prod_{i,j\in J_{\boldsymbol{\varepsilon}}\atop  i<j} A_{ij} \right)
\exp \left( \sum_{i\in  J_{\boldsymbol{\varepsilon}}} \theta_i \right),
\end{equation*}
where $\theta_i=-2\wt\gamma_i+\alpha_i$, defined as in \eqref{theta}.
This indicates that $f=\wt\tau/\wt g$ can be written into Hirota's form \eqref{f-Hirota} coupled with \eqref{theta-A}.

The proof is completed.

\end{proof}

\subsection{Vertex operator}\label{sec-3-2}

We look for a vertex operator that generates the $\tau$ function  \eqref{f-Hirota} for elliptic solitons.
To proceed, let us first list some notations. Let $\mathbf{t}=(t_1=x,t_2, \cdots, t_n, \cdots)$,
$\widetilde{\partial}=(\partial_{t_1},\frac{1}{2} \partial_{t_2},\cdots,\frac{1}{n} \partial_{t_n},\cdots)$,
$\b{\mathbf{t}}=(t_1=x,t_3, \cdots, t_{2n+1}, \cdots)$,
$\b{\widetilde{\partial}}=(\partial_{t_1},\frac{1}{3} \partial_{t_3},\cdots,\frac{1}{(2n+1)} \partial_{t_{2n+1}},\cdots)$,
\begin{subequations}\label{xi-theta}
	\begin{align}
	&\xi (\mathbf{t},k)=\sum_{n=1}^{\infty}k^n t_n, ~~~
      \xi_{[e]}(\mathbf{t},k)=\sum_{n=1}^{\infty}(-1)^n\frac{\zeta^{(n-1)}(k)}{(n-1)!}t_n,~~ 
      \zeta^{(i)}(k)=\partial^i_k\zeta(k),\\
    & \theta(\b{\mathbf{t}},k)=\xi (\mathbf{t},k)-\xi (\mathbf{t},-k)
    =2\sum_{n=0}^{\infty} k^{2n+1} t_{2n+1},\\
    & \theta_{[e]}(\b{\mathbf{t}},k)=\xi_{[e]} (\mathbf{t},k)-\xi_{[e]} (\mathbf{t},-k)
    =-2\sum_{n=0}^{\infty}\frac{\zeta^{(2n)}(k)}{(2n)!} t_{2n+1}.
\end{align}
\end{subequations}
Consider the following $\tau$ function which is equivalent to \eqref{f-Hirota},
\begin{equation}\label{tau-ver}
\tau_N^{}(\b{\mathbf{t}})=\sum_{J\subset S}\left(\prod_{i\in J}c_i\right)\Biggl(\prod_{i,j\in J\atop i<j}A_{ij}\Biggr)
\frac{\sigma(t_1+2\sum_{i\in J}k_i)}{\sigma(t_1)\prod_{i\in J}\sigma(2k_i)}
\exp\left(\sum_{i\in J}\theta_{[e]}(\b{\mathbf{t}},k_i)\right),
\end{equation}
where $c_i$ are arbitrary constants, $A_{ij}$ is defined as in \eqref{A-ij},
$S=\{1,2,\cdots,N\}$, $J$ is a subset of $S$,
and $\sum_{J\subset S}$ means the summation runs over all subsets of $S$.
The vertex operator that generates the above $\tau$ function is described below.

\begin{theorem}\label{T-3}
The $\tau$ function \eqref{tau-ver} can be generated by the vertex operator
\begin{equation}\label{eq:vertexkdv}
X(k)=\Phi_{t_1}(2 k) e^{\theta_{[e]} (\b{\mathbf{t}},k)} e^{\theta (\b{\wt \partial},k)}
\end{equation}
via
\begin{equation}\label{tau-N-N}
 \tau_{N}(\b{\mathbf{t}})=e^{c_{N}X(k_{N})}\circ \tau_{N-1}(\b{\mathbf{t}}),~~
 \tau_{0}(\b{\mathbf{t}})=1,
\end{equation}
i.e.
\begin{equation}\label{tau-N}
\tau_N(\b{\mathbf{t}})=e^{c_{N}X(k_N)}\cdots e^{c_{2}X(k_2)}e^{c_{1}X(k_1)}\circ 1.
\end{equation}
In addition, $\tau_N(\b{\mathbf{t}})$ is doubly periodic with respect to any $k_i$,
for $i=1,2, \cdots, N$, where the two periods are those of $\wp(k)$.
\end{theorem}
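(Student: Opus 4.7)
My plan is to prove \eqref{tau-N-N} by induction on $N$ (with trivial base $\tau_0=1$) and then read off double-periodicity directly from the resulting closed form \eqref{tau-ver}. The crucial preliminary observation is that $X(k)$ is nilpotent: $X(k)^2=0$ as an operator on functions of $\b{\mathbf{t}}$. Indeed, when $X(k)$ is composed with itself, pulling the shift $e^{\theta(\b{\wt\partial},k)}$ past $\Phi_{t_1}(2k)e^{\theta_{[e]}(\b{\mathbf{t}},k)}$ produces a scalar commutator factor $e^{\Delta(k,k)}=A_{kk}=(\sigma(0)/\sigma(2k))^2=0$. Consequently $e^{cX(k)}=1+cX(k)$, so \eqref{tau-N-N} reduces to $\tau_N=\tau_{N-1}+c_N X(k_N)\tau_{N-1}$, and it suffices to show that $c_N X(k_N)$ sends the summand of $\tau_{N-1}$ indexed by $J\subset\{1,\ldots,N-1\}$ to the summand of $\tau_N$ indexed by $J\cup\{N\}$.

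Fixing such a $J$, I would split the computation into two parts. First, $e^{\theta(\b{\wt\partial},k_N)}$ shifts $t_{2n+1}\to t_{2n+1}+\frac{2k_N^{2n+1}}{2n+1}$; in particular $t_1\to t_1+2k_N$. So the $\sigma$-ratio $\sigma(t_1+2\sum_{i\in J}k_i)/\sigma(t_1)$ becomes $\sigma(t_1+2k_N+2\sum_{i\in J}k_i)/\sigma(t_1+2k_N)$, and multiplication by $\Phi_{t_1}(2k_N)=\sigma(t_1+2k_N)/(\sigma(t_1)\sigma(2k_N))$ telescopes to the required $\sigma(t_1+2\sum_{i\in J\cup\{N\}}k_i)/(\sigma(t_1)\prod_{i\in J\cup\{N\}}\sigma(2k_i))$. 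Second, the same shift multiplies each factor $e^{\theta_{[e]}(\b{\mathbf{t}},k_i)}$ by the scalar $e^{\Delta(k_N,k_i)}$ with
\begin{equation*}
\Delta(k_N,k_i)=-4\sum_{n=0}^{\infty}\frac{\zeta^{(2n)}(k_i)\,k_N^{2n+1}}{(2n+1)!}.
\end{equation*}
Using the even-part expansion $\zeta(k_i+s)+\zeta(k_i-s)=2\sum_{n\ge 0}\frac{\zeta^{(2n)}(k_i)}{(2n)!}s^{2n}$ and integrating over $s\in[0,k_N]$ converts this to $\Delta(k_N,k_i)=-2\log\frac{\sigma(k_i+k_N)}{\sigma(k_i-k_N)}$, so $e^{\Delta(k_N,k_i)}=A_{Ni}$. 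The product over $i\in J$ combines with the existing $\prod_{i<j\in J}A_{ij}$ to give $\prod_{i<j\in J\cup\{N\}}A_{ij}$; appending the remaining multiplier $c_N e^{\theta_{[e]}(\b{\mathbf{t}},k_N)}$ from $X(k_N)$ itself completes the match with the $J\cup\{N\}$-summand of \eqref{tau-ver}.

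For the double-periodicity claim, my plan is to check invariance of each summand of \eqref{tau-ver} under $k_i\to k_i+2\omega_j$ using the quasi-periodicity identities $\sigma(z+2\omega_j)=-e^{2\eta_j(z+\omega_j)}\sigma(z)$ and $\zeta(z+2\omega_j)=\zeta(z)+2\eta_j$, together with the fact that $\zeta^{(2n)}$ is already strictly periodic for $n\ge 1$. For $i\notin J$ the summand is trivially invariant, while for $i\in J$ the exponential phases contributed by $\sigma(t_1+2\sum_{i'\in J}k_{i'})$, by $1/\sigma(2k_i)$, by the factors $A_{il}$ for $l\in J\setminus\{i\}$, and by the $n=0$ piece $-2\zeta(k_i)t_1$ of $\theta_{[e]}(\b{\mathbf{t}},k_i)$ should add up to zero. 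The main technical obstacle I expect is precisely this bookkeeping: tracking the $t_1$-linear and $\omega_j\eta_j$-constant phases carefully and verifying they cancel uniformly across the sum. Everything else reduces to formal manipulation of shift operators together with the identification $e^{\Delta(k_N,k_i)}=A_{Ni}$.
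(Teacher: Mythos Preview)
Your proposal is correct and follows essentially the same route as the paper. The key computation---identifying the scalar commutator produced when the shift $e^{\theta(\b{\wt\partial},k_N)}$ passes through $e^{\theta_{[e]}(\b{\mathbf t},k_i)}$ as $A_{Ni}$---is exactly the content of the paper's Lemma~\ref{L-2} (there it is stated as $\ln\frac{\sigma(p-q)}{\sigma(p+q)}=\theta_{[e]}(\b\varepsilon(q),p)$, which is your integrated Taylor identity). Your inductive organisation is equivalent to the paper's normal-ordered product formula in Lemma~\ref{L-3}, and the nilpotency $X(k)^2=0$ via $A_{kk}=0$ is the same observation.

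The one place you diverge in presentation is the double-periodicity claim. The paper argues that $\tau_1$ is visibly doubly periodic in $k_1$ and that the $X(k_i)$ mutually commute, so one may move $e^{c_iX(k_i)}$ to the outermost position and reduce to the same check. Your summand-by-summand bookkeeping of the quasi-periodicity phases is more explicit and arguably more robust: it makes transparent that the $t_1$-linear contribution from $\sigma(t_1+2\sum_{l\in J}k_l)$ cancels against the $-2\zeta(k_i)t_1$ piece of $\theta_{[e]}$, while the $k_l$-dependent phases from $1/\sigma(2k_i)$ and the $A_{il}$ factors combine to zero (the computation you outline does go through, with no leftover sign since the shift in the large $\sigma$-argument is by $4\omega_j$ and the signs in each $A_{il}$ cancel in the squared ratio). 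Either argument is fine; yours has the advantage of not relying on the somewhat implicit step of knowing that $X(k_i)$ applied to the remaining $\tau$-function is itself doubly periodic.
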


Let us prove the theorem through the following lemmas.

\begin{lemma}\label{L-2}
For $\theta$ and $\theta_{[e]}$ defined in \eqref{xi-theta}, we have
\begin{equation}\label{X-exch}
e^{\theta (\b{\wt \partial},k_i)} e^{\theta_{[e]} (\b{\mathbf{t}},k_j)}=A_{ij} \, 
e^{\theta_{[e]} (\b{\mathbf{t}},k_j)} e^{\theta (\b{\wt \partial},k_i)},
\end{equation}
where
$A_{ij}$ is defined as \eqref{A-ij}.
\end{lemma}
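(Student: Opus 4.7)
The plan is to invoke the Heisenberg/Weyl commutation trick. Observe that $A:=\theta(\b{\wt\partial},k_i)$ is a constant-coefficient first-order differential operator acting only on the odd-indexed times $t_{2n+1}$, while $B:=\theta_{[e]}(\b{\mathbf{t}},k_j)$ is linear in those same $t_{2n+1}$. Their commutator is therefore a scalar (independent of $\b{\mathbf{t}}$), and consequently the identity $e^{A}e^{B}=e^{[A,B]}\,e^{B}e^{A}$ holds. Thus Lemma \ref{L-2} reduces to verifying the single scalar identity $[A,B]=\ln A_{ij}$.

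Next, I would compute $[A,B]$ by direct substitution of the series in \eqref{xi-theta}. Since $[\partial_{t_{2n+1}},t_{2m+1}]=\delta_{nm}$, a diagonal double sum collapses and produces
\begin{equation*}
[A,B]=-4\sum_{n=0}^{\infty}\frac{k_i^{2n+1}\,\zeta^{(2n)}(k_j)}{(2n+1)!}.
\end{equation*}
It remains to match this with $\ln A_{ij}=2\ln\sigma(k_i-k_j)-2\ln\sigma(k_i+k_j)$.

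For this last step, differentiate the target with respect to $k_i$: since $\partial_k\ln\sigma=\zeta$,
\begin{equation*}
\partial_{k_i}\ln A_{ij}=2\bigl(\zeta(k_i-k_j)-\zeta(k_i+k_j)\bigr).
\end{equation*}
Taylor-expanding each term in powers of $k_i$ around $0$ and invoking the oddness of $\zeta$ (so that $\zeta^{(2m+1)}$ is even and the odd-order terms cancel), the right-hand side becomes $-4\sum_{m\geq 0}k_i^{2m}\zeta^{(2m)}(k_j)/(2m)!$. Integrating in $k_i$ from $0$, and noting $A_{ij}\big|_{k_i=0}=1$ so the constant of integration vanishes, one recovers exactly the series expression obtained for $[A,B]$, completing the identification.

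The only delicate point I anticipate is bookkeeping around the constant of integration: because $\sigma$ is odd, $\ln\sigma(-k_j)-\ln\sigma(k_j)$ carries a spurious $i\pi$, but this is killed by the overall square in $A_{ij}$, so the factor $e^{[A,B]}$ is unambiguously $A_{ij}$. Everything else is a routine application of the canonical commutation relation and a Taylor expansion, so I expect no serious analytic obstacle.
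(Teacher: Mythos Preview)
Your proof is correct and follows essentially the same route as the paper: both reduce the operator identity to the scalar Taylor-series identity relating $\ln A_{ij}$ to $-4\sum_{n\ge 0}\zeta^{(2n)}/(2n+1)!$, which is precisely the paper's formula \eqref{Aij-pq}. The only cosmetic differences are that the paper verifies the operator exchange by acting on a test function (the shift-operator viewpoint) rather than invoking the Weyl relation $e^Ae^B=e^{[A,B]}e^Be^A$, and it expands in $k_j$ rather than $k_i$---immaterial since $A_{ij}=A_{ji}$.
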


\begin{proof}
Considering the Taylor series in the neighbourhood of $q=0$, we have
\begin{equation}\label{Aij-pq}
\ln \frac{\sigma(p-q)}{\sigma(p+q)}=\theta_{[e]}(\b\varepsilon(q), p),
\end{equation}
which indicates
\begin{equation}
A_{ij}=e^{2\theta_{[e]}(\b\varepsilon(k_j), k_i)},
\end{equation}
where $\b\varepsilon(q)=(q,\frac{q^3}{3},\cdots,\frac{q^{2n+1}}{(2n+1)},\cdots)$.
Then, for any $C^{\infty}$ function $h(\b{\mathbf{t}})$, one can directly verify that
\[e^{\theta (\b{\wt \partial},k_i)} e^{\theta_{[e]} (\b{\mathbf{t}},k_j)}\circ h(\b{\mathbf{t}})
=A_{ij} \, e^{\theta_{[e]} (\b{\mathbf{t}},k_j)} e^{\theta (\b{\wt \partial},k_i)}\circ h(\b{\mathbf{t}}),
\]
i.e. relation \eqref{X-exch} holds.
\end{proof}

Note that \eqref{X-exch} is formally similar to the result in the usual soliton case, cf.\cite{DKM-PJA-1981,MJD-book-1999}.
We are led by this lemma to the following.

\begin{lemma}\label{L-3}
For the vertex operator  $X(k)$ defined by \eqref{eq:vertexkdv}, we have
\begin{subequations}
\begin{align}
	&X(k_i)X(k_j)=A_{i,j}\frac{\sigma(t_1+2k_i+2 k_j)}{\sigma(t_1)\sigma(2k_i)\sigma(2k_j)}
      e^{\theta_{[e]}(\b{\mathbf{t}},k_i,)+\theta_{[e]}(\b{\mathbf{t}},k_j)}
      e^{\theta(\b{\wt{\partial}}, k_i)+\theta(\b{\wt \partial}, k_j)},\label{XX-commu}\\
	&X(k_s)\cdots X(k_2)X(k_1) \nonumber \\
	&\qquad=\left(\prod_{1\leq i<j\leq s}A_{ij}\right)
       \frac{\sigma(t_1+2\sum_{i=1}^sk_i)}{\sigma(t_1)\prod_{i=1}^{s}\sigma(2k_i)}
       \exp\left( \sum_{i=1}^s\theta_{[e]}(\b{\mathbf{t}},k_i)\right)\cdot
       \exp\left(\sum_{i=1}^s\theta(\b{\wt{\partial}}, k_i)\right),
\end{align}
\end{subequations}
and hence
\begin{equation}
X(k)^2=0,~~
e^{cX(k)}=1+cX(k).
\end{equation}
\end{lemma}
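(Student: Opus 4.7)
The plan is to prove all three identities by direct computation, using Lemma \ref{L-2} as the key commutation relation together with the observation that $e^{\theta(\b{\wt\partial},k)}$ acts as a translation on $\b{\mathbf{t}}$. Writing out the definition, $\theta(\b{\wt\partial},k) = 2\sum_{n\geq 0}\frac{k^{2n+1}}{2n+1}\partial_{t_{2n+1}}$, and its $n=0$ summand $2k\partial_{t_1}$ is the only piece that acts nontrivially on a function depending only on $t_1$. Hence, as operator identities, $e^{\theta(\b{\wt\partial},k_i)}\,\Phi_{t_1}(2k_j) = \Phi_{t_1+2k_i}(2k_j)\,e^{\theta(\b{\wt\partial},k_i)}$ (the higher $\partial_{t_{2n+1}}$ commute freely past $\Phi_{t_1}(2k_j)$). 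The overall strategy is: in any product of $X$'s, push every $e^{\theta(\b{\wt\partial},k_i)}$ factor to the right, collecting a sigma-shift each time it passes a $\Phi_{t_1}$ and a factor $A_{ij}$ each time it passes an $e^{\theta_{[e]}(\b{\mathbf{t}},k_j)}$.

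For \eqref{XX-commu}, I would expand $X(k_i)X(k_j)$, then apply the shift rule above to $e^{\theta(\b{\wt\partial},k_i)}\Phi_{t_1}(2k_j)$, followed by Lemma \ref{L-2} to move $e^{\theta(\b{\wt\partial},k_i)}$ past $e^{\theta_{[e]}(\b{\mathbf{t}},k_j)}$, extracting the factor $A_{ij}$. The scalar part telescopes at once: $\Phi_{t_1}(2k_i)\,\Phi_{t_1+2k_i}(2k_j) = \frac{\sigma(t_1+2k_i)}{\sigma(t_1)\sigma(2k_i)}\cdot\frac{\sigma(t_1+2k_i+2k_j)}{\sigma(t_1+2k_i)\sigma(2k_j)} = \frac{\sigma(t_1+2k_i+2k_j)}{\sigma(t_1)\sigma(2k_i)\sigma(2k_j)}$, which is exactly the claimed prefactor.

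For the $s$-fold formula, I would proceed by induction on $s$, using \eqref{XX-commu} as the engine. Assuming the identity at level $s$, multiply $X(k_{s+1})$ on the left. Moving $e^{\theta(\b{\wt\partial},k_{s+1})}$ rightward requires two things: (i) shifting $t_1 \to t_1 + 2k_{s+1}$ in the inductive sigma quotient, and pairing the resulting $\sigma(t_1+2k_{s+1})$ in the denominator with the numerator of the new $\Phi_{t_1}(2k_{s+1}) = \sigma(t_1+2k_{s+1})/(\sigma(t_1)\sigma(2k_{s+1}))$ so that the telescoping yields $\frac{\sigma(t_1+2\sum_{i=1}^{s+1}k_i)}{\sigma(t_1)\prod_{i=1}^{s+1}\sigma(2k_i)}$; (ii) commuting past $\exp(\sum_{i=1}^s\theta_{[e]}(\b{\mathbf{t}},k_i))$ by $s$ applications of Lemma \ref{L-2}, producing $\prod_{i=1}^{s}A_{s+1,i}$. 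Together with the inductive $\prod_{1\leq i<j\leq s}A_{ij}$, this assembles into $\prod_{1\leq i<j\leq s+1}A_{ij}$ and closes the induction.

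Finally, $X(k)^2=0$ is an immediate specialisation of \eqref{XX-commu} with $k_i=k_j=k$: since $A_{kk} = (\sigma(0)/\sigma(2k))^2 = 0$, the whole expression vanishes. Nilpotency then truncates the power series $e^{cX(k)} = \sum_{n\geq 0}(cX(k))^n/n! = 1 + cX(k)$. I do not anticipate a substantial obstacle; the only thing requiring care is the operator-ordering bookkeeping, namely verifying that $e^{\theta(\b{\wt\partial},k)}$ really does induce only a $t_1$-translation on $t_1$-dependent multiplication operators, and that repeated applications of Lemma \ref{L-2} do not introduce extra cross-terms (they don't, since the $\theta_{[e]}(\b{\mathbf{t}},k_i)$ commute with each other as multiplication operators).
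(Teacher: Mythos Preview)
Your proposal is correct and follows essentially the same route as the paper, which simply states that Lemma~\ref{L-3} is a consequence of Lemma~\ref{L-2} and leaves the details implicit. Your explicit bookkeeping---using the $t_1$-translation action of $e^{\theta(\b{\wt\partial},k_i)}$ on $\Phi_{t_1}(2k_j)$, applying Lemma~\ref{L-2} to extract $A_{ij}$, telescoping the $\sigma$-factors, and then inducting---is exactly the computation the paper's one-line remark presupposes.
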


With the above two lemmas in hand, we can confirm that $\tau_N(\b{\mathbf{t}})$
can be generated by the vertex operator $X(k)$ via \eqref{tau-N}, with \eqref{tau-N-N} as a consequence.
In addition, noticing that $\tau_1(\b{\mathbf{t}})= e^{c_{1}X(k_1)}\circ 1$ is doubly periodic with respect to $k_1$,
and $X(k_i)$ and $X(k_j)$ commute (see  \eqref{XX-commu}
where we should consider $A_{ij}$ to be a rational function rather than a Laurent series of $k_i/k_j$ or $k_j/k_i$,
cf. \cite{DKM-PJA-1981}),
it  follows that $\tau_N(\b{\mathbf{t}})$ defined by \eqref{tau-N} is
doubly periodic with respect to any $k_i$, for $i=1,2, \cdots, N$.
Thus Theorem \ref{T-3} holds.

\subsection{Bilinear identity of the KdV hierarchy}\label{sec-3-3}

With the vertex operator and $\tau$ function in hand, we can have bilinear forms
of the KdV hierarchy that admit elliptic soliton solutions.

To achieve that, let us first introduce a doubly periodic function.

\begin{lemma}\label{L-4}
Consider a vertex operator
\begin{align}
	 X(\b{\mathbf{t}},q)=\frac{\sigma(t_1+q)}{\sigma(q)} e^{\frac{1}{2}\theta_{[e]}(\b{\mathbf{t}},q)}
     e^{\frac{1}{2}\theta(\b{\wt{\mathbf{\partial}}},q)},
	\end{align}
and introduce a function of $q$,
\begin{equation}\label{h-q}
	 h (\b{\mathbf{t}},q)= X(\b{\mathbf{t}},q)\tau (\b{\mathbf{t}}),
\end{equation}
where $\tau (\b{\mathbf{t}})=\tau_N^{} (\b{\mathbf{t}})$ is defined by \eqref{tau-ver}.
Then, $ h (\b{\mathbf{t}},q)$ is a doubly periodic function of $q$
with periods $2w_1$ and $2w_2$, where $w_i$ are the half periods of $\wp(q)$.
\end{lemma}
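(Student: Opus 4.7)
The plan is to first compute $h(\b{\mathbf{t}},q)$ in closed form by explicitly applying the operator parts of $X(\b{\mathbf{t}},q)$ to the expansion \eqref{tau-ver}, and then verify periodicity term by term using standard quasi-periodicities of $\sigma$ and $\zeta$.

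First, I would observe that $e^{\frac{1}{2}\theta(\b{\wt{\partial}},q)}$ is the shift operator sending $t_{2n+1}\mapsto t_{2n+1}+\tfrac{q^{2n+1}}{2n+1}$ for every $n\ge 0$. Its effect on the factor $\exp\theta_{[e]}(\b{\mathbf{t}},k_i)$ inside each summand of $\tau_N(\b{\mathbf{t}})$ is to produce an extra exponential of $-2\sum_{n\ge 0}\zeta^{(2n)}(k_i)q^{2n+1}/(2n+1)!$, which by the Taylor expansion of $\ln\sigma$ equals $\ln\bigl(\sigma(k_i-q)/\sigma(k_i+q)\bigr)$; recall exactly this identity was used in \eqref{Aij-pq}. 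Meanwhile the shift $t_1\mapsto t_1+q$ turns $1/\sigma(t_1)$ into $1/\sigma(t_1+q)$ and $\sigma(t_1+2\sum_{i\in J}k_i)$ into $\sigma(t_1+q+2\sum_{i\in J}k_i)$. Multiplying by the prefactor $\frac{\sigma(t_1+q)}{\sigma(q)}e^{\frac{1}{2}\theta_{[e]}(\b{\mathbf{t}},q)}$ then cancels the shifted $\sigma(t_1+q)$ in the denominator, leaving
\begin{equation*}
h(\b{\mathbf{t}},q)=\frac{e^{\frac{1}{2}\theta_{[e]}(\b{\mathbf{t}},q)}}{\sigma(q)}\sum_{J\subset S}C_J\,\frac{\sigma(t_1+q+2\sum_{i\in J}k_i)}{\prod_{i\in J}\sigma(2k_i)}\,\exp\Bigl(\sum_{i\in J}\theta_{[e]}(\b{\mathbf{t}},k_i)\Bigr)\prod_{i\in J}\frac{\sigma(k_i-q)}{\sigma(k_i+q)},
\end{equation*}
where $C_J=(\prod_{i\in J}c_i)(\prod_{i<j\in J}A_{ij})$ is independent of $q$.

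Next, I would verify invariance under $q\mapsto q+2w_j$ for $j=1,2$ summand by summand. The relevant quasi-periodicities are $\sigma(z+2w_j)=-e^{2\eta_j(z+w_j)}\sigma(z)$, $\zeta(z+2w_j)=\zeta(z)+2\eta_j$, and $\zeta^{(n)}(z+2w_j)=\zeta^{(n)}(z)$ for $n\ge 1$. Using these, only the $n=0$ term in $\tfrac{1}{2}\theta_{[e]}(\b{\mathbf{t}},q)=-\zeta(q)t_1-\sum_{n\ge 1}\zeta^{(2n)}(q)t_{2n+1}/(2n)!$ is quasi-periodic, contributing a factor $e^{-2\eta_j t_1}$; the factor $1/\sigma(q)$ contributes $-e^{-2\eta_j(q+w_j)}$; the numerator $\sigma(t_1+q+2\sum_{i\in J}k_i)$ contributes $-e^{2\eta_j(t_1+q+2\sum_{i\in J}k_i+w_j)}$; and the product $\prod_{i\in J}\sigma(k_i-q)/\sigma(k_i+q)$ contributes $e^{-4\eta_j\sum_{i\in J}k_i}$ (the two minus signs from numerator and denominator cancel). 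Multiplying these four contributions, the $t_1$, $q$, $w_j$ and $\sum k_i$ terms in the exponent cancel exactly and the two overall minus signs multiply to $+1$, so each $J$-summand is genuinely invariant.

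Combining the two steps proves double periodicity of $h(\b{\mathbf{t}},q)$ with the desired periods $2w_1,2w_2$. The only subtle step is the Taylor-series conversion that produces $\sigma(k_i-q)/\sigma(k_i+q)$ from the shift of $\theta_{[e]}(\b{\mathbf{t}},k_i)$; I anticipate the main obstacle will be bookkeeping of signs and of the $\eta_j$-linear terms in the quasi-periodicity check, since several different quasi-periodic factors must conspire to cancel. Everything else is a direct calculation once the explicit form of $h$ is in hand.
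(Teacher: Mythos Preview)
Your proposal is correct and follows essentially the same approach as the paper: first compute $h(\b{\mathbf{t}},q)$ explicitly by applying the shift operator (using the identity \eqref{Aij-pq}), obtaining the same closed form as the paper's \eqref{h-qq}, and then verify double periodicity term by term. The only cosmetic difference is that the paper packages the quasi-periodicity check into an appeal to Proposition~\ref{P-A1} (the criterion that $\frac{\sigma(a+q)}{\sigma(b+q)}e^{c\zeta(q)}$ is doubly periodic when $a-b+c=0$), applied separately to the prefactor $\frac{\sigma(t_1+q)}{\sigma(q)}e^{-\zeta(q)t_1}$ and to the $J$-dependent ratio, whereas you carry out the equivalent cancellation of $\eta_j$-terms by hand.
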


\begin{proof}
Making use of relation \eqref{Aij-pq}, $h (\b{\mathbf{t}},q)$ can be explicitly written as
\begin{align}
h(\b{\mathbf{t}},q)=&\frac{\sigma(t_1+q)}{\sigma(q)}  e^{\frac{1}{2}\theta_{[e]}(\b{\mathbf{t}},q)}\nonumber \\
	& \times	\sum_{J\subset S}\left[
       \!\!\left(\prod_{i<j\in J}A_{ij}\right)
       \frac{\sigma(t_1+2\sum_{i\in J}k_i+q)}{\sigma(t_1+q)\prod_{i\in J} \sigma(2k_i)}
		\left(\prod_{i\in J}c_i\frac{\sigma(k_i-q)}{\sigma(q+k_i)} \right)
        \times e^{\sum_{i\in J}\theta_{[e]}(\b{\mathbf{t}},k_i)}\right]. \label{h-qq}
\end{align}
Note that in $\frac{1}{2}\theta_{[e]}(\b{\mathbf{t}},q)$, except the first term $-\zeta(q)t_1$,  the rest part
$-\sum_{n=1}^{\infty}\frac{\zeta^{(2n)}(q)}{(2n)!} t_{2n+1}$ is already doubly periodic with respect to $q$.
Following Proposition \ref{P-A1}, one can check that
\[\frac{\sigma(t_1+q)}{\sigma(q)}  e^{-\zeta(q)t_1},~~~
 \frac{\sigma(t_1+2\sum_{i\in J}k_i+q)}{\sigma(t_1+q)\prod_{i\in J} \sigma(2k_i)}
		\prod_{i\in J}c_i\frac{\sigma(k_i-q)}{\sigma(q+k_i)}
\]
are doubly periodic too. This indicates $h(\b{\mathbf{t}},q)$ is a doubly periodic function of $q$.
Note that $h(\b{\mathbf{t}},q)$ is not elliptic as it has an essential singularity $q=0$
(mod periodic lattice).

\end{proof}

Then we come up with an integral bilinear identity.

\begin{theorem}\label{T-4}
For the function $h(\b{\mathbf{t}},q)$ defined in \eqref{h-q}, we have the following bilinear identity
\begin{equation}\label{bil-int-KdV}
\oint_{\Omega} \frac{\mathrm{d}q}{2\pi i}\, h(\b{\mathbf{t}},q)\, h({\mathbf{\b t}}',-q) =0,
\end{equation}
which gives rise to
\begin{equation}\label{bil-res-KdV}
\underset{q=0}{\mathrm{Res}}\left[ h(\b{\mathbf{t}},q)\, h({\mathbf{\b t}}',-q)\right]=0,
\end{equation}
where the  contour $\Omega$ takes the boundary, anticlockwise, of the open fundamental period parallelogram $\mathbb{D}$
(see Fig.\ref{Fig-1}) and all $\{\pm k_i\}$ are distinct and belong to $\mathbb{D}$.
\end{theorem}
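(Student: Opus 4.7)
The plan is to deduce the integral identity \eqref{bil-int-KdV} from the double-periodicity of the integrand established in Lemma \ref{L-4}, and then to reduce the residue version \eqref{bil-res-KdV} to a cancellation of the simple-pole residues at $q=\pm k_i$ inside $\mathbb{D}$.

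First, I would invoke Lemma \ref{L-4} to conclude that $h(\b{\mathbf{t}},q)$ is doubly periodic in $q$ with periods $2w_1, 2w_2$. Because the period lattice is invariant under $q\mapsto -q$, the factor $h(\b{\mathbf{t}}',-q)$ is also doubly periodic in $q$ with the same periods, hence so is the product $F(q):=h(\b{\mathbf{t}},q)\, h(\b{\mathbf{t}}',-q)$. Since $\Omega$ is the anticlockwise boundary of the fundamental period parallelogram, its two pairs of opposite edges are related by lattice translations; the integrand takes equal values on each such pair, but the pair is traversed in opposite directions, so the contributions cancel. This yields \eqref{bil-int-KdV}.

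To obtain \eqref{bil-res-KdV}, I would catalogue the singularities of $F(q)$ inside $\mathbb{D}$. From the explicit expression \eqref{h-qq}, $h(\b{\mathbf{t}},q)$ has an essential singularity at $q=0$ (arising from $e^{-\zeta(q)t_1}$ together with the prefactor $1/\sigma(q)$) and simple poles at $q=-k_i$ (from the denominator $\sigma(q+k_i)$ appearing in the terms with $i\in J$); symmetrically, $h(\b{\mathbf{t}}',-q)$ has an essential singularity at $q=0$ and simple poles at $q=k_i$. Hence $F(q)$ has one essential singularity at $q=0$ and $2N$ simple poles at $q=\pm k_i$. Applying the residue theorem on $\mathbb{D}$ and combining with \eqref{bil-int-KdV} gives
$$\mathrm{Res}_{q=0}\,F(q)+\sum_{i=1}^{N}\Bigl(\mathrm{Res}_{q=k_i}\,F(q)+\mathrm{Res}_{q=-k_i}\,F(q)\Bigr)=0,$$
so \eqref{bil-res-KdV} is equivalent to the vanishing of the total simple-pole residue $\sum_{i=1}^{N}\bigl(\mathrm{Res}_{q=k_i}+\mathrm{Res}_{q=-k_i}\bigr)F(q)$.

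The main obstacle is exactly this cancellation. At $q=-k_{i_0}$ only the subset of terms in $h(\b{\mathbf{t}},q)$ with $i_0\in J$ contributes a pole, while $h(\b{\mathbf{t}}',-q)$ evaluates there to its regular part coming from the terms with $i_0\notin J'$; the symmetric statement holds at $q=+k_{i_0}$. I would compute each residue explicitly via \eqref{h-qq}, simplify using the $\sigma$- and $\zeta$-identities of Appendix \ref{A-1}, and then show that after repackaging the sums over subsets $J$ and $J'$ the two contributions at $\pm k_{i_0}$ assemble into expressions related by the $q\mapsto -q$ symmetry whose sum vanishes. This combinatorial-analytic bookkeeping, repeated across all $i_0$, is the technically demanding step and sets the stage for the algorithm for extracting explicit bilinear equations developed in Sec.\ref{sec-3-4}.
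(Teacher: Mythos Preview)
Your proposal is correct and follows essentially the same route as the paper: use double-periodicity (Lemma \ref{L-4}) to get \eqref{bil-int-KdV}, list the singularities $q=0$ and $q=\pm k_i$ inside $\mathbb{D}$, and reduce \eqref{bil-res-KdV} to the pairwise cancellation $\mathrm{Res}_{q=k_{i_0}}F+\mathrm{Res}_{q=-k_{i_0}}F=0$, which is then checked by direct computation from \eqref{h-qq}. The paper carries out that last computation explicitly: the key point you will discover when you do the bookkeeping is that each residue \emph{factorises} as a product of two sums over $J\subset S\setminus\{i_0\}$ (one depending only on $\b{\mathbf{t}}$, the other only on $\b{\mathbf{t}}'$), because in $h(\b{\mathbf{t}},q)$ the terms with $i_0\in J$ vanish at $q=k_{i_0}$ while in $\mathrm{Res}_{q=k_{i_0}}h(\b{\mathbf{t}}',-q)$ the $A_{ij}$ factors collapse the remaining product to the same structural form; the residue at $q=-k_{i_0}$ is then exactly the negative of this product, and no further $\sigma$- or $\zeta$-identities beyond the elementary $\sigma(-z)=-\sigma(z)$ are needed.
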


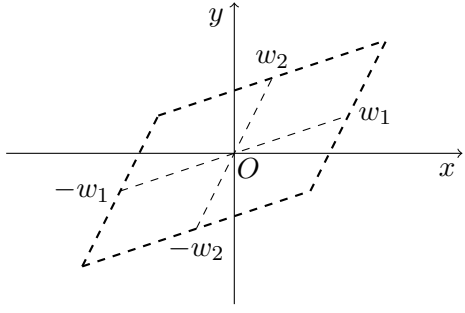
\begin{figure}[h]
\begin{center}
\begin{tikzpicture}[rotate=0]
\draw[->](-3,0)--(3,0);
\draw[->](0,-2)--(0,2);
\node[below] at (2.8,0) {$x$};
\node[left] at (0,1.8) {$y$};
\draw[dashed](-1.5,-0.5)--(1.5,0.5);
\draw[dashed,thick](-1,0.5)--(2.0,1.5);
\draw[dashed,thick](-2.0,-1.5)--(1.0,-0.5);
\draw[dashed](-0.5,-1.0)--(0.5,1.0);
\draw[dashed,thick](1.0,-0.5)--(2.0,1.5);
\draw[dashed,thick](-2,-1.5)--(-1,0.5);
\node[above] at (0.5,1.0) {$w_2$};
\node[below] at (-0.5,-1.0) {$-w_2$};
\node[left] at (-1.5,-0.5) {$-w_1$};
\node[right] at (1.5,0.5) {$w_1$};
\node[right] at (-0.1,-0.2) {$O$};
\end{tikzpicture}
\end{center}
\vskip -0.5cm
\caption{Fundamental period parallelogram $\mathbb{D}$}\label{Fig-1}
\end{figure}

\begin{proof}
In light of Lemma \ref{L-4}, it is obvious the integrand $h(\b{\mathbf{t}},q) h({\mathbf{\b t}}',-q)$
is a double-periodic function of $q$.
Meanwhile, noticing that in $\mathbb{D}$ the integrand has only $2N$ isolated simple poles $\{\pm k_i\}_{i=1}^{N}$
and one isolated essential singularity $q=0$,
there is a domain which contains the curve $\Omega$ and where the integrand is continuous.
It then turns out that the integral in \eqref{bil-int-KdV} is zero due to the integrand being double-periodic.

To prove the second identity \eqref{bil-res-KdV}, we examine residues of the integrand at $q=\pm k_i$.
For given $j_0\in S$, $q=k_{j_0}$ is a simple pole of $h({\mathbf{\b t}}',-q)$
but $h(\b{\mathbf{t}},q)$ is analytic at this point. Thus we have
\begin{equation}\label{res-hh}
\underset{q=k_{j_0}}{\mathrm{Res}}\left[ h(\b{\mathbf{t}},q)\, h({\mathbf{\b t}}',-q)\right]
=h(\b{\mathbf{t}},k_{j_0}) \times \underset{q=k_{j_0}}{\mathrm{Res}}\left[h({\mathbf{\b t}}',-q)\right].
\end{equation}
$h({\mathbf{\b t}}',-q)$ has a similar summation expression as \eqref{h-qq}.
For any $J$ that does not contain $j_0$, the associated terms
in the summation expression of $h({\mathbf{\b t}}',-q)$ contribute nothing
to the residue at $q=k_{j_0}$.
Therefore we have
\begin{equation}
\underset{q=k_{j_0}}{\mathrm{Res}}\left[ h({\mathbf{\b t}}',-q)\right]
=\underset{q=k_{j_0}}{\mathrm{Res}}\left[g({\mathbf{\b t}}',q)\right],
\end{equation}
where $g(\b{\mathbf{t}}',q)$ is a collection of all those $k_{j_0}$-related terms in $h({\mathbf{\b t}}',-q)$,
which is
\begin{align*}
g(\b{\mathbf{t}}',q)=&\frac{- e^{\frac{1}{2}\theta_{[e]}(\b{\mathbf{t}}',-q)}}{\sigma(q)}  
	\sum_{J\subset S\backslash\{j_0\}}\left[
       \!\!\left(\prod_{i<j\in J}A_{ij}\right)
       \left(\prod_{i\in J}c_i\frac{\sigma(k_i+q)}{\sigma(k_i- q)} \right)
        \times\frac{ e^{\sum_{i\in J}\theta_{[e]}(\b{\mathbf{t}}',k_i)}}{\prod_{i\in J}\sigma(2k_i)}\times B_{j_0}\right]
\end{align*}
where
\[B_{j_0}=c_{j_0} \sigma(t_1'+2\hbox{$\sum_{i\in J}k_i$}+2k_{j_0}-q)
\left(\prod_{i\in J}\frac{\sigma^2(k_i-k_{j_0})}{\sigma^2(k_i+k_{j_0})}\right)
\frac{\sigma(k_{j_0}+q)}{\sigma(k_{j_0}- q)}\cdot
 \frac{e^{\theta_{[e]}(\b{\mathbf{t}}',k_{j_0})}}{\sigma(2k_{j_0})}.
\]
Note that $q=k_{j_0}$ is a simple pole of $B_{j_0}$. A direct calculation yields
\begin{align*}
& \underset{q=k_{j_0}}{\mathrm{Res}}\left[g(\b{\mathbf{t}}',q)\right]\\
=~&\frac{ c_{j_0}   e^{\frac{1}{2}\theta_{[e]}(\b{\mathbf{t}}',k_{j_0})}}{\sigma(k_{j_0})}  \\
	& \times	\sum_{J\subset S\backslash\{j_0\}}\left[
       \!\!\left(\prod_{i<j\in J}A_{ij}\right)
       \left(\prod_{i\in J}c_i\frac{\sigma(k_i-k_{j_0})}{\sigma(k_i+k_{j_0})} \right)
        \frac{\sigma(t'_1+2\hbox{$\sum_{i\in J}k_i$}+k_{j_0})}{\prod_{i\in J}\sigma(2k_i)}
        \cdot e^{\sum_{i\in J}\theta_{[e]}(\b{\mathbf{t}}',k_i)}\right],
\end{align*}
where we have made use of
\[\lim_{q\to k_{j_0}}\left(\prod_{i\in J}c_i\frac{\sigma(k_i+q)}{\sigma(k_i- q)} \right)
\left(\prod_{i\in J}\frac{\sigma^2(k_i-k_{j_0})}{\sigma^2(k_i+k_{j_0})}\right)
=\prod_{i\in J}c_i\frac{\sigma(k_i-k_{j_0})}{\sigma(k_i+k_{j_0})}.
\]
Recalling the expression \eqref{h-qq} for $h(\b{\mathbf{t}},q)$,
in the summation, such terms will vanish as they are generated by $J$ that contains $j_0$.
Thus from \eqref{res-hh} we have
\begin{align*}
&\underset{q=k_{j_0}}{\mathrm{Res}}\left[h(\b{\mathbf{t}},q) h(\b{\mathbf{t}}',-q)\right]\\
=~&\frac{ c_{j_0}
e^{\frac{1}{2}\theta_{[e]}(\b{\mathbf{t}},k_{j_0})+\frac{1}{2}\theta_{[e]}(\b{\mathbf{t}}',k_{j_0})}}
{\sigma^2(k_{j_0})} \\
	& \times	\sum_{J\subset S\backslash\{j_0\}}\left[
       \!\!\left(\prod_{i<j\in J}A_{ij}\right)
       \left(\prod_{i\in J}c_i\frac{\sigma(k_i-k_{j_0})}{\sigma(k_i+k_{j_0})} \right)
        \frac{\sigma(t_1+2\hbox{$\sum_{i\in J}k_i$}+k_{j_0})}{\prod_{i\in J}\sigma(2k_i)}
        \cdot e^{\sum_{i\in J}\theta_{[e]}(\b{\mathbf{t}},k_i)}\right]\\
     & \times	\sum_{J\subset S\backslash\{j_0\}}\left[
       \!\!\left(\prod_{i<j\in J}A_{ij}\right)
       \left(\prod_{i\in J}c_i\frac{\sigma(k_i-k_{j_0})}{\sigma(k_i+k_{j_0})} \right)
        \frac{\sigma(t_1'+2\hbox{$\sum_{i\in J}k_i$}+k_{j_0})}{\prod_{i\in J}\sigma(2k_i)}
        \cdot e^{\sum_{i\in J}\theta_{[e]}(\b{\mathbf{t}}',k_i)}\right]   .
\end{align*}
In a similar way we can calculate the residue of the integrand at $q=-k_{j_0}$.
It turns out that
\[\underset{q=-k_{j_0}}{\mathrm{Res}}\left[h(\b{\mathbf{t}},q) h(\b{\mathbf{t}}',-q)\right]
=- \underset{q=k_{j_0}}{\mathrm{Res}}\left[h(\b{\mathbf{t}},q) h(\b{\mathbf{t}}',-q)\right],
\]
which means finally all residues at $q=\pm k_i$ cancel,
and the remained residue at $q=0$ gives rise to the bilinear identity \eqref{bil-res-KdV}.

The proof is completed.
\end{proof}

\subsection{Algorithm for calculating residues}\label{sec-3-4}

In the following we formulate an algorithm to calculate residues from the identity \eqref{bil-res-KdV}
so that the bilinear KdV hierarchy with elliptic solitons can be obtained.

Redefining
$\tau'(\b{\mathbf{t}})=\sigma(t_1) \tau(\b{\mathbf{t}})$,
 the bilinear identity \eqref{bil-int-KdV} is written as
\begin{equation}\label{bil-int-KdVV}
\oint_{\Omega} \frac{\mathrm{d}q}{2\pi i}\, \frac{1}{\sigma^2(q)}
e^{\frac{1}{2}\theta_{[e]}(\b{\mathbf{t}}-\b{\mathbf{t}}',q)}
\tau'(\b{\mathbf{t}}+\b\varepsilon(q))  \tau'(\b{\mathbf{t}}'-\b\varepsilon(q)) =0.
\end{equation}
Then, introducing
$\b{\mathbf{t}}=\b{\mathbf{x}}+\b{\mathbf{y}}$ and $\b{\mathbf{t'}}=\b{\mathbf{x}}-\b{\mathbf{y}}$,
where $\b{\mathbf{x}}=(x_1, x_3, \cdots)$, $\b{\mathbf{y}}=(y_1, y_3, \cdots)$,
the above equation  is written as
\begin{equation}
\oint_{\Omega} \frac{\mathrm{d}q}{2\pi i}\, \frac{1}{\sigma^2(q)} e^{\theta_{[e]}(\b{\mathbf{y}},q)}
e^{(\b{\mathbf{y}}+\b\varepsilon(q))\cdot \mathbf{D}_{\b{\mathbf{x}}}}
\tau'(\b{\mathbf{x}}) \cdot \tau'(\b{\mathbf{x}}) =0,
\end{equation}
and from \eqref{bil-res-KdV} we have
\begin{equation}\label{bil-D-KdV}
\underset{q=0}{\mathrm{Res}}\left[ \frac{1}{\sigma^2(q)} e^{\theta_{[e]}(\b{\mathbf{y}},q)}
e^{(\b{\mathbf{y}}+\b\varepsilon(q))\cdot \mathbf{D}_{\b{\mathbf{x}}}}
\tau'(\b{\mathbf{x}}) \cdot \tau'(\b{\mathbf{x}})\right] =0,
\end{equation}
where $\mathbf{D}_{\b{\mathbf{x}}}=(D_{x_1}, D_{x_3}, D_{x_5},\cdots)$,
and for two vectors $\mathbf{a}=(a_1, a_2, \cdots)$ and $\mathbf{b}=(b_1, b_2, \cdots)$
their vector product is defined as $\mathbf{a}\cdot \mathbf{b}=\sum_{i=1} a_ib_i$.
Note that in the usual soliton case, the term $\frac{1}{\sigma^2(q)} e^{\theta_{[e]}(\b{\mathbf{y}},q)}$
in \eqref{bil-D-KdV}
is $\frac{1}{q^2} e^{\theta(\b{\mathbf{y}},1/q)}$ instead, cf.\cite{JM-RIMS-1983,MJD-book-1999};
$e^{\theta(\b{\mathbf{y}},1/q)}$ has a definite expansion in terms of $q$
but $e^{\theta_{[e]}(\b{\mathbf{y}},q)}$ does not.\footnote{One can formally write
$e^{\theta_{[e]}(\b{\mathbf{y}},q)}=\sum_{j=-\infty}^{\infty}h_j(\b{\mathbf{y}}) q^j$
but $h_j(\b{\mathbf{y}})$  can not be expressed explicitly.}
This is the obstacle when calculating the residue at $q=0$.
We need to design an algorithm to calculate the residue in \eqref{bil-D-KdV}.

To develop the algorithm we write \eqref{bil-D-KdV} into the following form
\begin{equation}\label{bil-D-KdV-1}
\underset{q=0}{\mathrm{Res}}\left[ e^{(\mathbf{\b B}+\mathbf{D}_{\b{\mathbf{x}}})\cdot \b{\mathbf{y}}}
\frac{1}{\sigma^2(q)}
e^{\xi(\wt{\mathbf{D}}_{\b{\mathbf{x}}},\,q)}
\tau'(\b{\mathbf{x}}) \cdot \tau'(\b{\mathbf{x}})\right] =0,
\end{equation}
where $\xi$ is defined as in \eqref{xi-theta} and
\[\mathbf{\b B}=-2(\zeta(q), \frac{\zeta''(q)}{2!}, \cdots \frac{\zeta^{(2n)}(q)}{(2n)!}, \cdots),~~
\wt{\mathbf{D}}_{\b{\mathbf{x}}}=(D_{x_1},0 ,\frac{1}{3}D_{x_3},0,\frac{1}{5}D_{x_5}, \cdots).
\]
For convenience, we introduce polynomials $\{p_n(\mathbf{t})\}$ by \cite{MJD-book-1999}
\begin{equation}\label{p-j}
e^{\xi(\mathbf{t},k)}=\sum_{n=0}^{\infty}p_n(\mathbf{t})k^n,
\end{equation}
where
\begin{align*}
  &p_n(\mathbf{t})=\sum_{\|\alpha\|=n}\frac{\mathbf{t}^\alpha}{\alpha!},\\
  &\mathbf{t}=(t_1,t_2,\cdots),\quad \alpha=(\alpha_1,\alpha_2,\cdots),\\
  &\|\alpha\|=\sum_{j=0}^{\infty}j\alpha_j,\quad \alpha!=\alpha_1!\alpha_2!\cdots,
  \quad \mathbf{t}^\alpha=t_1^{\alpha_1}t_2^{\alpha_2}\cdots.
\end{align*}
The first few $\{p_n(\mathbf{t})\}$'s are
\begin{align*}
& p_0(\mathbf{t})=1,~~ p_1(\mathbf{t})=t_1,~~ p_2(\mathbf{t})=\frac{1}{2}t^2_1+t_2,\\
& p_3(\mathbf{t})=\frac{1}{3!}t^3_1+t_1t_2+t_3,~~
p_4(\mathbf{t})=\frac{1}{4!}t^4_1+\frac{1}{2}t_1^2t_2+\frac{1}{2}t_2^2+t_1t_3+t_4.
\end{align*}
Meanwhile, $1/\sigma^2(q)$ is expanded as
\begin{equation}\label{sigma-expd}
\frac{1}{\sigma^2(q)}=\sum^{\infty}_{j=0}\mu_j q^{j-2}
=\frac{1}{q^2}(1+\frac{g_2}{120}q^4+\frac{g_3}{420}q^6+\frac{13g_2^2}{201600}q^8+\cdots).
\end{equation}
Then, the bilinear identity \eqref{bil-D-KdV-1} is written as
\begin{align}
0&=\underset{q=0}{\mathrm{Res}}\left[
\left(\sum_{|\b \beta|=0}^{\infty}\frac{(\b{\mathbf{B}}
+\mathbf{D}_{\b{\mathbf{x}}})^{\b \beta}}{\b \beta !}\b{\mathbf{y}}^{\b \beta}\right)
\left(\sum_{n=0}^{\infty}\sum_{j=0}^n p_j(\wt{\mathbf{D}}_{\b{\mathbf{x}}})\mu_{n-j} q^{n-2}\right)
\tau'(\b{\mathbf{x}}) \cdot \tau'(\b{\mathbf{x}})\right] \nonumber \\
&=\sum_{|\b \beta|=0}^{\infty}\underset{q=0}{\mathrm{Res}}\left[
\frac{(\mathbf{\b B}+\mathbf{D}_{\b{\mathbf{x}}})^{\b \beta}}{\b \beta !}
\left( \sum_{n=0}^{\infty}\sum_{j=0}^n p_j(\wt{\mathbf{D}}_{\b{\mathbf{x}}})\mu_{n-j} q^{n-2}\right)
\tau'(\b{\mathbf{x}}) \cdot \tau'(\b{\mathbf{x}})\right]\b{\mathbf{y}}^{\b \beta},
\label{bil-D-KdV-2}
\end{align}
where $\b \beta=(\beta_1, \beta_3, \cdots, \beta_{2j+1}, \cdots)$ and $|\b \beta|=\sum_{j=0}^{\infty}\beta_{2j+1}$.
Since $\{y_i\}$ are arbitrary, it then follows that
\begin{equation}\label{bil-D-KdV-3}
\underset{q=0}{\mathrm{Res}}\left[
(\mathbf{\b B}+\mathbf{D}_{\b{\mathbf{x}}})^{\b \beta}
\left( \sum_{n=0}^{\infty}\sum_{j=0}^n p_j(\wt{\mathbf{D}}_{\b{\mathbf{x}}})\mu_{n-j} q^{n-2}\right)
\tau'(\b{\mathbf{x}}) \cdot \tau'(\b{\mathbf{x}})\right]=0.
\end{equation}
In the above equation, $\sum_{n=0}^{\infty}\sum_{j=0}^n p_j(\wt{\mathbf{D}}_{\b{\mathbf{x}}})\mu_{n-j} q^{n-2}$
is a Laurent series of $q$ starting from $q^{-2}$.
For another term $(\mathbf{\b B}+\mathbf{D}_{\b{\mathbf{x}}})^{\b \beta}$,
first, given $\b\beta$, 
contains only finite number of nonzero $\beta_j$.
Thus assume  $\b \beta=(\beta_1, \beta_3, \cdots, \beta_{2n+1},$ $ 0, 0, \cdots)$ without loss of generality.
Meanwhile, we shall note that the entries in $\mathbf{\b B}$ have a form $B_{2j+1}=-2\frac{\zeta^{(2j)}(q)}{(2j)!}$
where $\zeta(q)$ can be expanded as \eqref{zeta-expd}.
Since $|\b \beta|$ is finite,
$(\mathbf{\b B}+\mathbf{D}_{\b{\mathbf{x}}})^{\b \beta}$ is a Laurent series of $q$ as well
and it starts from $q^{-||\b \beta||}$ where $||\b \beta||=\sum_{j=0}^n(2j+1)\beta_{2j+1}$ is finite and positive.
This means, to calculate the residue \eqref{bil-D-KdV-3}, it is sufficient to consider the finite number of terms from
$q^{-||\b \beta||}$ to $q^1$ in $(\mathbf{\b B}+\mathbf{D}_{\b{\mathbf{x}}})^{\b \beta}$
and the finite number of terms from $q^{-2}$ to $q^{||\b \beta||-1}$ in
$\sum_{n=0}^{\infty}\sum_{j=0}^n p_j(\wt{\mathbf{D}}_{\b{\mathbf{x}}})\mu_{n-j} q^{n-2}$.
Thus, we are led to the following theorem which formulates an algorithm to derive bilinear KdV hierarchy
through calculating residues \eqref{bil-D-KdV-3}.

\begin{theorem}\label{T-5}
The bilinear KdV hierarchy are given by
\begin{equation}\label{bil-D-KdV-4}
\underset{q=0}{\mathrm{Res}}\left[
(\mathbf{\b B}+\mathbf{D}_{\b{\mathbf{x}}})^{\b \beta}|_{\leq 1}
\left( \sum_{n=0}^{||\b \beta||-1}\sum_{j=0}^n p_j(\wt{\mathbf{D}}_{\b{\mathbf{x}}})\mu_{n-j} q^{n-2}\right)
\tau'(\b{\mathbf{x}}) \cdot \tau'(\b{\mathbf{x}})\right]=0,
\end{equation}
where $\b \beta$ is set of nonnegative integers with finite and positive $|\b \beta|$,
and $(\mathbf{\b B}+\mathbf{D}_{\b{\mathbf{x}}})^{\b \beta}|_{\leq 1}$
means those  terms of $q^j$ with $j\leq 1$ in the Laurent series of
$(\mathbf{\b B}+\mathbf{D}_{\b{\mathbf{x}}})^{\b \beta}$.
\end{theorem}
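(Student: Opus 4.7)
The plan is to derive \eqref{bil-D-KdV-4} from \eqref{bil-D-KdV-3} by a simple truncation argument: since both the operator-valued prefactor and the series $\sum_{n,j} p_j(\wt{\mathbf{D}}_{\b{\mathbf{x}}})\mu_{n-j}q^{n-2}$ are Laurent series in $q$ with known leading orders, extracting the coefficient of $q^{-1}$ uses only finitely many terms, and I would identify them explicitly.

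First I would determine the order of the Laurent expansion of $(\mathbf{\b B}+\mathbf{D}_{\b{\mathbf{x}}})^{\b\beta}$ at $q=0$. The components of $\mathbf{\b B}$ are $B_{2j+1}=-2\zeta^{(2j)}(q)/(2j)!$, and from the Laurent expansion of $\zeta$ (see \eqref{zeta-expd}), the lowest-order term of $B_{2j+1}$ is of degree $-(2j+1)$. Since the entries of $\mathbf{D}_{\b{\mathbf{x}}}$ are independent of $q$, multiplying $\b\beta_{2j+1}$ copies of the $(2j{+}1)$-th component contributes at worst a pole of order $(2j+1)\beta_{2j+1}$, so the full product $(\mathbf{\b B}+\mathbf{D}_{\b{\mathbf{x}}})^{\b\beta}$ has Laurent expansion starting no lower than $q^{-||\b\beta||}$ with $||\b\beta||=\sum_{j}(2j+1)\beta_{2j+1}$. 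Meanwhile, by the expansion \eqref{sigma-expd} of $1/\sigma^2(q)$, the second factor $S(q):=\sum_{n=0}^{\infty}\sum_{j=0}^{n}p_j(\wt{\mathbf{D}}_{\b{\mathbf{x}}})\mu_{n-j}q^{n-2}$ starts at $q^{-2}$.

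Next I would carry out the truncation. Writing $A(q)=\sum_{i\geq -||\b\beta||}a_i q^i$ for the first factor and $S(q)=\sum_{j\geq -2}s_j q^j$ for the second, the residue at $q=0$ of their product equals $\sum_{i+j=-1}a_i s_j$. Since $s_j=0$ for $j<-2$, a pair $(i,j)$ contributes only when $i=-1-j\leq 1$; hence in $A(q)$ only terms of degree $\leq 1$ matter, which is exactly the truncation $(\mathbf{\b B}+\mathbf{D}_{\b{\mathbf{x}}})^{\b\beta}|_{\leq 1}$. Symmetrically, since $a_i=0$ for $i<-||\b\beta||$, only $s_j$ with $j\leq ||\b\beta||-1$ contribute, truncating $S(q)$ to $\sum_{n=0}^{||\b\beta||-1}\sum_{j=0}^{n}p_j(\wt{\mathbf{D}}_{\b{\mathbf{x}}})\mu_{n-j}q^{n-2}$. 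Substituting these finite truncations into \eqref{bil-D-KdV-3} yields \eqref{bil-D-KdV-4}. Finally, the arbitrariness of $\b\beta$ (inherited from the arbitrariness of $\b{\mathbf{y}}$ via the step leading to \eqref{bil-D-KdV-3}) implies one obtains an identity for each choice of $\b\beta$, hence a hierarchy of bilinear equations.

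I anticipate no substantive obstacle; the argument is essentially bookkeeping. The only place requiring care is verifying that no cancellations between the $\mathbf{\b B}$ and $\mathbf{D}_{\b{\mathbf{x}}}$ contributions raise the pole order of $(\mathbf{\b B}+\mathbf{D}_{\b{\mathbf{x}}})^{\b\beta}$ unexpectedly, so that the truncation bound $q^{-||\b\beta||}$ is correct; this is immediate once one notes the expansion occurs only in the $\mathbf{\b B}$ factors whose singular parts are independent of the $\mathbf{D}_{\b{\mathbf{x}}}$ operators and cannot mutually cancel in a multinomial expansion.
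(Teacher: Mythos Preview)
Your proposal is correct and follows essentially the same reasoning as the paper: both argue that the Laurent expansion of $(\mathbf{\b B}+\mathbf{D}_{\b{\mathbf{x}}})^{\b\beta}$ starts at $q^{-||\b\beta||}$ while the second factor starts at $q^{-2}$, so the residue at $q=0$ depends only on the finite truncations stated in \eqref{bil-D-KdV-4}. Your explicit convolution formulation $\sum_{i+j=-1}a_is_j$ makes the truncation bounds transparent and is a slightly more detailed presentation of the same bookkeeping the paper sketches in the paragraph immediately preceding the theorem.
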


As examples, when $\b \beta=(3,0,0,\cdots)$, from the above theorem we find
\[(D_{x_1}^4-4 D_{x_1} D_{x_3}-g_2) \tau'\cdot \tau'=0,
\]
which is the bilinear KdV equation \eqref{bilinear-aa}.
For the cases $\b \beta=(2,1,0,\cdots)$ and $\b \beta=(5,0,0,\cdots)$, we have, respectively,
\begin{subequations}
\begin{align}
    &(D_{x_1}^6+4D_{x_1}^3D_{x_3}-32D_{x_3}^2+3g_2D_{x_1}^2-24g_3) \tau'\cdot \tau'=0, \label{bl-2}\\
    &(D_{x_1}^6 + 40 D_{x_1}^3 D_{x_3} + 40 D_{x_3}^2 - 216 D_{x_1} D_{x_5} + 3 g_2 D_{x_1}^2 - 24 g_3)
    \tau'\cdot \tau'=0. \label{bl-1}
	\end{align}
\end{subequations}
When $g_2$, $g_3$ are 0, these equations degenerate to those in the KdV hierarchy for the usual soliton case, 
cf.\cite{JM-RIMS-1983}.

\section{$\tau$ function, vertex operator and bilinear identity: KP}\label{sec-4}

Both the KdV and KP equation serve as representative models in integrable systems,
while the latter plays a more fundamental role in Sato's theory of integrable systems.
Based on the exploration in the previous section for the KdV equation,
in this section we will focus on the KP equation and investigate its $\tau$ function, vertex operator
and bilinear identity associated with elliptic solitons.

\subsection{Elliptic $N$-solitons and $\tau$ function in Hirota's form}\label{sec-4-1}

The KP equation is\footnote{Usually,
\[4u_t-u_{xxx}-6uu_x-3\alpha^2\partial^{-1}u_{yy}=0\]
is known as KP-I when $\alpha^2=-1$ and  KP-II when $\alpha^2=1$.
We consider KP-II without loss of generality.}
\begin{equation}\label{eq:kp}
	4u_t-u_{xxx}-6uu_x-3\partial^{-1}u_{yy}=0,
\end{equation}
or in the potential form $(u=v_x)$
\begin{equation}\label{eq:kp-v}
	4v_t-v_{xxx}-3(v_x)^2-3\partial^{-1}v_{yy}=0.
\end{equation}
By the transformation
\begin{align}
u=-2\wp(x)+2(\ln \tau)_{xx},
\end{align}
or
\begin{equation}
v=2\zeta(x)+\frac{g_2}{4} t+2(\ln\tau)_x,
\end{equation}
the KP equation is bilinearised as
\begin{equation}\label{bilinear-b}
(D_x^4- 4D_xD_t-12\wp(x)D_x^2+3D_y^2)\tau\cdot\tau=0,
\end{equation}
or
\begin{equation}
	(D_x^4-4D_xD_t+3 D_y^2-g_2)\tau'\cdot\tau'=0,
\end{equation}
where $\tau'=\sigma(x)\tau$. The bilinear KP equation allows elliptic soliton solutions.

\begin{theorem}\label{T-6}
The following Wronskian
\begin{equation}\label{slt:KPWronskian}
	\tau=|\widehat{N-1}|
\end{equation}
is a solution to the bilinear KP equation \eqref{bilinear-b}, where $\tau$ is composed by vector
$\varphi=(\varphi_1, \cdots, \varphi_N)^T$
with entries
\begin{subequations}
\begin{equation}
\varphi_j(x,y,t) = \Phi_x(k_j)e^{-\gamma(k_j)}+ \Phi_x(l_j)e^{-\gamma(l_j)},
\end{equation}
where
\begin{equation}
\gamma(k)=\zeta(k)x+ \wp(k)y-\frac{\wp'(k)}{2}t+\gamma^{(0)}(k),~~ k\in \mathbb{C}
\end{equation}
with a constant $\gamma^{(0)}(k)$ related to $k$.
\end{subequations}
Note that $\varphi_j$ satisfies
\begin{equation}\label{eq:kplax}
	\begin{split}
	&\varphi_{j,y}= -\varphi_{j,xx}+2\wp(x)\varphi_j,\\
	&\varphi_{j,t}=\varphi_{j,xxx}-3\wp(x)\varphi_{j,x}-\frac{3}{2}\wp'(x)\varphi_j.
	\end{split}
\end{equation}
\end{theorem}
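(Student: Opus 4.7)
My plan is to follow the Freeman--Nimmo Wronskian technique adapted to the elliptic setting, in close parallel with Theorem \ref{T-1} for the KdV equation (whose proof is carried out in Appendix \ref{A-3}). The new feature here is that the $y$-flow must interact consistently with the $\wp(x)$-corrected $x$-flow so as to produce the specific combination $(D_x^4-4D_xD_t-12\wp(x)D_x^2+3D_y^2)$ appearing in the bilinear KP equation \eqref{bilinear-b}.

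First I would verify that each $\varphi_j$ satisfies \eqref{eq:kplax}. For a single Lam\'e building block $\psi(k)=\Phi_x(k)e^{-\gamma(k)}$, the spatial Lam\'e equation gives $\psi_{xx}=(\wp(k)+2\wp(x))\psi$, while $\psi_y=-\wp(k)\psi$ immediately yields the first relation in \eqref{eq:kplax}. The $t$-equation follows by differentiating the Lam\'e equation in $x$ and applying the Weierstrass addition formula $\zeta(x+k)-\zeta(x)-\zeta(k)=\tfrac{1}{2}(\wp'(k)-\wp'(x))/(\wp(k)-\wp(x))$, which matches $\psi_x/\psi$ against precisely the quantity produced by $\psi_{xxx}-3\wp(x)\psi_x-\tfrac{3}{2}\wp'(x)\psi=\tfrac{1}{2}\wp'(k)\psi=\psi_t$. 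Because the right-hand sides of \eqref{eq:kplax} are $k$-independent differential operators on $\psi$, the sum $\Phi_x(k_j)e^{-\gamma(k_j)}+\Phi_x(l_j)e^{-\gamma(l_j)}$ of two Lam\'e functions with distinct parameters still satisfies the same pair of equations.

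Next I would differentiate $\tau=|\widehat{N-1}|$ using \eqref{eq:kplax} to express $\tau_x,\tau_{xx},\tau_{xxx},\tau_{xxxx},\tau_y,\tau_{yy},\tau_t$ as linear combinations of Wronskians whose columns are drawn from $\{0,1,\dots,N+1\}$. Because neither the $y$- nor the $t$-equation is gauge-equivalent to a constant-coefficient one, the higher derivatives inherit explicit $\wp(x)$ and $\wp'(x)$ corrections; these are then rearranged via the identity $\wp'(x)|\cdots|=\partial_x(\wp(x)|\cdots|)-\wp(x)\partial_x|\cdots|$ together with standard column-shuffling lemmas. Substituting into the bilinear operator, the collected $\wp(x)$- and $\wp'(x)$-terms should close up into exactly $-12\wp(x)D_x^2\tau\cdot\tau$, and the residual expression reduces to the classical Pl\"ucker (Jacobi) identity for $N\times N$ minors of an $N\times(N+2)$ matrix, just as in the usual KP Wronskian proof.

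The main obstacle is the bookkeeping of those $\wp(x),\wp'(x)$ corrections in the final cancellation: any deviation from the coefficient $-12$ in front of $\wp(x)D_x^2$ would leave an uncancelled potential term. This is consistent with the quasi-gauge property of Proposition \ref{P-B1}, which is precisely what selects \eqref{bilinear-b} as the correct bilinear form associated with the Lam\'e-type PWFs. Once this balance is secured, the remainder of the argument is identical to the classical KP Wronskian proof, and the detailed computation can be placed in Appendix \ref{A-3} alongside the KdV case.
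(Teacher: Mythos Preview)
Your approach is correct and mirrors the paper's. One point to sharpen: you attribute the cancellation of the $\wp(x),\wp'(x)$ corrections to column-shuffling identities, but the actual mechanism is more specific and worth naming. In the KdV proof the reduction from the raw expansion to the single Pl\"ucker relation relies on a trace-type identity obtained from Proposition~\ref{P-C2} with $\Omega_{j,s}=\wp(k_j)$, which is available precisely because each $\varphi_j$ satisfies the eigenvalue equation $\varphi_{j,xx}=(\wp(k_j)+2\wp(x))\varphi_j$. For KP no such eigenvalue equation holds---the second $x$-derivative is traded for $\partial_y$ by \eqref{eq:kplax}---so that trace identity is unavailable. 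What makes the KP argument close is that $3D_y^2\,\tau\cdot\tau=6(\tau_{yy}\tau-\tau_y^2)$ contributes exactly the same Wronskian combination that the trace identity supplied in the KdV case; the $y$-flow is thus not merely ``consistent'' with the elliptic correction but is the replacement for the missing eigenvalue identity. Once that substitution is recognised, the remainder collapses to the same three-term Pl\"ucker relation (Proposition~\ref{P-C1}) as in the KdV proof.
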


The proof will be given in Appendix \ref{A-3}.

To find out a corresponding Hirota's form of the $\tau$ function \eqref{slt:KPWronskian},
we consider \eqref{slt:KPWronskian} to be a summation of $2^N$ terms,
i.e. $\tau=\sum_{J \subset S}\tau_{J}^{}$,
where the generic term $\tau_{J}^{}$ is the Wronskian $|\widehat{N-1}|$ generated by
\begin{equation}\label{vphi-J}
\varphi=(\phi_1, \phi_2, \cdots, \phi_N)^T,
\end{equation}
in which $\phi_j=\Phi_x(k_j)e^{-\gamma(k_j)}$ for $j\in J$
and $\phi_j=\Phi_x(l_j) e^{-\gamma(l_j)}$ for $j\in S\backslash J$,
$J$ is a subset of $S=\{1,2,\cdots,N\}$.
In light of Lemma \ref{L-1}, we immediately get the following result.

\begin{lemma}\label{L-5}
The Wronskian $\tau_{J}^{}$ generated by vector \eqref{vphi-J} can be expressed as
\begin{align}
\tau_{J}^{}=&(-1)^{\frac{N(N-1)}{2}}
\frac{\sigma(x+\sum_{i\in J }k_{i}+\sum_{j\in S\backslash J}l_j)}{\sigma(x)}
\frac{\prod_{i\in J \atop j\in S\backslash J}\sigma(k_i-l_j)\mathrm{sgn}[j-i]}
{\left(\prod_{i\in J}\sigma^N(k_{i})\right)\left(\prod_{ j\in S\backslash J}\sigma^N(l_{i})\right)} \nonumber\\
 & \times \left(\prod_{i<j\in J}\sigma(k_i-k_j) \right)
 \left(\prod_{i<j\in S\backslash J}\sigma(l_{i}-l_j)\right)
 \exp\left[-\sum_{i\in J}\gamma(k_i)-\sum_{j\in S\backslash J}\gamma(l_j)\right],
\end{align}
especially, when $J$ is the empty set $\varnothing$, we have
\begin{equation}\label{g-KP}
g(x,y,t)=\tau_{\varnothing}^{}
=(-1)^{\frac{N(N-1)}{2}}
\frac{\sigma(x+\sum_{j\in S}l_j)}{\sigma(x)}
\frac{\prod_{i<j \in S}\sigma(l_i-l_j)}
{\prod_{ j\in S}\sigma^N(l_{i})} \mathrm{exp}\left(-\sum_{j\in S}\gamma(l_j)\right).
\end{equation}
\end{lemma}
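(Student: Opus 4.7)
The plan is to reduce the computation to a direct application of Lemma \ref{L-1}. For the generic term $\tau_J$, set $m_j := k_j$ for $j \in J$ and $m_j := l_j$ for $j \in S\setminus J$, so that the vector \eqref{vphi-J} reads $\varphi_j = \Phi_x(m_j) e^{-\gamma(m_j)}$. Since $\gamma(m_j) = \zeta(m_j) x + \wp(m_j) y - \tfrac{1}{2}\wp'(m_j) t + \gamma^{(0)}(m_j)$, the factor $c_j := \exp(-\wp(m_j) y + \tfrac{1}{2}\wp'(m_j) t - \gamma^{(0)}(m_j))$ is independent of $x$ and can be pulled out of the $j$-th column of the Wronskian. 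What remains is the Wronskian of $\widetilde\varphi_j := \Phi_x(m_j) e^{-\zeta(m_j) x}$.

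Next, with $\hat m_j := -m_j$, the identity $\Phi_x(m_j) e^{-\zeta(m_j) x} = \Phi_x(-\hat m_j) e^{\zeta(\hat m_j) x}$ places $\widetilde \varphi$ in exactly the form treated by Lemma \ref{L-1} (with the time/constant exponentials suppressed, as they were in \eqref{f-}). Applying Lemma \ref{L-1} with parameters $\hat m_j$ produces the closed form \eqref{W-phi-}; reverting $\hat m_j = -m_j$ by $\sigma(-k) = -\sigma(k)$ and $\zeta(-k) = -\zeta(k)$ converts this back into the $m_j$ variables. The overall $(-1)^N$ from \eqref{W-phi-}, the $(-1)^{N(N-1)/2}$ from $\prod_{i<j}\sigma(\hat m_i - \hat m_j)$, and the $(-1)^{N^2}$ from the denominator $\prod_i \sigma^N(\hat m_i)$ collapse to the clean factor $(-1)^{N(N-1)/2}$, using $N + N^2 \equiv 0 \pmod 2$. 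Re-multiplying by $\prod_j c_j$ recombines each $e^{-\zeta(m_i) x}\cdot c_i$ into $e^{-\gamma(m_i)}$.

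Finally, split the remaining product $\prod_{i<j}\sigma(m_i - m_j)$ according to whether both indices lie in $J$, both lie in $S\setminus J$, or one lies in each. The homogeneous pieces give $\prod_{i<j \in J}\sigma(k_i - k_j)$ and $\prod_{i<j \in S\setminus J}\sigma(l_i - l_j)$ verbatim. For a mixed pair with one index in $J$ and the other in $S\setminus J$, the oddness $\sigma(-x) = -\sigma(x)$ absorbs both orderings $i < j$ and $j < i$ into the single product $\prod_{i \in J,\, j \in S\setminus J}\sigma(k_i - l_j)\,\mathrm{sgn}[j - i]$. Assembling these pieces yields the asserted formula for $\tau_J$. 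The special case \eqref{g-KP} is then immediate by setting $J = \varnothing$: the mixed and within-$J$ products are empty, the $\sum_{i\in J} k_i$ contribution vanishes, and $m_j = l_j$ throughout.

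The main obstacle is the sign bookkeeping just described --- particularly aligning the $\mathrm{sgn}[j - i]$ factor with the fixed $i < j$ convention in the Wronskian expansion of $\prod_{i<j}\sigma(m_i - m_j)$, and verifying that the $(-1)$ powers arising from the $\hat m_j = -m_j$ substitution collapse to the single prefactor $(-1)^{N(N-1)/2}$. Once this is handled consistently, the entire statement follows directly from Lemma \ref{L-1}.
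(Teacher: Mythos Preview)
Your proposal is correct and follows exactly the route the paper indicates: the paper's proof is the single sentence ``In light of Lemma~\ref{L-1}, we immediately get the following result,'' and you have simply unpacked that sentence by introducing the uniform parameter $m_j$, applying Lemma~\ref{L-1} with $\hat m_j=-m_j$, and tracking the sign factors and the three-way split of $\prod_{i<j}\sigma(m_i-m_j)$. The sign computation $(-1)^{N}\cdot(-1)^{N(N-1)/2}\cdot(-1)^{N^2}=(-1)^{N(N-1)/2}$ and the identification of the mixed pairs with $\mathrm{sgn}[j-i]$ are both carried out correctly.
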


Next, for a function $f(x)$, we introduce notation $\wt f(x)=f(x-\sum_{j=1}^N l_j)$.
Then, similar to the KdV case, we have the following.

\begin{theorem}\label{T-7}
For the function $\tau$ in Wronskian form \eqref{slt:KPWronskian} and $g$  given by \eqref{g-KP},
\begin{equation}\label{f-tg-KP}
f=\frac{\widetilde \tau}{\wt g}
\end{equation}
is a solution to the bilinear KP equation \eqref{bilinear-b}, i.e.
\begin{equation}\label{bilinear-bf}
(D_x^4- 4D_xD_t-12\wp(x)D_x^2+3D_y^2)f \cdot f=0,
\end{equation}
and $f$ is written in Hirota's form as
\begin{equation}\label{f-Hirota-KP}
f=\sum_{\mu=0,1} \frac{\sigma(x+\sum_{i=1}^N \mu_i (k_i-l_i))}{\sigma(x)\prod^N_{i=1}\sigma^{\mu_i}(k_i-l_i)}
\mathrm{exp}\left(\sum^{N}_{j=1} \mu_j \theta_j+\sum^N_{1\leq i<j}\mu_i\mu_j a_{ij}\right),
\end{equation}
where the summation of $\mu$ means to take all possible $\mu_i=\{0,1\}$  for $ i=1,2,\cdots, N$,
\begin{subequations}\label{theta-A-KP}
\begin{align}
& \theta_i=-(\zeta(k_i)-\zeta(l_i)) x -(\wp(k_i)-\wp(l_i)) y +\frac{1}{2}(\wp'(k_i)-\wp'(l_i))t +\theta_{i}^{(0)}, ~~
\theta_{i}^{(0)}\in \mathbb{C},\label{theta-KP}\\
& e^{a_{ij}}=A_{ij}=\frac{\sigma(k_i-k_j)\sigma(l_i-l_j)}{\sigma(k_i-l_j)\sigma(l_i-k_j)}.\label{A-ij-KP}
\end{align}
\end{subequations}
\end{theorem}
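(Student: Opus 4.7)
The proof will mirror the two-part structure of Theorem \ref{T-2}, adapted to the KP setting: first I will show that $f=\widetilde\tau/\widetilde g$ satisfies \eqref{bilinear-bf}, and then I will recast the Wronskian quotient in Hirota form \eqref{f-Hirota-KP}. The adaptation is guided by the observation that, compared with the KdV case, the shift is by $-\sum_j l_j$ rather than $+\sum_i k_i$, and the function $g$ now carries exponential (constant-exponent) dependence on both $y$ and $t$.

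For the first part, start from Theorem \ref{T-6}, which gives $(D_x^4-4D_xD_t-12\wp(x)D_x^2+3D_y^2)\tau\cdot\tau=0$. Shifting $x\mapsto x-\sum_{j=1}^N l_j$ yields the same equation with $\wp(x)$ replaced by $\wp(x-\sum_j l_j)$ acting on $\widetilde\tau\cdot\widetilde\tau$. Write $\widetilde\tau=f\widetilde g$. Since $\widetilde g$ depends on $y$ and $t$ only through the exponential $\exp(-\sum_j\gamma(l_j))$, whose exponent is linear in $(y,t)$ with coefficients independent of $x$, we have $D_y^{\,n}\widetilde g\cdot\widetilde g=0$, $D_t\widetilde g\cdot\widetilde g=0$, and $D_xD_t\widetilde g\cdot\widetilde g=0$. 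In the $x$-direction, the same identities used in the KdV proof,
\[D_x^2\widetilde g\cdot\widetilde g=2\bigl(\wp(x)-\wp(x-\hbox{$\sum_j l_j$})\bigr)\widetilde g^{\,2},\qquad D_x^4\widetilde g\cdot\widetilde g=12\wp(x-\hbox{$\sum_j l_j$})\,D_x^2\widetilde g\cdot\widetilde g,\]
hold by \eqref{Dx2} and \eqref{Dx4}. Applying the quasi-gauge property (Proposition \ref{P-B1}) to the $D_x$-part and the trivial-gauge property to $D_y$ and $D_xD_t$ then collapses the shifted equation to $\widetilde g^{\,2}(D_x^4-4D_xD_t-12\wp(x)D_x^2+3D_y^2)f\cdot f=0$, which is \eqref{bilinear-bf}.

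For the second part, decompose $\tau=\sum_{J\subset S}\tau_J$ as in Lemma \ref{L-5}, where the column vector generating $\tau_J$ selects the $k_j$-piece of $\varphi_j$ for $j\in J$ and the $l_j$-piece for $j\in S\setminus J$. Dividing each $\widetilde{\tau_J}$ by $\widetilde g=\widetilde{\tau_{\varnothing}}$ cancels the common factor $\prod_{j\in S\setminus J}\sigma^N(l_j)^{-1}$ and the common exponential $\exp(-\sum_{j\in S\setminus J}\gamma(l_j))$, leaving an expression of the form
\[\frac{\widetilde{\tau_J}}{\widetilde g}=\frac{\sigma\bigl(x+\sum_{i\in J}(k_i-l_i)\bigr)}{\sigma(x)\prod_{i\in J}\sigma(k_i-l_i)}\,C_J\,\exp\Bigl(\sum_{i\in J}\theta_i\Bigr),\]
where $\theta_i$ is read off from $\gamma(k_i)-\gamma(l_i)$ up to constant absorption into $\theta_i^{(0)}$, and $C_J$ is a product of $\sigma$-ratios inherited from Lemma \ref{L-5}. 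Parametrising $J$ by the indicator $\mu=(\mu_i)$ with $\mu_i=1\Leftrightarrow i\in J$ matches the summation in \eqref{f-Hirota-KP} term by term.

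The main obstacle will be the combinatorial identification $C_J=\prod_{i<j\in J}A_{ij}$ with $A_{ij}$ as in \eqref{A-ij-KP}. The raw $C_J$ produced from Lemma \ref{L-5} together with $1/\widetilde g$ is a product over the bipartition $(J,S\setminus J)$ involving the cross terms $\sigma(k_i-l_j)\,\mathrm{sgn}[j-i]$ for $i\in J,\,j\in S\setminus J$ and the diagonal factors $\sigma(l_i-l_j)^{-1}$; the task is to telescope these through the identity $\sigma(k_i-l_j)\sigma(l_i-l_j)^{-1}$ so that only intra-$J$ quadruples $(k_i,l_i,k_j,l_j)$ survive, assembling precisely into $A_{ij}$. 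This is the KP analogue of the $\beta_i-\alpha_i$ manipulation in the proof of Theorem \ref{T-2}, and the sign factors $\mathrm{sgn}[j-i]$ must be tracked carefully because the bipartition $(J,S\setminus J)$ replaces the $\varepsilon=\pm1$ dichotomy of the KdV case.
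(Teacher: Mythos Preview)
Your proposal is correct and follows essentially the same two-part structure as the paper's proof: the quasi-gauge reduction via Proposition~\ref{P-B1} and identity~\eqref{id-abcd} for Part~1, and the $\beta_i-\alpha_i$ telescoping (exactly as in Theorem~\ref{T-2}) to identify $C_J=\prod_{i<j\in J}A_{ij}$ for Part~2. One sign slip to fix: since $\widetilde g\propto \sigma(x)/\sigma(x-\sum_j l_j)$, the correct identity is $D_x^2\widetilde g\cdot\widetilde g=2\bigl(\wp(x-\sum_j l_j)-\wp(x)\bigr)\widetilde g^{\,2}$, and it is this sign that makes the $-12\wp(x-\sum_j l_j)D_x^2$ term collapse to $-12\wp(x)D_x^2$ after combining with the $6(D_x^2\widetilde g\cdot\widetilde g)D_x^2 f\cdot f$ correction.
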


\begin{proof}
First, by virtue of the quasi-gauge property of bilinear equations 
(see Proposition \ref{P-B1})
and making use of identity \eqref{id-abcd},
equation \eqref{bilinear-bf} can be derived from
\begin{equation*}
(D_x^4-4D_xD_t-12\wp(x-\hbox{$\sum_{i=1}^N l_i$})D_x^2+3 D_y^2)\wt \tau\cdot \wt\tau=0,
\end{equation*}	
where $\wt \tau=f \wt g$.

Next, to write $\wt \tau/\wt g$ into an explicit form, let us look at the generic term $\wt \tau_{J}^{}/\wt g$ in $f$.
It follows from Lemma \ref{L-5} that
\begin{equation}\label{f-tg-KP2}
\frac{\wt \tau_{J}^{}}{\wt g}
=\frac{\sigma(x+\sum_{i\in J} (k_i-l_i))}{\sigma(x)\prod_{i\in J}\sigma(k_i-l_i)}\cdot
\left(\prod_{i<j\in J}\frac{\sigma(k_i-k_j)}{\sigma(l_i- l_j)}\right)
\left(\prod_{i\in J}e^{\beta_i}\right)
\exp \left[-\sum_{i\in J} (\wt\gamma(k_i)-\wt\gamma(l_i))\right],
\end{equation}
where $\wt\gamma(k)=\gamma(k)|_{x\to x-\sum_{i=1}^{N}l_i}$ and
\begin{equation*}
e^{\beta_i}=\sigma(k_i-l_i)\frac{\sigma^N(l_i)}{\sigma^N(k_i)}
\prod_{j\in S\setminus J}\frac{\sigma(k_{i}- l_j)}{\sigma(l_{i}-l_j)}.
\end{equation*}
In particular, if $J$ contains a single element, e.g.  $J=\{i\}$, we have
\begin{equation*}
\frac{\wt \tau_{\{i\}}}{\wt g}
=\Phi_x(k_i-l_i)e^{\alpha_{i}} e^{-\wt\gamma(k_{i})+\wt\gamma(l_{i})},
\end{equation*}
where
\begin{equation*}
e^{\alpha_{i}}=\sigma(k_i-l_i)\frac{\sigma^N(l_{i})}{\sigma^N(k_{i})}
\prod_{j\in S\atop j\neq i}\frac{\sigma(k_{i}- l_j)}{\sigma(l_{i}-l_j)}.
\end{equation*}
Define $\theta^{(0)}_i=\alpha_i-\gamma^{(0)}(k_i)+\gamma^{(0)}(l_i)+\sum_{j=1}^Nl_j(\zeta(k_i)-\zeta(l_i))$
such that $e^{\alpha_{i}} e^{-\wt\gamma(k_{i})+\wt\gamma(l_{i})}=e^{\theta_i}$
where $\theta_i$ is defined as in \eqref{theta-KP}.
Then, the generic term \eqref{f-tg-KP2} in $f$ is written into
\begin{equation*}
\frac{\wt \tau_{J}^{}}{\wt g}
=\frac{\sigma(x+\sum_{i\in J} (k_i-l_i))}{\sigma(x)\prod_{i\in J}\sigma(k_i-l_i)}
\left(\prod_{i<j\in J}A_{ij}\right) \exp\left(\sum_{i\in J}\theta_i\right),
\end{equation*}
where we have made use of
\[\prod_{i \in J}e^{\beta_i-\alpha_i}=\prod_{i,j\in J \atop i\neq j}\frac{\sigma(l_{i}- l_j)}{\sigma(k_{i}-l_j)}
=\prod_{i<j\in J}\frac{\sigma^2(l_{i}- l_j)}{\sigma(k_{i}-l_j)\sigma(l_{i}-k_j)}\]
and $A_{ij}$ is defined as in \eqref{A-ij-KP}.
It then turns out that $f=\sum_{J\subset S}\wt \tau_{J}^{}/\wt g$ takes the explicit Hirota's form \eqref{f-Hirota-KP}.

\end{proof}

\subsection{Vertex operator}\label{sec-4-2}

We now present a vertex operator that can generate  $\tau$ functions for
elliptic soliton solutions of the KP hierarchy.
Introduce a vertex operator
\begin{equation}\label{eq:vertexkp}
X(k,l)=\Phi_{t_1}(k-l)e^{\xi_{[e]}(\mathbf{t},k)-\xi_{[e]}(\mathbf{t},l)}
e^{\xi(\wt \partial,k)-\xi(\wt \partial,l)},
\end{equation}
where $\xi$ and $\xi_{[e]}$ are defined in \eqref{xi-theta},
and
$\widetilde{\partial}=(\partial_{t_1},\frac{1}{2} \partial_{t_2},\cdots,\frac{1}{n} \partial_{t_n},\cdots)$.
Similar to the relation \eqref{Aij-pq} and Lemma \ref{L-2}, for $A_{ij}$ defined in \eqref{A-ij-KP},
it can be proved that
\begin{equation}\label{Aij-pq-KPP}
\ln A_{ij}=\xi_{[e]}(\varepsilon(k_j)-\varepsilon(l_j), k_i) - \xi_{[e]}(\varepsilon(k_j)-\varepsilon(l_j), l_i)
\end{equation}
and
\begin{subequations}
\begin{equation}\label{XX-Aij}
X(k_i,l_i)  X(k_j,l_j)= A_{i,j}\frac{\sigma(t_1+k_i-l_i+k_j-l_j)}{\sigma(t_1)\sigma(k_i-l_i)\sigma(k_j-l_j)}\,
:\!X(k_i,l_i)  X(k_j,l_j)\!:,
\end{equation}
where  $\varepsilon(q)=(q, \frac{q^2}{2}, \frac{q^3}{3},\cdots,\frac{q^{n}}{n},\cdots)$,
and by $:\!\!X\!\!:$ we denote the normalization of the exponential part of the vertex operator $X$
by moving all differential operators in $X$ to the right,
e.g., here we have
\begin{equation}
:\!X(k_i,l_i)  X(k_j,l_j)\!:\, =
e^{\xi_{[e]}(\mathbf{t},k_i)-\xi_{[e]}(\mathbf{t},l_i)}e^{\xi_{[e]}(\mathbf{t},k_j)-\xi_{[e]}(\mathbf{t},l_j)}
e^{\xi(\wt \partial,k_i)-\xi(\wt \partial,l_i)}e^{\xi(\wt \partial,k_j)-\xi(\wt \partial,l_j)}.
\end{equation}
\end{subequations}
A more general version of \eqref{XX-Aij} is
\begin{equation}\label{XXX-Aij}
\prod^{N}_{i=1}X(k_i,l_i) = \left(\prod_{1\leq i<j \leq N} A_{i,j}\right)
\frac{\sigma(t_1+\sum_{i=1}^N(k_i-l_i))}{\sigma(t_1)\prod_{i=1}^{N}\sigma(k_i-l_i)}\,:\! \prod^{N}_{i=1}X(k_i,l_i) \!:.
\end{equation}
It then follows that
\[X^2(k,l)=0,~~ e^{cX(k,l)}=1+cX(k,l),~~
e^{cX(k,l)}\circ 1=1+c\, \Phi_{t_1}(k-l) e^{\xi_{[e]}(\mathbf{t},k)-\xi_{[e]}(\mathbf{t},l)},\]
which leads us to the following result for elliptic $N$-soliton solution.

\begin{theorem}\label{T-8}
For the KP hierarchy, its $\tau$ function of elliptic $N$-soliton solution,
\begin{equation}\label{tau-KP}
\tau_N(\mathbf{t})=\sum_{J\subset S}\left(\prod_{i\in J}c_i\right)
\left(\mathop{\rm{\prod}}_{i<j \in J}A_{ij}\right)
\frac{\sigma(t_1+\sum_{i\in J}(k_i-l_i))}{\sigma(t_1)\prod_{i\in J}\sigma(k_i-l_i)}
e^{\sum_{i\in J}(\xi_{[e]}(\mathbf{t}, k_i)-\xi_{[e]}(\mathbf{t},l_i))},
\end{equation}
is generated by the vertex operator \eqref{eq:vertexkp} via
\begin{equation}
\tau_N(\mathbf{t})=e^{c_{N}X(k_N,l_N)}\cdots e^{c_{2}X(k_2,l_2)}e^{c_{1}X(k_1,l_1)}\circ 1,
\end{equation}
or via transformation
\begin{equation}
\tau_{N}(\mathbf{t})=e^{c_{N}X(k_{N},l_{N})}\circ \tau_{N-1}(\mathbf{t}),~ ~ \tau_{0}(\mathbf{t})=1.
\end{equation}
In addition, $\tau_N(\mathbf{t})$ is a doubly periodic function with respect to any $k_i$ and $l_j$
for $i,j=1,2, \cdots, N$.
\end{theorem}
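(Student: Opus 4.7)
The plan splits into two independent pieces: (i) verify that the vertex operator generates the closed-form $\tau$ function \eqref{tau-KP}, and (ii) check double periodicity in each of the spectral parameters $k_i, l_j$.

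For (i), the starting point is the nilpotency $X^2(k,l)=0$ stated just before the theorem, which gives $e^{c_i X(k_i,l_i)} = 1 + c_i X(k_i,l_i)$. Hence
\begin{equation*}
e^{c_N X(k_N,l_N)}\cdots e^{c_1 X(k_1,l_1)}\circ 1
= \sum_{J\subset S} \Bigl(\prod_{i\in J} c_i\Bigr)\, \Bigl(\prod_{i\in J}^{\curvearrowright} X(k_i,l_i)\Bigr)\circ 1,
\end{equation*}
where each subset $J\subset S$ records which factors contribute $X$ rather than $1$, and the arrow indicates that the product is taken in the order induced by that of the exponentials. For each $J$ I would apply the normal-ordering formula \eqref{XXX-Aij} (restricted to the indices in $J$) to pull out the prefactor $\bigl(\prod_{i<j\in J} A_{ij}\bigr) \frac{\sigma(t_1+\sum_{i\in J}(k_i-l_i))}{\sigma(t_1)\prod_{i\in J}\sigma(k_i-l_i)}$. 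After normal ordering, all differential pieces stand to the right of the multiplicative exponentials and act on the constant $1$ trivially, so only $\prod_{i\in J} e^{\xi_{[e]}(\mathbf{t},k_i)-\xi_{[e]}(\mathbf{t},l_i)}$ survives. Summing over $J$ reproduces \eqref{tau-KP} verbatim. The recursive form $\tau_N = e^{c_N X(k_N,l_N)}\circ \tau_{N-1}$ then follows by splitting off the outermost exponential; one must also observe, using \eqref{XX-Aij}, that the prefactor and the normal-ordered exponential are symmetric in $(i,j)$, so the operators $X(k_i,l_i)$ effectively commute and the ordering chosen for the product is immaterial.

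For (ii), I would exhibit each summand as a product of factors, each doubly periodic in every $k_i$ and $l_i$. Three kinds of factors arise: the couplings $A_{ij}$ in \eqref{A-ij-KP}, the $\sigma$-ratio $\frac{\sigma(t_1+\sum_{i\in J}(k_i-l_i))}{\sigma(t_1)\prod_{i\in J}\sigma(k_i-l_i)}$, and the exponential $\exp\bigl(\sum_{i\in J}(\xi_{[e]}(\mathbf{t},k_i)-\xi_{[e]}(\mathbf{t},l_i))\bigr)$. In $A_{ij}$, numerator and denominator each contain two $\sigma$-factors whose arguments have matching linear shifts in $k_i$ (or $l_i$); the quasi-periodicity factors of the form $-e^{2\eta(\cdot)}$ cancel pairwise. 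In the exponential, split
\begin{equation*}
\xi_{[e]}(\mathbf{t},k)-\xi_{[e]}(\mathbf{t},l)
= -(\zeta(k)-\zeta(l))\,t_1 + \sum_{n\geq 2}(-1)^n\frac{\zeta^{(n-1)}(k)-\zeta^{(n-1)}(l)}{(n-1)!}\,t_n;
\end{equation*}
the sum over $n\geq 2$ is already doubly periodic since all derivatives $\zeta^{(n-1)}$ with $n\geq 2$ are doubly periodic. The remaining $-(\zeta(k_i)-\zeta(l_i))t_1$ combines with the $\sigma$-ratio in exactly the way the classical Lamé function $\Phi_{t_1}(k)e^{-\zeta(k)t_1}$ acquires its double periodicity: the shift of $k_i$ by a full period produces a factor $e^{-2\eta t_1}$ from $\zeta$ that cancels against the matching $e^{2\eta t_1}$ coming from the $\sigma$-quasi-periodicity in numerator versus denominator. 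This is the content of Proposition~\ref{P-A1}, applied once per index.

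The main obstacle will be bookkeeping in step (i): making sure that the prefactor produced by \eqref{XXX-Aij}, when applied to a proper subset $J$ rather than to the full index set $S$, has precisely the form appearing in \eqref{tau-KP}, and that no spurious differential-operator residues are left over when the normal-ordered exponential acts on $1$. A secondary point is that double periodicity in the $l_i$ variables requires exactly the same cancellations with the roles of numerator and denominator $\sigma$'s interchanged; I would simply invoke the $k\leftrightarrow l$ symmetry of $A_{ij}$ and of the $\sigma$-ratio rather than redo the computation.
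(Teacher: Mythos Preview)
Your argument for part~(i) is correct and essentially follows the paper's approach: the paper establishes the normal-ordering formula \eqref{XXX-Aij} and the nilpotency $X^2(k,l)=0$ just before the theorem, and then states that the proof proceeds as in the KdV case (Theorem~\ref{T-3}), which is exactly the expansion-over-subsets computation you outline.

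Part~(ii), however, contains a genuine error. Your claim that $A_{ij}$ is doubly periodic in $k_i$ on its own is false. In
\[
A_{ij}=\frac{\sigma(k_i-k_j)\,\sigma(l_i-l_j)}{\sigma(k_i-l_j)\,\sigma(l_i-k_j)}
\]
only \emph{one} numerator factor and \emph{one} denominator factor involve $k_i$; under $k_i\mapsto k_i+2w_s$ the ratio acquires the factor $e^{2\zeta(w_s)(l_j-k_j)}\neq 1$. The full summand \emph{is} doubly periodic in each $k_{i_0}$, but only because the residual contribution $\exp\bigl(2\zeta(w_s)\sum_{j\in J,\,j\neq i_0}(l_j-k_j)\bigr)$ coming from the product of the $A_{ij}$'s cancels against the leftover pieces of the $\sigma$-ratio and the $-\zeta(k_{i_0})t_1$ term. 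Your three-way decomposition into separately periodic blocks therefore does not work; the quasi-periodicity factors must be tracked jointly across all three pieces.

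The paper sidesteps this bookkeeping by a cleaner route: it exploits the commutativity of the $X(k_i,l_i)$ (which you do note) to permute $e^{c_{i_0}X(k_{i_0},l_{i_0})}$ to the \emph{innermost} position, so that it acts first on $1$. Then $e^{c_{i_0}X(k_{i_0},l_{i_0})}\circ 1 = 1 + c_{i_0}\,\Phi_{t_1}(k_{i_0}-l_{i_0})\,e^{\xi_{[e]}(\mathbf{t},k_{i_0})-\xi_{[e]}(\mathbf{t},l_{i_0})}$ is doubly periodic in $k_{i_0}$ and $l_{i_0}$ by precisely the Lam\'e-function mechanism you describe (Proposition~\ref{P-A1}); the remaining vertex operators involve neither $k_{i_0}$ nor $l_{i_0}$ and hence preserve this property. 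This avoids having to balance the $A_{ij}$ contributions against the rest by hand.
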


The proof is similar to Theorem \ref{T-3} for the KdV equation and we skip it.
Note also that the single Lam\'e-type PWF of the KP hierarchy is
\begin{equation}\label{PWF-KP-e}
\rho= X(k,l) \circ 1= \Phi_{t_1}(k-l) e^{\xi_{[e]}(\mathbf{t},k)-\xi_{[e]}(\mathbf{t},l)}.
\end{equation}

\subsection{Bilinear identity}\label{sec-4-3}

Define two functions of $q$,
\begin{subequations}\label{h-q-KP}
\begin{align}
	 h (\mathbf{t},q)&= X(\mathbf{t},q)\tau (\mathbf{t}),
	\\
	 h^*(\mathbf{t},q)&= X^*(\mathbf{t},q)\tau (\mathbf{t}),
	\end{align}
\end{subequations}
where $\tau (\mathbf{t})=\tau_N^{} (\mathbf{t})$ is given by \eqref{tau-KP},
$X(\mathbf{t},q)$ and $X^*(\mathbf{t},q)$ are vertex operators
\begin{subequations}
\begin{align}
	 &X(\mathbf{t},q)=\frac{\sigma(t_1+q)}{\sigma(q)} e^{\xi_{[e]}(\mathbf{t},q)}
     e^{\xi(\wt{\mathbf{\partial}},q)},
	\\
	& X^*(\mathbf{t},q)=\frac{\sigma(t_1-q)}{\sigma(-q)} e^{-\xi_{[e]}(\mathbf{t},q)}
     e^{-\xi(\wt{\mathbf{\partial}},q)}.
	\end{align}
\end{subequations}
Similar to Lemma \ref{L-4} for the KdV case, we can write $h (\mathbf{t},q)$ and  $h^*(\mathbf{t},q)$
in their explicit forms,
\begin{align*}
h(\mathbf{t},q)=&\frac{\sigma(t_1+q)}{\sigma(q)}  e^{\xi_{[e]}(\mathbf{t},q)}  \\
	& \times	\sum_{J\subset S}\left(\mathop{\rm{\prod}}_{i<j\in J}A_{i,j}\right)
\frac{\sigma(t_1+\sum_{i\in J}(k_i-l_i)+q)}{\sigma(t_1+q)\prod_{i\in J}\sigma(k_i-l_i)}
		\left(\prod_{i\in J}c_i\frac{\sigma(l_i)\sigma(q-k_i)}{\sigma(k_i)\sigma(q-l_i)}
e^{\theta_{[e]}(\mathbf{t},k_i,l_i)} \right),\\
h^*(\mathbf{t},q)=&\frac{\sigma(t_1-q)}{\sigma(-q)}  e^{-\xi_{[e]}(\mathbf{t},q)}  \\
	& \times	\sum_{J\subset S}\left(\mathop{\rm{\prod}}_{i<j\in J}A_{i,j}\right)
\frac{\sigma(t_1+\sum_{i\in J}(k_i-l_i)-q)}{\sigma(t_1-q)\prod_{i\in J}\sigma(k_i-l_i)}
		\left(\prod_{i\in J}c_i\frac{\sigma(k_i)\sigma(q-l_i)}{\sigma(l_i)\sigma(q-k_i)}
e^{\theta_{[e]}(\mathbf{t},k_i,l_i)} \right),
\end{align*}
where $\theta_{[e]}(\mathbf{t},k_i,l_i)=\xi_{[e]}(\mathbf{t},k_i)-\xi_{[e]}(\mathbf{t},l_i)$.
Then it can be verified that both functions are doubly periodic with respect to $q$
with the same periods as $\wp(q)$.

Obviously, the double-periodic property yields a bilinear identity for the KP hierarchy.

\begin{theorem}\label{T-9}
For the functions $h(\mathbf{t},q)$ and $h ^*(\mathbf{t}',q)$ defined in \eqref{h-q-KP}, we have
\begin{equation}\label{bil-int-KP}
\oint_{\Omega} \frac{\mathrm{d}q}{2\pi i}\, h(\mathbf{t},q)\, h^*({\mathbf{ t}}',q) =0,
\end{equation}
which gives rise to
\begin{equation}\label{bil-res-KP}
\underset{q=0}{\mathrm{Res}}\left[ h(\mathbf{t},q)\, h^*({\mathbf{ t}}',q)\right]=0,
\end{equation}
where the  contour $\Omega$ takes the boundary, anticlockwise, of the open fundamental period parallelogram $\mathbb{D}$
(see Fig.\ref{Fig-1}) and all $\{k_i\}$ and $\{l_i\}$ are distinct and belong to $\mathbb{D}$.
\end{theorem}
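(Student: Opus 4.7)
The plan is to mirror the strategy of Theorem \ref{T-4} for the KdV case. First I would verify that the integrand $h(\mathbf{t},q)\,h^*(\mathbf{t}',q)$ is doubly periodic in $q$ with the periods of $\wp(q)$. Inspecting the explicit form of $h(\mathbf{t},q)$ given just before the theorem, one isolates the factor $\frac{\sigma(t_1+q)}{\sigma(q)}e^{-\zeta(q)t_1}$, which is doubly periodic by Proposition \ref{P-A1}; the remaining terms of $\xi_{[e]}(\mathbf{t},q)$ involve only $\zeta^{(2n)}(q)$ with $n\geq 1$ and are automatically doubly periodic. Each summand then contributes a factor of the shape $\frac{\sigma(t_1+\sum_{i\in J}(k_i-l_i)+q)}{\sigma(t_1+q)}\prod_{i\in J}\frac{\sigma(q-k_i)}{\sigma(q-l_i)}$, which is doubly periodic in $q$ since the total shifts in the quasi-periodicity exponents of $\sigma$ cancel. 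The same analysis applies to $h^*(\mathbf{t}',q)$, so the product is doubly periodic as required.

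Second, I would locate the singularities of the integrand inside $\mathbb{D}$. From the explicit formulas, $h(\mathbf{t},q)$ has only simple poles at $q=l_i$ (arising from $\sigma(q-l_i)$ in the summands with $i\in J$) together with an essential singularity at $q=0$, while $h^*(\mathbf{t}',q)$ has only simple poles at $q=k_i$ plus an essential singularity at $q=0$. Under the hypothesis that $\{k_i\}\cup\{l_i\}$ are distinct points of $\mathbb{D}$, the integrand is analytic on a neighbourhood of $\Omega$, and the double-periodicity established above forces the contour integral to vanish, yielding \eqref{bil-int-KP}. The residue theorem then gives
\begin{equation*}
\underset{q=0}{\mathrm{Res}}\left[h\,h^*\right]+\sum_{j=1}^{N}\Bigl(\underset{q=k_j}{\mathrm{Res}}\left[h\,h^*\right]+\underset{q=l_j}{\mathrm{Res}}\left[h\,h^*\right]\Bigr)=0,
\end{equation*}
so \eqref{bil-res-KP} reduces to demonstrating pairwise cancellation $\mathrm{Res}_{q=k_{j_0}}+\mathrm{Res}_{q=l_{j_0}}=0$ for each $j_0$.

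Third, and this is the main step, I would carry out that cancellation by a calculation patterned on the KdV proof. At $q=k_{j_0}$, $h^*(\mathbf{t}',q)$ contributes a simple-pole residue coming only from subsets $J'\ni j_0$ in its defining sum, while $h(\mathbf{t},k_{j_0})$ is the evaluation of an analytic function in which every summand indexed by $J\ni j_0$ is killed by the factor $\sigma(q-k_{j_0})$. Thus $\mathrm{Res}_{q=k_{j_0}}[h\,h^*]$ factorises as a product of two sums, each running over subsets of $S\setminus\{j_0\}$, decorated by an exponential $e^{\xi_{[e]}(\mathbf{t},k_{j_0})-\xi_{[e]}(\mathbf{t}',l_{j_0})}$-type prefactor and by products of $A_{ij}$'s and ratios $\frac{\sigma(k_i-k_{j_0})\sigma(l_i-l_{j_0})}{\sigma(k_i-l_{j_0})\sigma(l_i-k_{j_0})}$. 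A mirror calculation at $q=l_{j_0}$ exchanges the roles of $k_{j_0}$ and $l_{j_0}$: the pole now comes from $h$ and the analytic evaluation from $h^*$. Using the oddness of $\sigma$ in the form $\sigma(l_{j_0}-k_{j_0})=-\sigma(k_{j_0}-l_{j_0})$ to extract a global sign, one finds that the two residues are negatives of one another.

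The hard part will be the bookkeeping in the last step: making the two sum-of-products expressions visibly equal up to the sign produced by $\sigma$'s oddness, while verifying that the $\theta_{[e]}$-exponentials, the $A_{ij}$'s, and the quotients of $\sigma$'s all reassemble in matching form after the limit is taken. Once the identity $\mathrm{Res}_{q=k_{j_0}}+\mathrm{Res}_{q=l_{j_0}}=0$ is established for every $j_0$, \eqref{bil-res-KP} follows from the residue theorem applied to the vanishing contour integral.
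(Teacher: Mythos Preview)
Your proposal is correct and follows essentially the same route as the paper: establish double-periodicity of the integrand (which the paper does just before the theorem), deduce the vanishing of the contour integral, and then prove the pairwise cancellation $\mathrm{Res}_{q=k_{j_0}}+\mathrm{Res}_{q=l_{j_0}}=0$ by explicitly computing each residue as a product of two sums over $J\subset S\setminus\{j_0\}$, exactly as in the KdV proof of Theorem~\ref{T-4}. One small slip: in the double-periodicity check you write that the remaining terms of $\xi_{[e]}(\mathbf{t},q)$ involve only $\zeta^{(2n)}(q)$ with $n\geq 1$; in the KP case $\xi_{[e]}$ contains \emph{all} derivatives $\zeta^{(n-1)}(q)$ for $n\geq 2$, not just the even ones, but since every $\zeta^{(m)}$ with $m\geq 1$ is already doubly periodic (being a derivative of $-\wp$) your conclusion is unaffected.
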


\begin{proof}

The first identity \eqref{bil-int-KP} is obvious.

For the second one, first, note that the integrand $ h(\mathbf{t},q)\, h^*({\mathbf{ t}}',q)$
has only $2N$ isolated simple poles $\{ k_i\}_{i=1}^{N}$, $\{ l_i\}_{i=1}^{N}$,
and one isolated essential singularity $q=0$ in $\mathbb{D}$.
Then, for given $j_0\in S$, we are going to prove the following relation,
\begin{equation}\label{res-q-l}
\underset{q=k_{j_0}}{\mathrm{Res}}\left[ h(\mathbf{t},q)\, h^*({\mathbf{ t}}',q)\right]
=-\underset{q=l_{j_0}}{\mathrm{Res}}\left[ h(\mathbf{t},q)\, h^*({\mathbf{ t}}',q)\right].
\end{equation}
In fact, similar to the KdV case, for given $j_0\in S$, we have
\begin{align*}
&\underset{q=k_{j_0}}{\mathrm{Res}}\left[ h(\mathbf{t},q)\, h^*({\mathbf{ t}}',q)\right]\\
=&~ -\frac{c_{j_0}}{\sigma(k_{j_0})\sigma(l_{j_0})}
e^{\xi_{[e]}(\mathbf{t},k_{j_0})-\xi_{[e]}(\mathbf{t'},l_{j_0})}\\
	& \times	\!\!\sum_{J\subset S\backslash\{j_0\}}\! \left[
       \!\!\left(\prod_{i<j\in J}A_{ij}\!\right)\!\!
       \left(\prod_{i\in J}c_i\frac{\sigma(l_i)\sigma(k_{j_0}-k_i)}{\sigma(k_i)\sigma(k_{j_0}-l_i)} \right)
        \frac{\sigma(t_1+\hbox{$\sum_{i\in J}(k_i-l_i)$}+k_{j_0})}{\prod_{i\in J}\sigma(k_i-l_i)}
        \, e^{\sum_{i\in J}\theta_{[e]}(\mathbf{t},k_i,l_i)}\right]\\
     & \times \!\!  	\sum_{J\subset S\backslash\{j_0\}}\! \left[
       \!\!\left(\prod_{i<j\in J}A_{ij}\!\right)\!\!
       \left(\prod_{i\in J}c_i\frac{\sigma(k_i)\sigma(l_{j_0}-l_i)}{\sigma(l_i)\sigma(l_{j_0}-k_i)} \right)
        \frac{\!\sigma(t_1'+\hbox{$\sum_{i\in J}(k_i-l_i)$}-l_{j_0})}{\prod_{i\in J}\sigma(k_i-l_i)}
        \, e^{\sum_{i\in J}\theta_{[e]}(\mathbf{t}',k_i,l_i)}\right],
\end{align*}
and $\underset{q=l_{j_0}}{\mathrm{Res}}\left[ h(\mathbf{t},q)\, h^*({\mathbf{ t}}',q)\right]$
has the same form but with ``$+$'' sign instead. Thus, \eqref{res-q-l} holds and then \eqref{bil-res-KP} follows.

\end{proof}

In what follows, we derive bilinear hierarchy from the identity \eqref{bil-res-KP}.
We introduce $\tau'(\mathbf{t})=\sigma(t_1) \tau(\mathbf{t})$
and $\mathbf{t}=\mathbf{x}+\mathbf{y}$ and $\mathbf{t'}=\mathbf{x}-\mathbf{y}$,
where $\mathbf{x}=(x_1, x_2, x_3, \cdots)$, $\mathbf{y}=(y_1, y_2, y_3, \cdots)$.
Then, the bilinear identity \eqref{bil-res-KP} gives rise to
\begin{equation}\label{4-34}
\underset{q=0}{\mathrm{Res}}\left[ \frac{1}{\sigma^2(q)} e^{2\xi_{[e]}(\mathbf{y},q)}
\tau'(\mathbf{x}+\mathbf{y}+\varepsilon(q)) \tau'(\mathbf{x}-\mathbf{y}-\varepsilon(q))\right]=0,
\end{equation}
i.e.
\begin{equation}\label{4-35}
\underset{q=0}{\mathrm{Res}}\left[ \frac{1}{\sigma^2(q)} e^{2\xi_{[e]}(\mathbf{y},q)}
e^{(\mathbf{y}+\varepsilon(q))\cdot \mathbf{D}_{\mathbf{x}}}
\tau'(\mathbf{x}) \cdot \tau'(\mathbf{x}) \right]=0,
\end{equation}
which, by rearranging terms with respect to $\mathbf{y}^{\beta}$, is written as
\begin{equation}\label{bil-D-KP}
\sum_{|\beta|=0}^{\infty}\underset{q=0}{\mathrm{Res}}\left[
(\mathbf{B}+\mathbf{D}_{\mathbf{x}})^\beta
\left( \sum_{n=0}^{\infty}\sum_{j=0}^n p_j(\wt{\mathbf{D}}_{\mathbf{x}})\mu_{n-j} q^{n-2}\right)
\tau'(\mathbf{x}) \cdot \tau'(\mathbf{x})\right]\mathbf{y}^{\beta}=0.
\end{equation}
Here,
\begin{equation}\label{DBy-KP}
\begin{array}{ll}
\mathbf{D}_{\mathbf{x}}=(D_{x_1}, D_{x_2}, D_{x_3}, \cdots),~~
\wt{\mathbf{D}}_{\mathbf{x}}=(D_{x_1}, \frac{1}{2}D_{x_2}, \frac{1}{3}D_{x_3}, \cdots), \\
\mathbf{B}=2(-\zeta(q), \zeta'(q),-\frac{\zeta''(q)}{2!},  \cdots (-1)^n\frac{\zeta^{(n-1)}(q)}{(n-1)!}, \cdots),\\
\beta=(\beta_1, \beta_2, \beta_3, \cdots), ~~|\beta|=\sum_{j=1}^\infty \beta_j,~~
\mathbf{y}^\beta=y_1^{\beta_1}y_2^{\beta_2}\cdots,
\end{array}
\end{equation}
$\{p_j(\mathbf{x})\}$ are defined by \eqref{p-j} and $\{\mu_j\}$ by \eqref{sigma-expd}.
By a similar analysis as for the KdV case in Sec.\ref{sec-3-4},
we can formulate an algorithm for calculating residues at $q=0$, which gives rise to a bilinear KP hierarchy.

\begin{theorem}\label{T-10}
The bilinear KP hierarchy with elliptic solitons are given by
\begin{equation}\label{bil-D-KP-4}
\underset{q=0}{\mathrm{Res}}\left[
({\mathbf{B}}+\mathbf{D}_{\mathbf{x}})^\beta|_{\leq 1}
\left( \sum_{n=0}^{||\beta||-1}\sum_{j=0}^n p_j(\wt{\mathbf{D}}_{\mathbf{x}})\mu_{n-j} q^{n-2}\right)
\tau'(\mathbf{x}) \cdot \tau'(\mathbf{x})\right]=0,
\end{equation}
where $\beta$ stands for the set of nonnegative integers $(\beta_1, \beta_2, \cdots, \beta_{n}, 0, 0, \cdots)$,
and $(\mathbf{B}+\mathbf{D}_{\mathbf{x}})^\beta|_{\leq 1}$
means those  terms of $q^j$ with $j\leq 1$ in the Laurent series of $(\mathbf{B}+\mathbf{D}_{\mathbf{x}})^\beta$,
$||\beta||=\sum_{j=1}^nj\beta_{j}$
and $p_j(\mathbf{t})$ are polynomials defined by \eqref{p-j}.
\end{theorem}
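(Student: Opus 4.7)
The plan is to run the KP analogue of the residue-truncation argument used for Theorem \ref{T-5}, starting from the identity \eqref{4-35} and extracting a closed system indexed by multi-indices $\beta$. The KP case is actually cleaner in one respect (no parity restriction on the times $\mathbf{t}$) and the outline of Sec.\ref{sec-3-4} transfers almost verbatim, but one has to be careful that the truncation bound $\|\beta\|=\sum_j j\beta_j$ is dictated by the correct Laurent order of the entries of $\mathbf{B}$.

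First I would recast \eqref{4-35} in the form \eqref{bil-D-KP} by factoring the exponential as $e^{2\xi_{[e]}(\mathbf{y},q)}e^{(\mathbf{y}+\varepsilon(q))\cdot \mathbf{D}_\mathbf{x}} = e^{(\mathbf{B}+\mathbf{D}_\mathbf{x})\cdot\mathbf{y}} e^{\xi(\wt{\mathbf{D}}_\mathbf{x},q)}$ with $\mathbf{B}$ as in \eqref{DBy-KP}, which is legitimate because $\mathbf{B}$ depends only on $q$ while $\mathbf{D}_\mathbf{x}$ acts only on the $\mathbf{x}$-slots of $\tau'(\mathbf{x})\cdot\tau'(\mathbf{x})$, so the two pieces commute under the residue. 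Then expanding the outer exponential in powers of $\mathbf{y}$ and using that the components of $\mathbf{y}$ are arbitrary yields the family of residue identities
\begin{equation*}
\underset{q=0}{\mathrm{Res}}\left[(\mathbf{B}+\mathbf{D}_\mathbf{x})^\beta\,\frac{1}{\sigma^2(q)}\,e^{\xi(\wt{\mathbf{D}}_\mathbf{x},q)}\,\tau'(\mathbf{x})\cdot\tau'(\mathbf{x})\right]=0,
\end{equation*}
one for each multi-index $\beta=(\beta_1,\beta_2,\ldots)$ of finite support. I would then substitute the Laurent expansion \eqref{sigma-expd} of $1/\sigma^2(q)$ and the generating function \eqref{p-j} for $p_j$ to obtain the inner $q$-series $\sum_{n\ge 0}\sum_{j=0}^n p_j(\wt{\mathbf{D}}_\mathbf{x})\mu_{n-j}q^{n-2}$, which starts at $q^{-2}$.

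The key step is the truncation lemma for $(\mathbf{B}+\mathbf{D}_\mathbf{x})^\beta$. Here I would use the Laurent expansion $\zeta(q)=q^{-1}+O(q^3)$ from \eqref{pzs-expan}, which gives $\zeta^{(n-1)}(q)=(-1)^{n-1}(n-1)!\,q^{-n}+O(q^{\,\text{small positive}})$, so that the $n$-th entry $B_n=2(-1)^n\zeta^{(n-1)}(q)/(n-1)!$ of $\mathbf{B}$ has leading pole of order exactly $n$. Consequently, for any $\beta$ with finite support, $(\mathbf{B}+\mathbf{D}_\mathbf{x})^\beta$ is a Laurent series whose lowest power is $q^{-\|\beta\|}$. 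Since we are extracting the coefficient of $q^{-1}$, only powers $q^{-\|\beta\|},\ldots,q^{1}$ of the left factor and, correspondingly, powers $q^{-2},\ldots,q^{\|\beta\|-1}$ of the right factor can contribute. Cutting both Laurent expansions at these orders gives the finite formula \eqref{bil-D-KP-4}. Conversely, since all omitted tails of the two series produce only powers $q^{\ge 2}$ (resp.\ $q^{\ge \|\beta\|}$) when multiplied out, they contribute nothing to the residue, so \eqref{bil-D-KP-4} is equivalent to the full residue condition and hence, by varying $\beta$, to the bilinear identity \eqref{bil-res-KP}.

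The main obstacle, as in the KdV case, is nothing conceptually deep but the careful bookkeeping of Laurent orders: one must verify that when the components $B_n$ and $D_{x_n}$ of $\mathbf{B}+\mathbf{D}_\mathbf{x}$ are raised to the multi-power $\beta$, no cancellation can push the leading singularity below $q^{-\|\beta\|}$ (it cannot, since $\mathbf{D}_\mathbf{x}$ contributes only $q^0$ and each $B_n$ contributes exactly $q^{-n}$ as leading term), and that the holomorphic part of each $B_n$ really is $O(q^{\,\text{positive}})$ so that higher-order terms in $\mathbf{B}$ cannot lower the bound either. Once this is in place, the theorem follows from \eqref{bil-res-KP} by reading off the coefficient of each $\mathbf{y}^\beta$; all structural work has already been done in Theorem \ref{T-9}.
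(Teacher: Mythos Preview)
Your proposal is correct and follows essentially the same approach as the paper, which simply says the KP case is handled ``by a similar analysis as for the KdV case in Sec.~\ref{sec-3-4}''; your write-up actually spells out that analysis in more detail than the paper does. One small correction: your claim that the holomorphic part of each $B_n$ is $O(q^{\text{positive}})$ is not true in general (for instance $B_4=\tfrac{1}{3}\zeta'''(q)=-2q^{-4}-\tfrac{g_2}{30}+O(q^2)$ has a nonzero constant term), but this claim is also unnecessary---all the truncation needs is that the pole order of $(\mathbf{B}+\mathbf{D}_\mathbf{x})^\beta$ is at most $\|\beta\|$, which follows immediately from the fact that each $B_n+D_{x_n}$ has pole order at most $n$.
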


Below are bilinear equations corresponding to
$\beta=(3,0,0,0,\cdots)$, $(4,0,0,0,\cdots)$, $(5,0,0,0,\cdots)$, $(3,1,0,0,\cdots)$ and $(2,0,1,0,\cdots)$,
respectively,
\begin{subequations}\label{bil-KP eqs}
\begin{align}
&(D_{x_1}^4+3 D_{x_2}^2-4 D_{x_1} D_{x_3}-g_2) \tau'\cdot \tau'=0,\\	
&(D_{x_1}^3D_{x_2}+2D_{x_2}D_{x_3}-3D_{x_1}D_{x_4}) \tau'\cdot \tau'=0,\\
&(D_{x_1}^6+45D_{x_1}^2D_{x_2}^2+20D_{x_1}^3D_{x_3}+40D_{x_3}^2+90D_{x_2}D_{x_4}
 -216D_{x_1}D_{x_5}+3g_2D_{x_1}^2-24g_3) \tau'\cdot \tau'=0,\\
&(D_{x_1}^6-45D_{x_1}^2D_{x_2}^2-20D_{x_1}^3D_{x_3}-80D_{x_3}^2
    +144D_{x_1}D_{x_5}+3g_2D_{x_1}^2-24g_3) \tau'\cdot \tau'=0,\\	
&(D_{x_1}^6-9D_{x_1}^2D_{x_2}^2+4D_{x_1}^3D_{x_3}-32D_{x_3}^2
    +36D_{x_2}D_{x_4}+3g_2D_{x_1}^2-24g_3) \tau'\cdot \tau'=0.
	\end{align}
\end{subequations}
When $g_2$, $g_3$ are 0,
these bilinear equations degenerate to the usual soliton case, cf.\cite{JM-RIMS-1983}.

\section{Degenerations and reductions}\label{sec-5}

In the following we investigate deformations of $\tau$ functions and bilinear equations
under the degenerations of periods and under the reductions of dispersion relations.

\subsection{Degenerations by periods}\label{sec-5-1}

When the invariants $g_2$ and $g_3$ of the elliptic curve \eqref{ell-cur} take
$g_2=\frac{4}{3}(\frac{\pi}{2 w_1})^4,~g_3=\frac{8}{27}(\frac{\pi}{2 w_1})^6$ and $g_2=g_3=0$,
the elliptic curve degenerates to be a cylinder and Riemann sphere, respectively.
These correspond to the degenerations from doubly periodic case to the singly period case and non-periodic case.
The Weierstrass functions will become trigonometric/hyperbolic functions and rational functions,
which we list in Proposition \ref{P-A2} in Appendix \ref{A-1}.
Obviously, such deformations hold in $\tau$ functions and bilinear equations.
In the following we present $\tau$ functions and bilinear equations of the trigonometric/hyperbolic case and rational case.
It is worth mentioning that we will give more concise formulae for the trigonometric/hyperbolic case.

\subsubsection{Trigonometric/hyperbolic case}\label{sec-5-1-1}

One can directly replace those Weierstarss functions in the bilinear form \eqref{4-35}
and $\tau$ function \eqref{tau-KP} using \eqref{weier-sp}.
As a result, for those explicit bilinear equations in \eqref{bil-KP eqs},
one needs to replace $g_2$ and $g_3$ by \eqref{const-g2g3},
and the $\tau$ function $\tau'$ is then given by
\[\tau'=e^{\frac{1}{6}(\alpha x_1)^2}\sin (\alpha x_1)\, \tau_N(\mathbf{x}),\]
where $\tau_N(\mathbf{x})$ is defined as in \eqref{tau-KP} but in which the Weierstarss functions are replaced accordingly
using \eqref{weier-sp}.

Such a $\tau_N(\mathbf{x})$ for the trigonometric/hyperbolic case can have a more concise form.
To achieve that, we introduce notation
\begin{equation}\label{xi-t}
\xi_{[t]}(\mathbf{x},k)=\alpha \sum^{\infty}_{n=1}(-1)^n x_n \frac{\partial^{n-1}_k \cot(\alpha k)}{(n-1)!},
\end{equation}
where by the index $[t]$ we indicate the trigonometric/hyperbolic case.
Then, similar to the formula \eqref{Aij-pq-KPP}, we can prove that
\begin{equation}\label{Aij-sin}
\frac{\sin(\alpha(k_i-k_j))}{\sin(\alpha(k_i-l_j))}
=e^{\xi_{[t]}(\varepsilon(k_j)-\varepsilon(l_j), k_i)}
\end{equation}
where  $\varepsilon(k)=(k, \frac{k^2}{2}, \frac{k^3}{3},\cdots)$ defined as before.
Next, we present a simple form of $\tau_N(\mathbf{x})$ and the related vertex operator.

\begin{theorem}\label{T-11}
The bilinear hierarchy \eqref{bil-D-KP-4} with degeneration \eqref{const-g2g3}
have a solution
\begin{equation}\label{tau-sp}
\tau'=e^{\frac{1}{6}(\alpha x_1)^2}\sin (\alpha x_1)\, \tau_N(\mathbf{x}),
\end{equation}
where
\begin{equation}\label{tau-KP-sp}
\tau_N(\mathbf{x})=\sum_{J\subset S}\left(\prod_{i\in J}c'_i\right)
\left(\mathop{\rm{\prod}}_{i<j \in J}A'_{ij}\right)
\frac{\sin(\alpha(x_1+\sum_{i\in J}(k_i-l_i)))}{\sin(\alpha x_1)\prod_{i\in J}\sin(\alpha(k_i-l_i))}\,
e^{\sum_{i\in J}(\xi_{[t]}(\mathbf{x}, k_i)-\xi_{[t]}(\mathbf{x},l_i))}.
\end{equation}
Here $c'_i\in \mathbb{C}$ and
\begin{equation}\label{Aij-KPP}
A'_{ij}=\frac{\sin(\alpha(k_i-k_j))\sin(\alpha(l_i-l_j))}{\sin(\alpha(k_i-l_j))\sin(\alpha(l_i-k_j))}.
\end{equation}
The related vertex operator is
\begin{equation}\label{eq:vertexkp-sp}
X(k,l)=\frac{\sin(\alpha(x_1+k-l))}{\sin(\alpha(x_1))\sin(\alpha(k-l))}e^{\xi_{[t]}(\mathbf{x},k)-\xi_{[t]}(\mathbf{x},l)}
e^{\xi(\wt \partial,k)-\xi(\wt \partial,l)}.
\end{equation}
The $\tau$ function \eqref{tau-KP-sp} is defined by the vertex operator via
\begin{equation}\label{tau-vo-sp}
\tau_N(\mathbf{x})=e^{c'_{N}X(k_N,l_N)}\cdots e^{c'_{2}X(k_2,l_2)}e^{c'_{1}X(k_1,l_1)}\circ 1,
\end{equation}
i.e.
\begin{equation}
\tau_{N}(\mathbf{x})=e^{c'_{N}X(k_{N},l_{N})}\circ \tau_{N-1}(\mathbf{x}),~ ~ \tau_{0}(\mathbf{x})=1.
\end{equation}
\end{theorem}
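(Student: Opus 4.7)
The plan is to deduce Theorem \ref{T-11} as a direct specialization of Theorems \ref{T-7}, \ref{T-8}, \ref{T-10} under the period degeneration recorded in Proposition \ref{P-A2}. The trig/hyperbolic limit sends
\[\sigma(x)\longrightarrow\tfrac{1}{\alpha}e^{\alpha^{2}x^{2}/6}\sin(\alpha x),\quad
\zeta(x)\longrightarrow\alpha\cot(\alpha x)+\tfrac{\alpha^{2}}{3}x,\quad
\wp(x)\longrightarrow\alpha^{2}\csc^{2}(\alpha x)-\tfrac{\alpha^{2}}{3},\]
and $(g_{2},g_{3})$ to the values in \eqref{const-g2g3}. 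Substituting into the elliptic $\tau'(\mathbf{t})=\sigma(t_{1})\tau_{N}(\mathbf{t})$ from Theorem \ref{T-8} immediately produces the prefactor $\tfrac{1}{\alpha}e^{(\alpha x_{1})^{2}/6}\sin(\alpha x_{1})$ of \eqref{tau-sp}; the constant $1/\alpha$ is absorbed into the overall scaling of $\tau'$. The bilinear hierarchy \eqref{bil-D-KP-4} with the specialized $g_{2},g_{3}$ is then inherited from Theorem \ref{T-10}, since the contour-integral identity \eqref{bil-int-KP} and the residue algorithm are continuous in $(g_{2},g_{3})$.

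The work lies in showing that, after degeneration, the body $\tau_{N}(\mathbf{x})$ of \eqref{tau-KP} is literally \eqref{tau-KP-sp}, i.e.\ that \emph{all} Gaussian and linear-in-$x_{1}$ pieces cancel and leave only the clean $\sin/\cot$ expression. I would proceed term-by-term in the $J$-sum. The sigma ratio in \eqref{tau-KP} contributes a Gaussian
\[\exp\!\Bigl[\tfrac{\alpha^{2}}{6}\bigl((x_{1}+\hbox{$\sum_{i\in J}$}(k_{i}-l_{i}))^{2}-x_{1}^{2}-\hbox{$\sum_{i\in J}$}(k_{i}-l_{i})^{2}\bigr)\Bigr]
=e^{\frac{\alpha^{2}}{3}x_{1}\sum_{i\in J}(k_{i}-l_{i})}\cdot
e^{\frac{\alpha^{2}}{3}\sum_{i<j\in J}(k_{i}-l_{i})(k_{j}-l_{j})}.\]
The first factor cancels exactly against the linear-in-$x_{1}$ piece of $\xi_{[e]}(\mathbf{x},k_{i})-\xi_{[e]}(\mathbf{x},l_{i})$ coming from the $\tfrac{\alpha^{2}}{3}x$ tail of $\zeta$, so the remaining exponential is purely the $\cot$-generated series $\xi_{[t]}$ of \eqref{xi-t}. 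The second factor must then merge with $\prod_{i<j\in J}A_{ij}$ to give $\prod_{i<j\in J}A'_{ij}$. This is the core combinatorial check: it amounts to the identity
\[A_{ij}\,e^{\frac{\alpha^{2}}{3}(k_{i}-l_{i})(k_{j}-l_{j})}=A'_{ij},\]
which is verified either by substituting the sigma-to-sine formula directly into \eqref{A-ij-KP} (the Gaussians in the four sigma factors reorganise to precisely $e^{\frac{\alpha^{2}}{3}(k_{i}-l_{i})(k_{j}-l_{j})}$), or, more intrinsically, from the trigonometric analogue \eqref{Aij-sin} of \eqref{Aij-pq-KPP}.

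For the vertex-operator statement, I would mimic the proof path of Lemmas \ref{L-2}, \ref{L-3} and Theorem \ref{T-3}. Using \eqref{Aij-sin}, one obtains the exchange relation
\[e^{\xi(\wt\partial,k_{i})-\xi(\wt\partial,l_{i})}\,e^{\xi_{[t]}(\mathbf{x},k_{j})-\xi_{[t]}(\mathbf{x},l_{j})}
=A'_{ij}\,e^{\xi_{[t]}(\mathbf{x},k_{j})-\xi_{[t]}(\mathbf{x},l_{j})}\,e^{\xi(\wt\partial,k_{i})-\xi(\wt\partial,l_{i})}.\]
Iterating as in Lemma \ref{L-3} then yields $\prod_{i}X(k_{i},l_{i})$ with exactly the prefactor appearing in \eqref{tau-KP-sp}, giving both $X(k,l)^{2}=0$ and the generating formula \eqref{tau-vo-sp}. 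Equivalently, \eqref{eq:vertexkp-sp} is obtained from \eqref{eq:vertexkp} by the same substitution that takes $\sigma\to\sin$, $\zeta\to\alpha\cot$ after the Gaussian/linear pieces have been absorbed into the overall $e^{(\alpha x_{1})^{2}/6}$ dressing.

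The hard part is the linear-$x_{1}$ and Gaussian bookkeeping: one has to track simultaneously the quadratic $x_{1}^{2}$ terms from every $\sigma$-ratio, the linear $\alpha^{2}x_{1}(k-l)/3$ tails of every $\zeta$-derivative in $\xi_{[e]}$, and the cross-terms $(k_{i}-l_{i})(k_{j}-l_{j})$ that build up $A'_{ij}$, and check that after cancellation the only surviving exponential outside $\tau_{N}$ is the single factor $e^{(\alpha x_{1})^{2}/6}$. This is a finite but multi-indexed identity; once it is verified for a single $J$ and for a pair $(i,j)$, iteration over $|J|$ and over the $A_{ij}$'s is routine.
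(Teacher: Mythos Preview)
Your proposal is correct and follows essentially the same route as the paper's proof: both degenerate the elliptic $\tau_N$ of Theorem~\ref{T-8} term-by-term in the $J$-sum, isolate the Gaussian contributions from the $\sigma$-ratios, cancel the $\tfrac{\alpha^{2}}{3}x_{1}(k_{i}-l_{i})$ pieces against the linear tail of $\zeta$ in $\xi_{[e]}$, and absorb the pairwise cross-terms $\tfrac{\alpha^{2}}{3}(k_{i}-l_{i})(k_{j}-l_{j})$ into the passage $A_{ij}\to A'_{ij}$; the vertex-operator part is likewise handled in both via the trigonometric exchange relation built on \eqref{Aij-sin}. The only cosmetic difference is that the paper records the explicit redefinition $c'_{i}=c_{i}\,e^{\alpha^{2}(k_{i}-l_{i})^{2}/6}$ (together with the stray powers of $\alpha$), which you leave implicit by appealing to the freedom in the $c'_{i}$.
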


\begin{proof}
Let us look at the $\tau_N(\mathbf{x})$ defined in \eqref{tau-KP}
where $\mathbf{t}=\mathbf{x}$.
We will show that, with $\sigma, \zeta, \wp$ taking the form \eqref{weier-sp},
the $\tau_N(\mathbf{x})$  can be written as in \eqref{tau-KP-sp}.
First, for a single PWF, we have
\begin{align*}
 c_i \frac{\sigma(x_1+k_i-l_i)}{\sigma(x_1)\sigma(k_i-l_i)}
e^{\xi_{[e]}(\mathbf{x}, k_i)-\xi_{[e]}(\mathbf{x},l_i)}\left |_{\eqref{weier-sp}}\right.
= c'_i \frac{\sin(\alpha(x_1+ k_i-l_i))}{\sin(\alpha x_1)\sin(\alpha(k_i-l_i))}\,
e^{\xi_{[t]}(\mathbf{x}, k_i)-\xi_{[t]}(\mathbf{x},l_i)}
\end{align*}
where we take
\begin{equation}
c'_i=c_i e^{\frac{1}{6}\alpha^2(k_i-l_i)^2}.
\end{equation}
Secondly, for the general term in $\tau_N(\mathbf{x})$, we have
\begin{align*}
& \left(\prod_{i\in J}c_i\right)
\frac{\sigma(x_1+\sum_{i\in J}(k_i-l_i))}{\sigma(x_1)\prod_{i\in J}\sigma(k_i-l_i)}
e^{\sum_{i\in J}(\xi_{[e]}(\mathbf{x}, k_i)-\xi_{[e]}(\mathbf{x},l_i))}|_{\eqref{weier-sp}}\\
=& \left(\prod_{i\in J}c_i\right) e^{\frac{1}{6}\alpha^2(\sum_{i\in J}(k_i-l_i))^2}
\frac{\sin(\alpha(x_1+\sum_{i\in J}(k_i-l_i)))}{\sin(\alpha x_1)\prod_{i\in J}\sin(\alpha(k_i-l_i))}
e^{\sum_{i\in J}(\xi_{[t]}(\mathbf{x}, k_i)-\xi_{[t]}(\mathbf{x},l_i))}\\
=& \left(\prod_{i\in J}c'_i\right)e^{\frac{1}{3}\alpha^2 \sum_{i<j\in J}(k_i-l_i)(k_j-l_j)}
\frac{\sin(\alpha(x_1+\sum_{i\in J}(k_i-l_i)))}{\sin(\alpha x_1)\prod_{i\in J}\sin(\alpha(k_i-l_i))}
e^{\sum_{i\in J}(\xi_{[t]}(\mathbf{x}, k_i)-\xi_{[t]}(\mathbf{x},l_i))}.
\end{align*}
Thirdly, for the phase factor $A_{ij}$, we have
\[A_{ij}|_{\eqref{weier-sp}}
=\left.\frac{\sigma(k_i-k_j)\sigma(l_i-l_j)}{\sigma(k_i-l_j)\sigma(l_i-k_j)}\right|_{\eqref{weier-sp}}
=e^{-\frac{1}{3}\alpha^2 \sum_{i<j\in J}(k_i-l_i)(k_j-l_j)} A'_{ij}.\]
All these together lead us to the form \eqref{tau-KP-sp}
for the $\tau$ function \eqref{tau-KP} with \eqref{weier-sp}.

For the vertex operator \eqref{eq:vertexkp-sp}, using relation \eqref{Aij-sin}, one can find that
\begin{equation*}
X(k_i,l_i)  X(k_j,l_j)= A'_{i,j}\frac{\sin(\alpha(x_1+k_i-l_i+k_j-l_j))}
{\sin(\alpha(x_1))\sin(\alpha(k_i-l_i))\sin(\alpha(k_j-l_j))}\,:\!X(k_i,l_i)  X(k_j,l_j)\!:,
\end{equation*}
where
\begin{equation*}
:\!X(k_i,l_i)  X(k_j,l_j)\!:\, =
e^{\xi_{[t]}(\mathbf{x},k_i)-\xi_{[t]}(\mathbf{x},l_i)}e^{\xi_{[t]}(\mathbf{x},k_j)-\xi_{[t]}(\mathbf{x},l_j)}
e^{\xi(\wt \partial,k_i)-\xi(\wt \partial,l_i)}e^{\xi(\wt \partial,k_j)-\xi(\wt \partial,l_j)}.
\end{equation*}
Then, equation \eqref{tau-vo-sp} follows immediately.

\end{proof}

Compared with Theorem \ref{T-8}, it turns out that Theorem \ref{T-11} can be obtained from Theorem \ref{T-8}
 by formally  replacing $\sigma(x)$ and $\zeta(x)$ with $\sin(\alpha x)$ and $\alpha\cot(\alpha x)$.
This also agrees with the fully discrete case, cf. \cite{YN-JMP-2013}.
The trigonometric/hyperbolic PWF of the KP hierarchy is (cf. Eq.\eqref{PWF-KP-e})
\begin{equation}\label{PWF-KP-t}
\rho= X(k,l)\circ 1 =\frac{\sin(\alpha(x_1+k-l))}{\sin(\alpha(x_1))\sin(\alpha(k-l))}
e^{\xi_{[t]}(\mathbf{x},k)-\xi_{[t]}(\mathbf{x},l)}.
\end{equation}

\subsubsection{Rational case}\label{sec-5-1-2}

The $\tau$ function and vertex operator of rational case are obtained from Theorem \ref{T-8}
by direct substitution of \eqref{weier-np}.
Bilinear equations are those of doubly periodic case with degeneration $g_2=g_3=0$,
which are  the same as the bilinear equations for usual solitons.
We skip proof and only present main results in the following.

\begin{theorem}\label{T-12}
In the rational case the bilinear KP hierarchy are the same as the usual soliton case,
namely, the bilinear equations derived from \eqref{bil-D-KP} with $g_2=g_3=0$;
$\tau$ function is given by
\begin{equation}\label{tau-np}
\tau'=x_1\, \tau_N(\mathbf{x}),
\end{equation}
where
\begin{subequations}\label{tau-KPPP}
\begin{equation}\label{tau-KP-np}
\tau_N(\mathbf{x})=\sum_{J\subset S}\left(\prod_{i\in J}c_i\right)
\left(\mathop{\rm{\prod}}_{i<j \in J}A_{ij}\right)
\frac{x_1+\sum_{i\in J}(k_i-l_i)}{x_1\prod_{i\in J}(k_i-l_i)}\,
e^{\sum_{i\in J}(\xi_{[r]}(\mathbf{x}, k_i)-\xi_{[r]}(\mathbf{x},l_i))}.
\end{equation}
Here
\begin{align}\label{Aij-KP-np}
& A_{ij}=\frac{(k_i-k_j)(l_i-l_j)}{(k_i-l_j)(l_i-k_j)},\\
& \xi_{[r]}(\mathbf{x}, k)=-\sum^{\infty}_{n=1}\frac{1}{k^n}x_n,
\end{align}
\end{subequations}
and the subscript $[r]$ stands for the rational case.

The related vertex operator is
\begin{equation}\label{eq:vertexkp-np}
X(k,l)=\frac{x_1+k-l}{(k-l)x_1}e^{\xi_{[r]}(\mathbf{x},k)-\xi_{[r]}(\mathbf{x},l)}
e^{\xi(\wt \partial,k)-\xi(\wt \partial,l)},
\end{equation}
and the $\tau$ function \eqref{tau-KP-np} is generated  via
\begin{equation}\label{tau-vo-np}
\tau_N(\mathbf{x})=e^{c'_{N}X(k_N,l_N)}\cdots e^{c'_{2}X(k_2,l_2)}e^{c'_{1}X(k_1,l_1)}\circ 1.
\end{equation}
\end{theorem}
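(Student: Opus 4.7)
The plan is to mirror the derivation used for Theorem \ref{T-11} in the trigonometric/hyperbolic case, since the rational degeneration is structurally simpler: there is no auxiliary parameter $\alpha$, so no Gaussian correction analogous to $e^{\frac{1}{6}(\alpha x_1)^2}$ needs to appear. Starting from Theorem \ref{T-8}, I would substitute the rational-case Weierstrass limits from \eqref{weier-np} (schematically $\sigma(x)\to x$, $\zeta(x)\to 1/x$, $\wp(x)\to 1/x^2$, with $g_2=g_3=0$) directly into the elliptic expressions \eqref{tau-KP}, \eqref{A-ij-KP} and \eqref{eq:vertexkp}, and then check that the resulting objects are precisely \eqref{tau-KP-np}, \eqref{Aij-KP-np} and \eqref{eq:vertexkp-np}.

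For the $\tau$ function, I would first degenerate a single building block: $c_i\,\sigma(x_1+k_i-l_i)/(\sigma(x_1)\sigma(k_i-l_i))$ becomes $c_i\,(x_1+k_i-l_i)/(x_1(k_i-l_i))$, and the general $J$-term collapses similarly to the form displayed in \eqref{tau-KP-np}. The phase factor $A_{ij}$ passes from \eqref{A-ij-KP} to \eqref{Aij-KP-np} by direct substitution. The index $\xi_{[e]}(\mathbf{x},k)$ degenerates to $\xi_{[r]}(\mathbf{x},k)=-\sum_{n\geq 1}x_n/k^n$, since under \eqref{weier-np} one has $\zeta^{(n-1)}(k)\to (-1)^{n-1}(n-1)!/k^n$; this is the step where signs need to be tracked carefully. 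Writing $\tau'=\sigma(t_1)\tau_N$ and using $\sigma(t_1)\to x_1$ yields the prefactor $x_1$ in \eqref{tau-np}.

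For the vertex operator \eqref{eq:vertexkp-np}, I would verify that $X(k_i,l_i)X(k_j,l_j)$ reorders via the same pattern as \eqref{XX-Aij} by checking the rational analogue of \eqref{Aij-pq-KPP}, namely $\ln A_{ij}=\xi_{[r]}(\varepsilon(k_j)-\varepsilon(l_j),k_i)-\xi_{[r]}(\varepsilon(k_j)-\varepsilon(l_j),l_i)$. This reduces to the well-known contraction identity for the usual KP vertex operator. Then $X^2=0$ and $e^{cX}=1+cX$ carry over, and the proof that $\tau_N(\mathbf{x})=e^{c_N X(k_N,l_N)}\cdots e^{c_1 X(k_1,l_1)}\circ 1$ produces \eqref{tau-KP-np} is the same induction used in Theorem \ref{T-11}, applied to the rational building blocks.

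For the bilinear hierarchy, the residue identity \eqref{bil-D-KP-4} collapses to the classical one because $1/\sigma^2(q)\to 1/q^2$ and $e^{2\xi_{[e]}(\mathbf{y},q)}\to e^{2\xi(\mathbf{y},1/q)}$ under \eqref{weier-np}, which is exactly the generating bilinear identity for the standard KP hierarchy \cite{JM-RIMS-1983}. Consequently every explicit equation in \eqref{bil-KP eqs} with $g_2=g_3=0$ is the usual Jimbo--Miwa KP-hierarchy bilinear equation. The main obstacle I anticipate is purely bookkeeping: confirming the Laurent expansions of $\zeta^{(n)}(k)$ and of $e^{\xi_{[e]}}$ degenerate with the expected signs and factorials so that $\xi_{[e]}(\mathbf{y},q)\mapsto \xi(\mathbf{y},1/q)$ holds term-by-term. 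All remaining steps are direct substitutions, which is why the theorem is stated without a full proof.
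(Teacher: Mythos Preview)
Your proposal is correct and matches the paper's approach exactly: the paper explicitly states that the rational case is obtained from Theorem~\ref{T-8} ``by direct substitution of \eqref{weier-np}'' and that the bilinear equations are those of the doubly periodic case with $g_2=g_3=0$, then skips the proof entirely. Your outline in fact supplies more detail than the paper gives, including the explicit check that $\zeta^{(n-1)}(k)\to(-1)^{n-1}(n-1)!/k^n$ forces $\xi_{[e]}\to\xi_{[r]}$, and the observation that no Gaussian correction to the $c_i$ is needed because the rational $\sigma(x)=x$ carries no quadratic exponential prefactor.

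One small sign to watch: under \eqref{weier-np} you get $\xi_{[e]}(\mathbf{y},q)\to\xi_{[r]}(\mathbf{y},q)=-\xi(\mathbf{y},1/q)$, so $e^{2\xi_{[e]}(\mathbf{y},q)}\to e^{-2\xi(\mathbf{y},1/q)}$ rather than $e^{+2\xi(\mathbf{y},1/q)}$ as you wrote; this is still the standard KP bilinear identity after the usual reflection $q\to -q$ (or equivalently swapping the roles of $\mathbf{t}$ and $\mathbf{t}'$), so it does not affect the conclusion, but it is exactly the bookkeeping point you flagged as the main obstacle.
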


Note that the rational-type PWF of the KP hierarchy is (cf. Eq.\eqref{PWF-KP-e} and \eqref{PWF-KP-t})
\begin{equation}\label{PWF-KP-r}
\rho=X(k,l)\circ 1=\frac{x_1+k-l}{(k-l)x_1}e^{\xi_{[r]}(\mathbf{x},k)-\xi_{[r]}(\mathbf{x},l)}.
\end{equation}

\subsection{Reductions by dispersion relations}\label{sec-5-2}

\subsubsection{Elliptic case}\label{sec-5-2-1}

For the KP hierarchy, the vertex operator of its usual soliton solution is
\begin{equation}\label{eq:vertexkp-ss}
X(k,l)=e^{\xi(\mathbf{t},k)-\xi(\mathbf{t},l)}
e^{\xi(\wt \partial,k)-\xi(\wt \partial,l)},
\end{equation}
which is governed by $\xi(\mathbf{t},k)$.
Reduction by dispersion relation can be implemented through imposing constraints on $l$
such that $l^N=k^N$, i.e. $l=\omega k$ where $\omega$ is some $N$-th root of unity
and in practice we require $\omega^s\neq 1$ for $s=1,2,\cdots, N-1$.
The bilinear KP hierarchy together with its $\tau$ function will reduce to the lower dimension
for the Gel'fand-Dickey hierarchy,
including the KdV for $N=2$, the Boussinesq for $N=3$, etc.
For the case of elliptic solitons, however, the vertex operator \eqref{eq:vertexkp}
is governed by $\xi_{[e]}(\mathbf{t},k)$ and  $\xi(\wt\partial,k)$ together.
To implement reduction of elliptic solitons by dispersion relation,
one needs to make use of elliptic $N$-th roots of the unity, which is introduced in \cite{NSZ-2019}
(also see Definition \ref{D-1} in Appendix).

In the elliptic case, the $\tau$ function and bilinear equations of the KP hierarchy are reduced to
those of the KdV hierarchy by taking $l_j=-k_j$. This is because when $l=-k$
the the coordinate variables $t_{2n}$ in $\xi_{[e]}(\mathbf{t},k)-\xi_{[e]}(\mathbf{t},l)$
and $\partial_{t_{2n}}$ in  $\xi(\wt\partial,k)-\xi(\wt\partial,l)$
in the vertex operator \eqref{eq:vertexkp} vanish.

However, recalling the Remark \ref{R-A1} we give at the end of Appendix \ref{A-1},
except $\omega_0(\delta)\equiv \delta$, the other two elliptic cube roots of the unity are not the  elliptic 6-th roots of the unity.
This means, in principle, when $N\geq 3$ we cannot get elliptic $N$-soliton solution for the  Gel'fand-Dickey hierarchy
from those of the KP hierarchy by using elliptic $N$-th roots of the unity.

In the following we only present the $\tau$ function and bilinear equation for the Boussinesq equation (not the hierarchy),
which can be reduced from those of the KP equation using elliptic cube roots of the unity.
Let $\omega_0(\delta)\equiv \delta,~\omega_1(\delta)$ and $\omega_2(\delta)$ be three elliptic cube roots of the unity,
then
\begin{equation}\label{f-Hirota-BSQ}
f=\sum_{\mu=0,1} \frac{\sigma(x+\sum_{i=1}^N \mu_i
(k_i-\omega_1(k_i)))}{\sigma(x)\prod^N_{i=1}\sigma^{\mu_i}(k_i-\omega_1(k_i))}
\mathrm{exp}\left(\sum^{N}_{j=1} \mu_j \widehat\theta_j+\sum^N_{1\leq i<j}\mu_i\mu_j a_{ij}\right)
\end{equation}
is a solution of the bilinear Boussinesq equation
\begin{equation}
(D_x^4- 12\wp(x)D_x^2+3D_y^2)f \cdot f=0,
\end{equation}
where the summation of $\mu$ means to take all possible $\mu_i=\{0,1\}$  for $ i=1,2,\cdots, N$,
\begin{subequations}\label{theta-A-BSQ}
\begin{align}
& \widehat\theta_i=-(\zeta(k_i)-\zeta(\omega_1(k_i))) x -(\wp(k_i)-\wp(\omega_1(k_i))) y +\widehat\theta_{i}^{(0)}, ~~
\widehat\theta_{i}^{(0)}\in \mathbb{C}, \\
& e^{a_{ij}}=A_{ij}=\frac{\sigma(k_i-k_j)\sigma(\omega_1(k_i)-\omega_1(k_j))}
{\sigma(k_i-\omega_1(k_j))\sigma(\omega_1(k_i)-k_j)}.\label{A-ij-BSQ}
\end{align}
\end{subequations}
Note that it is easy to write out a vertex operator for the $\tau$ function \eqref{f-Hirota-BSQ}.
We skip it.

\subsubsection{Trigonometric/hyperbolic case}\label{sec-5-2-2}

Similar to the elliptic case, to consider reduction, we need to introduce trigonometric/hyperbolic $N$-th roots of the unity.
This can be done by considering period degeneration in Definition \ref{D-1}.
After suitable scaling of independent variables, we have the following.

\begin{definition}\label{D-2}
 There exist distinct $\{\omega_j(\delta)\}_{j=0}^{N-1}$,  up to the  periods $k\pi$ ,
such that the following equation holds,
\begin{equation}\label{root-tr}
\prod^{N-1}_{j=0}\Psi_{\kappa}(\omega_j(\delta))
=\frac{1}{(N-1)!}(\partial^{N-2}_{\kappa}\csc^{2}(-\kappa)-\partial^{N-2}_{\kappa}\csc^{2}(\delta))=0,
\end{equation}
where
\begin{equation}\label{Psi}
\Psi_{a}(b)=\frac{\sin(a+b)}{\sin(a) \sin(b)},
\end{equation}
$\omega_0(\delta)=\delta$ and all $\{\omega_j(\delta)\}$ are independent of $\kappa$.
$\{\omega_j(\delta)\}_{j=0}^{N-1}$ are called trigonometric/hyperbolic $N$-th roots of the unity.
\end{definition}

These roots also satisfy
\begin{equation}
\sum^{N-1}_{j=0}\omega_j(\delta)=0
\end{equation}
and
\begin{equation}
\sum^{N-1}_{j=0}\cot^{(l)}(\omega_j(\delta))=0,~~ (l=0,1,\cdots,N-2).
\end{equation}

When $N=2$, i.e. reduction to the KdV, we take $l_j=-k_j$ in the KP $\tau$ function \eqref{tau-KP-sp},
and we have the trigonometric/hyperbolic $\tau$ function of the KdV hierarchy:
\begin{equation}\label{tau-KdV-sp}
\tau_N(\mathbf{x})=\sum_{J\subset S}\left(\prod_{i\in J}c'_i\right)
\left(\mathop{\rm{\prod}}_{i<j \in J}A'_{ij}\right)
\frac{\sin(\alpha(x_1+2\sum_{i\in J}k_i))}{\sin(\alpha x_1)\prod_{i\in J}\sin(2\alpha k_i)}\,
e^{2\sum_{i\in J} \xi_{[t]}(\mathbf{x}, k_i)},
\end{equation}
where
\begin{equation}\label{Aij-KdV-sp}
A'_{ij}=\frac{\sin^2(\alpha(k_i-k_j))}{\sin^2(\alpha(k_i+k_j))},
\end{equation}
and by $\mathbf{x}$ we denote $(x_1,0,x_3,0,x_5,\cdots)$
for the sake of using the results of the KP hierarchy in Sec.\ref{sec-5-1-1}.
The above $\tau$ function is generated by vertex operator
\begin{equation}\label{eq:vertexkdv-sp}
X(k)=\frac{\sin(\alpha(x_1+2k))}{\sin(\alpha(x_1))\sin(2\alpha k)}e^{2\xi_{[t]}(\mathbf{x},k)}
e^{2\xi(\wt \partial,k)},
\end{equation}
where
$\widetilde{\partial}=(\partial_{x_1},0,\frac{1}{3} \partial_{x_3},0,\frac{1}{5} \partial_{x_5},\cdots)$.
Bilinear equations are those derived from \eqref{bil-D-KP}
by removing all $D_{x_{2n}}$ terms and imposing
$g_2=\frac{4}{3}\alpha^4$, $g_3=\frac{8}{27}\alpha^6$.
These equations have solution
\begin{equation}\label{tau-pr}
\tau'=e^{\frac{1}{6}(\alpha x_1)^2}\sin (\alpha x_1)\, \tau_N(\mathbf{x}),
\end{equation}
where $\tau_N(\mathbf{x})$ is given by \eqref{tau-KdV-sp}.

Same as the elliptic case, when $N \geq 3$ we cannot get $\tau$ function and bilinear equations of
the Gel'fand-Dickey hierarchy from those of the KP hierarchy by reduction
using triginametric/hyperbolic $N$-th roots of the unity.
For the Boussinesq equation (not hierarchy), it allows a $\tau$ function
\begin{equation}\label{f-Hirota-BSQ-tr}
f=\sum_{\mu=0,1}
\frac{\sin(\alpha(x_1+\mu_i(k_i-\omega_1(k_i))))}{\sin(\alpha x_1)\prod_{i=1}^N\sin^{\mu_i}(\alpha(k_i-w_1(k_i)))}
\mathrm{exp}\left(\sum^{N}_{j=1} \mu_j \widehat\theta_j+\sum^N_{1\leq i<j}\mu_i\mu_j a_{ij}\right),
\end{equation}
where the summation of $\mu$ means to take all possible $\mu_i=\{0,1\}$  for $ i=1,2,\cdots, N$,
\begin{subequations}\label{theta-A-BSQ-tr}
\begin{align}
& \widehat\theta_i=-\alpha(\cot(\alpha k_i)-\cot(\alpha \omega_1(k_i))) x
 -\alpha^2 (\csc^2(\alpha k_i)-\csc^2(\alpha\omega_1(k_i))) y +\widehat\theta_{i}^{(0)}, ~~
\widehat\theta_{i}^{(0)}\in \mathbb{C}, \\
& e^{a_{ij}}=A_{ij}=\frac{\sin(\alpha(k_i-k_j))\sin(\alpha(\omega_1(k_i)-\omega_1(k_j)))}
{\sin(\alpha(k_i-\omega_1(k_j)))\sin(\alpha(\omega_1(k_i)-k_j))},\label{A-ij-BSQ-tr}
\end{align}
\end{subequations}
$\alpha \omega_1(k)$ is one of trigonometric/hyperbolic cube root of the unity by Definition \ref{D-2},
i.e.
\[\partial_{\kappa}\csc^2(\kappa)|_{\kappa=\alpha k}=\partial_{\kappa}\csc^2(\kappa)|_{\kappa=\alpha \omega_1(k)}.\]
Such a $\tau$ function is a solution to the bilinear Boussinesq equation
\begin{equation}\label{bilinear-bf-tr}
(D_x^4 + 4 \alpha^2 D_x^2 -12\alpha^2 \csc^2(\alpha x)D_x^2+3D_y^2)f \cdot f=0.
\end{equation}
Note that it is easy to write out a vertex operator for the $\tau$ function \eqref{theta-A-BSQ-tr}. We skip it.

\subsubsection{Rational case}\label{sec-5-2-3}

Reduction of this case is as same as the usual soliton case.
For example, reductions $l_j=-k_j$ and $l_j=\omega k_j$ where $\omega^3=1, \omega\neq 1$
reduce the results in Sec.\ref{sec-5-1-2} of the KP hierarchy to the
KdV hierarchy and the Boussinesq hierarchy, respectively.
Note that for the KdV equation its solution of this case has been obtained via
the Marchenko integral equation in \cite{AC-PL-1979}
and a direct linearisation approach in \cite{FA-AIP-1982},
and now it is clear how these solutions originate from the elliptic soliton solutions.

\section{Conclusions and discussions}\label{sec-6}

We have established a bilinear framework for the elliptic soliton solutions that are composed by the Lam\'e type PWFs.
Employing the KdV equation and KP equation as examples, we presented
$\tau$ functions for these elliptic $N$-soliton solutions in Hirota's form,
and the corresponding vertex operators and bilinear identities.
An algorithm has been developed to calculate residues and obtain bilinear equations.
Such a framework allows degenerations to the trigonometric/hyperbolic and rational cases
when the invariants $g_2$ and $g_3$ are specified for one period and non-period.
Reductions by dispersion relations can be implemented using elliptic $N$-th roots of the unity,
but except the KdV hierarchy, the reductions of elliptic and  trigonometric/hyperbolic soliton solutions
are not applicable to the Boussinesq hierarchy and other higher order Gel'fand-Dickey hierarchies.

We would like to address some related  topics for  further consideration.
First, are there any  algebras to characterize this type of vertex operators?
In other words, are these vertex operators the representations of some algebras?
Date, Kashiwara and Miwa \cite{DKM-PJA-1981} found that
the vertex operator related to affine Lie algebra $A_1^{(1)}$ \cite{LepW-1978} can be used to
define a symmetry group of the KdV $\tau$ function.
This then built up a beautiful connection between integrable systems and affine Lie algebras via vertex operators
\cite{DKM-PJA-1981,DJKM-RIMS-1982a,JM-RIMS-1983,MJD-book-1999}.
However, so far we did not find any similar algebraic structures behind our vertex operators
(excluding the rational case).
The vertex operators \eqref{eq:vertexkdv} and \eqref{eq:vertexkp} can be considered
as elliptic deformations of the usual vertex operators of the KdV equation and KP equation.
Without algebraic structure, one can still investigate such deformations on vertex operators of
other integrable systems (e.g. \cite{DJKM-RIMS-1982a,JM-RIMS-1983}),
and in particular, of discrete integrable systems
(e.g. \cite{DJM-JPSJ-1982a,DJM-JPSJ-1982b,DJM-JPSJ-1983a,DJM-JPSJ-1983b,DJM-JPSJ-1983c}).
In addition, note that $u=-2\wp(x)$ is an initial solution in our scheme,
and meanwhile it is the 1-gap and 1-genus solution 
in light of the finite-gap integration approach \cite{Dub-FAA-1975,DN-JETP-1974}.
It would be interesting to make clear the eigenvalue distribution of the corresponding spectral problem
where the potential is elliptic multi-solitons,
and recover these elliptic soliton solutions form some analytic approach, e.g. the inverse scattering transform.
Finally, there are vertex representations for quantum affine algebras \cite{FJ-PNAS-1988}.
It would be also interesting if such elliptic deformations could be extended to
quantum vertex operators.

\vskip 30pt

\subsection*{Acknowledgments}

DJ Zhang is grateful to Prof. Frank W Nijhoff, Dr. Cheng Zhang and Prof. Honglian Zhang  for their
discussions and comments on elliptic functions, finite gap solutions and quantum algebras.
This project is supported by the NSF of China (Nos. 11631007 and 11875040)
and Science and Technology Innovation Plan of Shanghai (No.20590742900).

\vskip 30pt

\appendix

\section{Weierstrass functions}\label{A-1}

We collect some notations and properties  of the Weierstrass functions that we may use in the paper.
One may refer to \cite{bookAkhiezer},\cite{bookHJN} and \cite{NSZ-2019}.

Three Weierstrass functions $\zeta(z)$, $\wp(z)$ and $\sigma(z)$ are connected via
\[\zeta(z)=\frac{\sigma'(z)}{\sigma(z)},~~ \wp(z)=-\zeta'(z).\]
Among them only $\wp(z)$ is a truly elliptic function by the definition of an elliptic function,
i.e. meromorphic and doubly periodic.
By $w_1$ and $w_2$ we denote two half periods of $\wp(z)$. $\zeta(z)$ and $\sigma(z)$ are quasi-periodic
with respect to $w_i$, in the sense that
\begin{subequations}\label{periodicity}
\begin{align}
&\zeta(z+2 w_i)=\zeta(z)+2 \zeta(w_i), \\
&\sigma(z+2 w_i)=-\sigma(z)e^{2\zeta(w_i)(z+w_i)},\quad i=1,2.
\end{align}
\end{subequations}
It is easy to check the following holds.

\begin{Proposition}\label{P-A1}
For a generalized Lam\'e function $\frac{\sigma(a+q)}{\sigma(b+q)}e^{c\zeta(q)}$ where $a,b,c$ are constants,
it is doubly periodic with respect to $q$ if $a-b+c=0$.
\end{Proposition}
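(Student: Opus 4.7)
The plan is to verify directly that $f(q) = \frac{\sigma(a+q)}{\sigma(b+q)}e^{c\zeta(q)}$ is invariant under the shifts $q \mapsto q + 2w_i$ for $i=1,2$ whenever $a-b+c = 0$, using only the quasi-periodicity identities \eqref{periodicity}.

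First, I would substitute $q \mapsto q + 2w_i$ in each of the three factors separately. The numerator becomes $\sigma(a+q+2w_i) = -\sigma(a+q)\,e^{2\zeta(w_i)(a+q+w_i)}$; the denominator becomes $\sigma(b+q+2w_i) = -\sigma(b+q)\,e^{2\zeta(w_i)(b+q+w_i)}$; and the exponential factor becomes $e^{c\zeta(q+2w_i)} = e^{c\zeta(q)}\,e^{2c\zeta(w_i)}$. The two sign factors $(-1)$ cancel in the quotient, and collecting the exponential prefactors gives
\begin{equation*}
f(q+2w_i) \;=\; f(q)\,\exp\!\bigl(2\zeta(w_i)\bigl[(a+q+w_i)-(b+q+w_i)+c\bigr]\bigr)
\;=\; f(q)\,\exp\!\bigl(2\zeta(w_i)(a-b+c)\bigr).
\end{equation*}
The $q$-dependence drops out identically, leaving only the constant multiplier $\exp(2\zeta(w_i)(a-b+c))$.

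Under the hypothesis $a-b+c = 0$, this multiplier equals $1$ for both $i=1$ and $i=2$, so $f(q+2w_1) = f(q+2w_2) = f(q)$, which is exactly the double periodicity with the periods of $\wp$.

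There is no real obstacle here: the argument is a one-line cancellation once the quasi-periodicity relations \eqref{periodicity} are applied. The only point worth flagging is that it is the \emph{linear} combination $a-b+c$ (and not some more delicate expression in $w_i$ or $\zeta(w_i)$) that must vanish, which is what allows the $q$-independent factor to collapse simultaneously for both half-periods. This is also the reason the criterion is sufficient but, in a generic sense, also necessary when $\zeta(w_1)/\zeta(w_2)$ is irrational.
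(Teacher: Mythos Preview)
Your proof is correct and is exactly the computation the paper has in mind; the paper itself does not spell out a proof but merely states that the result ``is easy to check'' from the quasi-periodicity relations \eqref{periodicity}, which is precisely what you have done.
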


Let $e_i=\wp(w_i)$ for $i=1,2,3$ where $w_3=-w_1-w_2$.
$(\wp(z), \wp'(z))$ is a point on the Weierstrass elliptic curve
\begin{equation}\label{ell-cur}
y^2=R(x)=4x^3-g_2x-g_3=4(x-e_1)(x-e_2)(x-e_3),
\end{equation}
i.e.
\begin{equation}\label{ell-cur-2}
(\wp'(z))^2=4\wp^3(z)-g_2\wp(z)-g_3,
\end{equation}
where
$g_2=-4(e_1e_2+e_2e_3+e_3e_1)$ and $g_3=4e_1e_2e_3$ are invariants of the curve.
Differentiating \eqref{ell-cur-2} yields
\begin{equation}\label{ell-cur-3}
2\wp''(z)=12\wp^2(z)-g_2
\end{equation}
and further
\begin{equation}\label{ell-cur-4}
\wp^{(3)}(z)=12\wp(z)\wp'(z).
\end{equation}
The latter is the stationary KdV equation,
in other words, $u=-2 \wp(x)$ is a stationary solution to the KdV equation \eqref{kdv}.

$\wp(z)$ is an even function, while $\zeta(z)$ and $\sigma(z)$ are odd. $\sigma(z)$ is an entire function.
As for expansions, they have
\begin{subequations}\label{pzs-expan}
\begin{align}
& \wp(z)=\frac{1}{z^2}+\frac{g_2}{20}z^2 + \frac{g_3}{28}z^4 + O(z^6),\\
& \zeta(z)=\frac{1}{z}-\frac{g_2}{60}z^3  - \frac{g_3}{140}z^5 + O(z^7), \label{zeta-expd}\\
& \sigma(z)=z-\frac{g_2}{240}z^5 -\frac{g_3}{840}z^7+O(z^9).
\end{align}
\end{subequations}

Some useful identities of the Weierstrass functions are given below.
\begin{equation}\label{eq:add-1}
 \wp(z)-\wp(u)=-\frac{\sigma(z+u)\sigma(z-u)}{\sigma^2(z)\sigma^2(u)},
\end{equation}
\begin{equation}\label{eq:add-2}
\eta_u(z)=\zeta(z+u)-\zeta(z)-\zeta(u)=\frac{1}{2}\frac{\wp'(z)-\wp'(u)}{\wp(z)-\wp(u)},
\end{equation}
\begin{equation}\label{eq:add-3}
\wp(z)+\wp(u)+\wp(z+u)=\eta_u^2(z) 
\end{equation}
and
\begin{equation}\label{eq:add-4}
\chi_{u,v}(z)=\zeta(u)+\zeta(v)+\zeta(z)-\zeta(u+v+z)
=\frac{\sigma(u+v)\sigma(u+z)\sigma(z+v)}{\sigma(u)\sigma(v)\sigma(z)\sigma(z+u+v)}.
\end{equation}
The famous Frobenius-Stickelberger determinant (also known as elliptic van der Monde determinant) is \cite{FS80}
\begin{equation}\label{eq:FS-1}
	\begin{split}
	&|\mathbf{1},~ \wp(\mathbf{k}),~ \wp'(\mathbf{k}),~ \wp''(\mathbf{k}),~ \cdots,~\wp^{(n-2)}(\mathbf{k}) |\\
	=&(-1)^{\frac{(n-1)(n-2)}{2}}\left(\prod^{n-1}_{s=1}s!\right) \frac{\sigma(k_1+\cdots+k_n)
        \prod_{i<j}\sigma(k_i-k_j)}{\sigma^n(k_1)\sigma^n(k_2),\cdots\sigma^n(k_n)},
    \end{split}
\end{equation}
where $f{(\mathbf{k})}$ denotes a column vector with entries $f(k_j)$,
i.e.  $f{(\mathbf{k})}=(f(k_1), f(k_2), \cdots, f(k_n))^T$.
One more formula is (see (C.7) in \cite{ND-2016})
\begin{equation}\label{eq:ND16}
\prod_{j=1}^n\Phi_x(k_j)=\frac{(-1)^{n-1}}{(n-1)!}\Phi_x(k_1+\cdots+k_n)
\frac{|\mathbf{1},~ \wp(\mathbf{k}),~ \wp'(\mathbf{k}),~ \cdots,~\wp^{(n-2)}(\mathbf{k}) |}
{|\mathbf{1}, ~\eta_x(\mathbf{k}),~ \wp(\mathbf{k}),~ \wp'(\mathbf{k}),~ \cdots,~\wp^{(n-3)}(\mathbf{k}) |},
\end{equation}
where $\Phi_a(b)=\frac{\sigma(a+b)}{\sigma(a)\sigma(b)}$.

Degenerations of the Weierstrass functions take place when the discriminant is zero, i.e.
\begin{equation}
\Delta=g_2^3-27g_3^2=0.
\end{equation}
The degenerations are described as the following \cite{bookAkhiezer}.
\begin{Proposition}\label{P-A2}
With parametrisation
\begin{equation}\label{const-g2g3}
g_2=\frac{4}{3}\alpha^4,~~g_3=\frac{8}{27}\alpha^6,~~\alpha=\frac{\pi }{2 w},
\end{equation}
the Weierstrass functions degenerate to the trigonometric/hyperbolic case,\footnote{We do not
discriminate between trigonometric and hyperbolic cases, as $\alpha$ (or the period $2w$)
can be either real or pure imaginary, corresponding to the two cases ($g_3$ being positive or negative)
to define the period through elliptic integrals \cite{bookAkhiezer}.
}
\begin{subequations}\label{weier-sp}
\begin{align}
&\sigma(q)  =  \frac{1}{\alpha}e^{\frac{1}{6}(\alpha q)^2}\sin (\alpha q),\\
& \zeta(q) =\frac{1}{3}\alpha^2 q + \alpha \cot (\alpha q),\\
& \wp(q)=-\frac{1}{3}\alpha^2  + \alpha^2 \csc^2 (\alpha q).
\end{align}
\end{subequations}
And when $g_2=g_3=0$,
the Weierstrass functions degenerate to the rational case,
\begin{equation}\label{weier-np}
\sigma(q)=q,~~
\zeta(q) =\frac{1}{q},~~
\wp(q)=\frac{1}{q^2}.
\end{equation}
\end{Proposition}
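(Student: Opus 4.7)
The plan is to verify both degenerations by direct substitution, using the fact that the triple $(\sigma,\zeta,\wp)$ is uniquely determined by the defining relations $\zeta=\sigma'/\sigma$, $\wp=-\zeta'$, together with the cubic curve equation $(\wp')^2=4\wp^3-g_2\wp-g_3$ and the Laurent normalisation $\sigma(q)=q+O(q^5)$, $\zeta(q)=1/q+O(q^3)$, $\wp(q)=1/q^2+O(q^2)$. Uniqueness follows because $\sigma$ is entire and odd with the prescribed leading behaviour. Thus, rather than pass to a limit in the half-periods, I would simply check that the proposed closed forms satisfy all these relations. Alternatively, one may verify that the parametrisation indeed makes the discriminant vanish: $\Delta=g_2^3-27g_3^2=(\tfrac{4}{3}\alpha^4)^3-27(\tfrac{8}{27}\alpha^6)^2=\tfrac{64}{27}\alpha^{12}-\tfrac{64}{27}\alpha^{12}=0$, confirming consistency of \eqref{const-g2g3} with degeneration.

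For the trigonometric/hyperbolic case, I would proceed in three short steps. First, observe that under \eqref{const-g2g3} the cubic factors as
\begin{equation*}
R(x)=4x^3-\tfrac{4}{3}\alpha^4 x-\tfrac{8}{27}\alpha^6
=4\bigl(x+\tfrac{\alpha^2}{3}\bigr)^{\!2}\bigl(x-\tfrac{2\alpha^2}{3}\bigr),
\end{equation*}
so with $\wp(q)=-\tfrac{1}{3}\alpha^2+\alpha^2\csc^2(\alpha q)$ one has $\wp+\tfrac{\alpha^2}{3}=\alpha^2\csc^2(\alpha q)$ and $\wp-\tfrac{2\alpha^2}{3}=\alpha^2(\csc^2(\alpha q)-1)=\alpha^2\cot^2(\alpha q)$, while $\wp'(q)=-2\alpha^3\csc^2(\alpha q)\cot(\alpha q)$; squaring $\wp'$ and comparing with $R(\wp)$ gives equality, confirming \eqref{ell-cur-2}. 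Second, differentiating the proposed $\zeta(q)=\tfrac{1}{3}\alpha^2 q+\alpha\cot(\alpha q)$ yields $-\zeta'(q)=-\tfrac{1}{3}\alpha^2+\alpha^2\csc^2(\alpha q)=\wp(q)$. Third, taking the logarithmic derivative of $\sigma(q)=\tfrac{1}{\alpha}e^{(\alpha q)^2/6}\sin(\alpha q)$ gives $\sigma'(q)/\sigma(q)=\tfrac{1}{3}\alpha^2 q+\alpha\cot(\alpha q)=\zeta(q)$. The normalisations $\sigma(q)\sim q$, $\zeta(q)\sim 1/q$, $\wp(q)\sim 1/q^2$ are read off the Taylor expansions of $\sin,\cot,\csc^2$ at the origin.

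For the rational case $g_2=g_3=0$, the same scheme is essentially trivial: with $\wp(q)=1/q^2$ one checks $(\wp'(q))^2=4/q^6=4\wp^3(q)$; the relation $-\zeta'(q)=-\frac{d}{dq}(1/q)=1/q^2=\wp(q)$ and $\sigma'(q)/\sigma(q)=1/q=\zeta(q)$ are immediate, and the normalisations hold tautologically. I anticipate no serious obstacle in either computation; the only mild bookkeeping issue is the choice of which root of the degenerate cubic to call $e_1$ in the trigonometric limit (the simple root $2\alpha^2/3$ corresponds to the surviving finite half-period $w=\pi/(2\alpha)$, as one can verify by evaluating $\wp(w)=-\tfrac{\alpha^2}{3}+\alpha^2\csc^2(\pi/2)=\tfrac{2\alpha^2}{3}$), which is consistent with $w_1=w$ remaining finite while $w_2\to i\infty$ in the period-degeneration limit.
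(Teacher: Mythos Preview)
Your verification is correct and complete: you check the differential relations $\zeta=\sigma'/\sigma$, $\wp=-\zeta'$, the curve equation $(\wp')^2=4\wp^3-g_2\wp-g_3$, and the Laurent normalisations at $q=0$, which together pin down the triple $(\sigma,\zeta,\wp)$ for given $(g_2,g_3)$. The factorisation of $R(x)$ and the discriminant computation are right, and the last remark identifying the simple root $2\alpha^2/3$ with $\wp(w)$ is a nice consistency check.

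The paper itself does not supply a proof of this proposition: it is stated in Appendix~\ref{A-1} as a standard fact with a citation to Akhiezer's monograph \cite{bookAkhiezer}, and is used thereafter as input. So you are not reproducing the paper's argument but filling in a result the authors took as known. Your direct-substitution approach is the natural self-contained route; the textbook treatment typically arrives at \eqref{weier-sp} by sending one half-period to infinity in the product or series representations of $\sigma$, $\zeta$, $\wp$, which is more conceptual but also more laborious. Either way the content is classical, and your short verification is entirely adequate for the purposes of this paper.
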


In what follows we present the definition of elliptic $N$-th roots of the unity
that was introduced in \cite{NSZ-2019}.

\begin{definition}\label{D-1}
\cite{NSZ-2019} There exist distinct $\{\omega_j(\delta)\}_{j=0}^{N-1}$,  up to the periodicity of the periodic lattice,
such that the following equation holds,
\begin{equation}\label{root}
\prod^{N-1}_{j=0}\Phi_{\kappa}(\omega_j(\delta))
=\frac{1}{(N-1)!}(\wp^{(N-2)}(-\kappa)-\wp^{(N-2)}(\delta))=0,
\end{equation}
where $\omega_0(\delta)=\delta$ and all $\{\omega_j(\delta)\}$ are independent of $\kappa$.
$\{\omega_j(\delta)\}_{j=0}^{N-1}$ are called elliptic $N$-th roots of the unity.
\end{definition}

These roots also satisfy \cite{NSZ-2019}
\begin{equation}
\sum^{N-1}_{j=0}\omega_j(\delta)=0
\end{equation}
and
\begin{equation}
\sum^{N-1}_{j=0}\zeta^{(l)}(\omega_j(\delta))=0,~~ (l=0,1,\cdots,N-2).
\end{equation}

\begin{Remark}\label{R-A1}
In usual case if $\omega$ is a $n$-th root of the unity, it is also a $(kn)$-th root of the unity where
$k\in \mathbb{N}$.
This is not true in the elliptic case.
Note that the elliptic square roots of the unity are also the elliptic $2k$-th roots of the unity
because $\wp^{(2n)}(x)$ is even.
However, for the elliptic cube root of the unity, $\omega_1(\delta)\neq \delta$,
it is not an elliptic 6th-root of the unity.
In other words,  $\wp'(\omega_1(\delta))=\wp'(\delta)$ holds
does not guarantee that the validity of $\wp^{(4)}(\omega_1(\delta))= \wp^{(4)}(\delta)$,
where $\omega_1(\delta)\neq \delta$ (mod the periodic lattice).
In fact, using the formulae \eqref{ell-cur-2}, \eqref{ell-cur-3} and \eqref{ell-cur-4},
we have
\[\wp^{(4)}(\omega_1(\delta))-\wp^{(4)}(\delta)
=30(\wp'(\omega_1(\delta))-\wp'(\delta))(\wp'(\omega_1(\delta))+\wp'(\delta))+
12 g_2(\wp(\omega_1(\delta))-\wp(\delta)).\]
In the case $\wp'(\omega_1(\delta))=\wp'(\delta)$, it reduces to
\[\wp^{(4)}(\omega_1(\delta))-\wp^{(4)}(\delta))=12 g_2(\wp(\omega_1(\delta))-\wp(\delta)),\]
which does not vanish for arbitrary $\delta$ unless  $g_2=0$ or $\omega_1(\delta)=\delta$.
\end{Remark}

\section{Elliptic 1- and 2-soliton solutions and bilinear formulae}\label{A-2}

The purpose of this section is not only to show details of deriving elliptic 1-soliton and 2-soliton solutions of the KdV equation,
but also to explore some calculating formulae of the Lam\'e-type PWFs (cf. $e^{kx+lt}$)
under Hirota's $D$ operator.

The Lam\'e-type PWF defined in \eqref{PWF-2ss}, i.e.
\begin{equation}\label{PWF-appC}
\rho_i(x,t)=\Phi_x (x+2k_i) e^{\xi_i},~~
\xi_i=e^{-2\zeta(k_i)x+\wp'(k_i)t+\xi^{(0)}_i},
\end{equation}
satisfies the following relations
\begin{subequations}\label{pxs}
\begin{align}
	&\rho_{i,x}=-\chi_{k_i,k_i}(x)\rho_{i},\label{px}\\
	&\rho_{i,xx}=2\eta_{k_i}(x)\rho_{i,x},\label{pxx}\\
    &\rho_{i,xxx}=(6\wp(x)+2\wp(x+k_i)+4\wp(k_i))\rho_{i,x},\label{pxxx}
\end{align}
\end{subequations}
where $\eta_x(y)$ and $\chi_{x,y}(z)$ are defined in \eqref{eta} and \eqref{chi}.
There are equivalent expressions for these derivatives.
For example, noticing that
\[\chi_{k,k}(x)=\eta_{-k}(x+k)-\eta_k(x+k)\]
and making use of \eqref{eq:add-2}, we have
\begin{equation}\label{px-2}
\rho_{i,x}=\frac{-\wp'(k_i)}{\wp(x+k_i)-\wp(k_i)}\rho_i.
\end{equation}
Using this formula to replace $\wp(k_i)\rho_{i,x}$ in \eqref{pxxx} yields
\begin{equation}\label{pxxx-2}
\rho_{i,xxx}=6(\wp(x+k_i)+\wp(x))\rho_{i,x} +4\wp'(k_i)\rho_{i},
\end{equation}
which gives another expression of $\rho_{i,xxx}$.
To calculate $\rho_{i,xxxx}$, differentiating  \eqref{pxxx-2} once with respect to $x$ yields
\[\rho_{i,xxxx}=6(\wp'(x+k_i)-\wp'(-x))\rho_{i,x} +6(\wp(x+k_i)+\wp(x))\rho_{i,xx} +4\wp'(k_i)\rho_{i,x},\]
which then, by making use of \eqref{eq:add-2} and \eqref{pxx}, gives rise to a simpler form for $\rho_{i,xxxx}$,
\begin{equation}\label{pxxxx}
\rho_{i,xxxx}=12\wp(x)\rho_{i,xx} +4\wp'(k_i)\rho_{i,x}.
\end{equation}

Hirota's procedure for deriving usual solitons relies on the property
\[D^n_xD^m_t e^{ax+bt}\cdot e^{ax+bt}=0,~~ a,b\in \mathbb{C},
\]
but this does not hold any longer for the Lam\'e-type PWF $\rho_i$.
For example, one can verify that
\begin{equation}\label{Dxt}
D_xD_t \rho_i\cdot \rho_i=0
\end{equation}
but
\begin{equation}\label{Dx2}
D^2_x \rho_i\cdot \rho_i=2(\wp(x)-\wp(x+2k_i)) \rho_i^2,
\end{equation}
which is not zero.
In addition, using the expressions \eqref{pxx}, \eqref{px-2}, \eqref{pxxx-2}, \eqref{pxxxx} and formula \eqref{eq:add-3},
we have
\begin{equation}\label{Dx4}
D^4_x \rho_i\cdot \rho_i=12\wp(x)D^2_x \rho_i\cdot \rho_i,
\end{equation}
which does not vanish either.
There could be a more general result. We have checked the following formula,
\begin{equation}\label{B.9}
D_x^{2n} ~\varrho \cdot \varrho=\frac{\wp^{(2n-1)}(x)}{\wp'(x)}D_x^2 ~\varrho\cdot\varrho,
~~~~ \varrho=\Phi_x (a) e^{bx+ct},~~ a,b,c\in \mathbb{C},
\end{equation}
up to  $n=10$ using \textit{Mathematica}.
The `coefficient' $\frac{\wp^{(2n-1)}(x)}{\wp'(x)}$ is a linear combination of
$\{\wp^{s}(x)\}$ with  $s=n-1, n-3, n-4, \cdots, 1, 0$.
In checking the above relation we made use of the following formula (see Eq.(1.188) in \cite{Hir-book-2004})
\[2\cosh(\delta \partial_x) \ln \varrho=\ln(\cosh (\delta D_x) \varrho\cdot\varrho) \]
and $(\ln \varrho)_{xx}=\wp(x)-\wp(x+a)$.
However, a proof for arbitrary $n$ is absent.
Note that Hirota's $D$ operator allows gauge property with respect to linear exponential function, i.e.
\[D^n_xD^m_t (e^{ax+bt}f)\cdot (e^{ax+bt}g) =e^{2(ax+bt)} D^n_xD^m_t f\cdot g,\]
but the formula \eqref{B.9} indicates that such a property no longer holds when the linear exponential function
is replaced by the Lam\'e function.
Instead of that, we have the following.

\begin{Proposition}\label{P-B1}
(Quasi-gauge property) For the generalized Lam\'e-type PWF $\varrho$ defined in \eqref{B.9}
and $C^{\infty}$ functions $f(x,t)$ and $g(x,t)$, we have
\begin{equation}\label{quasi-gauge}
D^n_xD^m_t (\varrho f)\cdot (\varrho g) = \varrho^2  D^n_xD^m_t f\cdot g
+\sum^{\left[\frac{n}{2}\right]}_{l=1}\left(\begin{array}{c}n\\2l\end{array}\right)(D^{2l}_x \varrho\cdot \varrho)
D_x^{n-2l}D^m_t f \cdot g.
\end{equation}
In light of \eqref{B.9} the term $D^{2l}_x \varrho\cdot \varrho$ can be replaced by
$\frac{\wp^{(2l-1)}(x)}{\wp'(x)}D_x^2 \varrho\cdot \rho$ or
$\frac{2(\wp(x)-\wp(x+a))\wp^{(2l-1)}(x)}{\wp'(x)}\varrho^2$.
\end{Proposition}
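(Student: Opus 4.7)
The plan is to use the exponential generating function representation of Hirota's bilinear operator. Observing that $e^{\epsilon D_x+\delta D_t}[(AB)\cdot(CD)]$ evaluated at $(x,t)$ equals $[(AB)(x+\epsilon,t+\delta)][(CD)(x-\epsilon,t-\delta)]$, which factorises as $[e^{\epsilon D_x+\delta D_t}A\cdot C]\,[e^{\epsilon D_x+\delta D_t}B\cdot D]$, I would apply this with $A=C=\varrho$ and $B=f$, $D=g$ and compare coefficients of $\epsilon^n\delta^m$ on both sides. This immediately produces a binomial expansion
\begin{equation*}
D_x^n D_t^m(\varrho f)\cdot(\varrho g)=\sum_{k=0}^n\sum_{j=0}^m\binom{n}{k}\binom{m}{j}(D_x^k D_t^j\varrho\cdot\varrho)(D_x^{n-k}D_t^{m-j}f\cdot g),
\end{equation*}
reducing the task to analysing $D_x^k D_t^j\varrho\cdot\varrho$ for the particular $\varrho=\Phi_x(a)e^{bx+ct}$.

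The heart of the argument is that most of these bilinear terms vanish automatically. First, since the $t$-dependence of $\varrho$ is the linear factor $e^{ct}$, the shifted product $\varrho(x+\epsilon,t+\delta)\varrho(x-\epsilon,t-\delta)$ is independent of $\delta$; hence $D_t^j\varrho\cdot\varrho=0$ for every $j\geq 1$, so only the $j=0$ slice of the double sum survives. Second, the remaining factor $\Phi_{x+\epsilon}(a)\Phi_{x-\epsilon}(a)e^{2bx+2ct}$ is invariant under $\epsilon\to-\epsilon$, hence its Taylor expansion contains only even powers of $\epsilon$, which forces $D_x^k\varrho\cdot\varrho=0$ whenever $k$ is odd. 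Writing $k=2l$ and peeling off the $l=0$ term $\varrho^2\,D_x^n D_t^m f\cdot g$ from the $l\geq 1$ contributions then recovers the stated identity \eqref{quasi-gauge}.

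The addendum of the proposition is essentially free. The first replacement is precisely formula \eqref{B.9} applied to the coefficient $D_x^{2l}\varrho\cdot\varrho$. For the second, I would note that $D_x^2\varrho\cdot\varrho=2(\wp(x)-\wp(x+a))\varrho^2$ is the direct generalisation of \eqref{Dx2} from $\rho_i$ to an arbitrary Lam\'e-type PWF, obtained by combining $(\ln\varrho)_{xx}=\wp(x)-\wp(x+a)$ with the identity $2\cosh(\delta\partial_x)\ln\varrho=\ln(\cosh(\delta D_x)\,\varrho\cdot\varrho)$ and reading off the $\delta^2$-coefficient. I do not anticipate a real obstacle: the only point to spot is the pair of vanishing mechanisms, namely the $\delta$-independence coming from the linear $t$-exponential and the $\epsilon$-parity coming from the symmetry of $\varrho\cdot\varrho$, after which everything is bookkeeping. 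In particular, no deeper elliptic identity is needed for the quasi-gauge formula itself; the Weierstrass content enters only in the optional rewriting of $D_x^{2l}\varrho\cdot\varrho$.
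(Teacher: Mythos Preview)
Your argument is correct and follows the same route as the paper: the paper invokes the Hirota identity $\exp(D_1)(fh)\cdot(gk)=(\exp(D_1)h\cdot k)(\exp(D_1)f\cdot g)$ with $D_1=\varepsilon D_x+\delta D_t$ and declares the proof ``direct,'' while you spell out the two vanishing mechanisms ($\delta$-independence and $\epsilon$-parity) that the paper leaves implicit. There is no substantive difference in strategy.
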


\begin{proof}
The proof is direct by using the identity \cite{Hir-book-2004}
\begin{equation}\label{id-abcd}
\exp(D_1)(f h) \cdot (gk)=(\exp(D_1)h\cdot k)(\exp(D_1)f\cdot g),
\end{equation}
where $f,g,h,k$ are functions of $(x,t)$ and $D_1=\varepsilon D_x+\delta D_t$ with constants  $\varepsilon$ and $\delta$.

\end{proof}

We now look for elliptic soliton solutions in Hirota's form.
For elliptic 1-soliton solution with the form $f_1=1+\rho_1(x,t)$,
thanks to \eqref{Dxt} and \eqref{Dx4}, one only needs to verify
\begin{equation}\label{p-xt-1ss}
( \partial^4_x -4\partial_{xt}-12\wp(x) \partial^2_x)\rho_1=0,
\end{equation}
which is nothing but \eqref{pxxxx} in light of
\begin{equation}\label{pt}
\rho_{i,t}=\wp'(k_i) \rho_i.
\end{equation}
Thus, the elliptic 1-soliton \eqref{1ss-f1} is obtained.

Then we look for 2-soliton solution of the form
\begin{equation}
f_2 =1+\rho_1(x,t)+\rho_2(x,t)+f^{(2)}(x,t),
\end{equation}
subject to
\begin{equation}\label{bil-2ss}
(D_x^4-4D_tD_x-12\wp(x) D_x^2)f_2 \cdot f_2=0,
\end{equation}
where
\begin{equation}
 f^{(2)}(x,t)=A_{12} e^{4\zeta(k_1)k_2}\wt \rho_1 \rho_2,
 ~~~\wt \rho_1(x,t)=\rho_1(x+2k_2,t),
\end{equation}
and $A_{12}$ is a parameter to be fixed later.
In light of relations \eqref{Dxt} and \eqref{Dx4}, equation \eqref{bil-2ss} is reduced to two equations,
\begin{equation}\label{con-1}
(D_x^4-4D_tD_x-12\wp(x)D_x^2)~\rho_{1} \cdot \rho_{2} =
-(\partial^4_x-4\partial_{xt} -12\wp(x)\partial^2_x) f^{(2)}
\end{equation}
and
\begin{equation}\label{con-2}
(D_x^4-4D_tD_x-12\wp(x)D_x^2)~\rho_{i} \cdot f^{(2)} =0,  ~~  i=1,2.
\end{equation}

Let us first work on \eqref{con-1}.
By virtue of the fact \eqref{p-xt-1ss} which holds for $\rho_{2}$ as well, we have
\begin{align}
&(D_x^4-4D_tD_x-12\wp(x)D_x^2)~\rho_{1} \cdot \rho_{2}\nonumber\\
 =&
4\rho_{1,x}\rho_{2,t}+4\rho_{1,t}\rho_{2,x}
-4\rho_{1,xxx}\rho_{2,x}-4\rho_{1,x}\rho_{2,xxx}+6\rho_{1,xx}\rho_{2,xx}
+24 \wp(x)\rho_{1,x}\rho_{2,x}.\label{R-2}
\end{align}
Making use of \eqref{pt}, \eqref{pxx} and \eqref{pxxx},
we can express $\rho_{i,t}, \rho_{i,xx}$ and $\rho_{i,xxx}$ in terms of $\rho_{i,x}$.
After that, using formula \eqref{eq:add-3}, we arrive at
\begin{align}
(D_x^4-4D_tD_x-12\wp(x)D_x^2)~\rho_{1} \cdot \rho_{2}
 =&-12 (\eta_{k_1}(x)-\eta_{k_2}(x))^2\rho_{1,x}\rho_{2,x}\nonumber\\
=& -12 \chi_{-k_1,k_2}^2(x+k_1)\rho_{1,x}\rho_{2,x}, \label{R-3}
\end{align}
where use has been made of $\chi_{-k_1,k_2}(x+k_1)=\eta_{k_1}(x)-\eta_{k_2}(x)$.

For the right hand side of \eqref{con-1}, by virtue of \eqref{p-xt-1ss}, we have
\begin{align*}
& -(\partial^4_x -4\partial_{xt}-12\wp(x)\partial^2_x) f^{(2)}  \\
=&- A_{12}e^{4\zeta(k_1)k_2}
     \Bigl[-4\wt\rho_{1,x} \rho_{2,t}-4\wt\rho_{1,t} \rho_{2,x}
             +4 \wt\rho_{1,xxx} \rho_{2,x}+4\wt\rho_{1,x} \rho_{2,xxx}+6\wt\rho_{1,xx} \rho_{2,xx}\\
   & ~~~~~~~~~~~~~~~~ ~~~~~~~~~~
     +12(\wp(x+2k_2)-\wp(x))\wt\rho_{1,xx} \rho_{2}  -24 \wp(x) \wt\rho_{1,x} \rho_{2,x}\Bigr],
\end{align*}
in which
\[12(\wp(x+2k_2)-\wp(x))\wt\rho_{1,xx} \rho_{2}
=24 (\eta_{k_2}(x+k_2)-\eta_{k_2}(x))\eta_{k_1}(x+2k_2)   \wt\rho_{1,x} \rho_{2,x},
\]
where use has been made of \eqref{eq:add-3}, \eqref{px} and \eqref{pxx}.
Then, similar to the treatment for \eqref{R-2}, we have
\begin{align}
 -(\partial^4_x -4\partial_{xt} -12\wp(x)\partial^2_x) f^{(2)}
=& -12 A_{12}\,e^{4\zeta(k_1)k_2} \chi_{k_1,k_2}^2(x+k_2)\wt\rho_{1,x}\rho_{2,x} \nonumber\\
=& -12 A_{12}\, \chi_{k_1,k_2}^2(x+k_2)
       \frac{\Phi_{k_1}^2(x+2k_2)}{\Phi_{k_1}^2(x)}\rho_{1,x}\rho_{2,x},\label{R-4}
\end{align}
where we have used
\[\wt\rho_{1,x}=e^{-4\zeta(k_1)k_2} \frac{\Phi_{k_1}^2(x+2k_2)}{\Phi_{k_1}^2(x)}\rho_{1,x}.\]
Then, combining \eqref{R-3} and \eqref{R-4} together and expressing $\chi_{a,b}(c)$ in terms of $\sigma$
function using \eqref{eq:add-4}, we finally find
\[A_{12}=\frac{\sigma^2(k_1-k_2)}{\sigma^2(k_1+k_2)},\]
with which \eqref{con-1} holds.

Equation \eqref{con-2} can be verified straightforwardly.
The idea is as same as for verifying \eqref{con-1}, i.e.
using \eqref{p-xt-1ss} to eliminate those 4-th order derivatives of $\rho_{i}$ and $\wt\rho_{1}$,
and using \eqref{pt}, \eqref{pxx} and \eqref{pxxx} to express the equation
in terms of $\rho_{i,x}$ and $\wt\rho_{1,x}$.
After long and tedious calculation, we can verify \eqref{con-2} for $i=1,2$.
Thus, the elliptic 2-soliton solution \eqref{2ss} in Hirota's form is obtained.

In the above calculation, we expressed the bilinear equations in terms of $\rho_{i,x}$
and implemented verification by evaluating coefficients of $\rho_{1,x} \rho_{2,x}$, etc.
There is an alternative way to calculate bilinear derivatives of $\rho_{i}$ using the Bell polynomials.
Let us define
\begin{equation}\label{B.22}
\varrho_i=\Phi_x(a_i) e^{b_i x+c_i t}, ~~ a_i, b_i, c_i \in \mathbb{C}.
\end{equation}
Then we have
\begin{equation}
\varrho_{i,x}=\alpha_i(x) \varrho_i,
\end{equation}
where
\begin{equation}
\alpha_i(x) =\zeta(x+a_i)-\zeta(x)+b_i.
\end{equation}
Introduce functions
\begin{equation}
G_m(x)=\partial_x^{m-1}\alpha_1(x)  +(-1)^{m}\partial_x^{m-1}\alpha_2(x).
\end{equation}
Then, it can be proved that (see Eq.(3.4) in \cite{GilLNW-PRSLA-1996} and Eq.(10) in \cite{LamLSW-JPA-1994})
\begin{equation}
D^n_x D^m_t \varrho_{1}\cdot \varrho_{2}=(c_1-c_2)^m Y_n(G_1, G_2, \cdots, G_n) \varrho_{1} \varrho_{2},
\end{equation}
in which $Y_n$ is the Bell polynomials defined via (see Eq.(7.2) in \cite{Bell-AM-1934})
\[Y_n(y_1, y_2, \cdots, y_n)=e^{-y}\partial^n_x e^y,\]
where $y=y(x)$ is a $C^{\infty}$ function with respect to $x$ and $y_i$ stands for $\partial^i_x y(x)$.
$Y_n$ can be generated by
\[Y_n(y_1, y_2, \cdots, y_n)=
\sum \frac{n!}{\left(\prod^n_{s=1}c_s !\right) \left(\prod^n_{s=1} (s !)^{c_s}_{}\right)}
 \prod^n_{s=1}y_s^{c_s},
\]
where the sum is to be taken over all partitions of $n=\sum_{s=1}^n s c_s$.
The first few $\{Y_n\}$ are
\begin{align*}
& Y_0=1,~~ Y_1=y_1,~~ Y_2=y_2+y_1^2,\\
& Y_3=y_3+3y_1y_2 +y_1^3,\\
& Y_4=y_4+4 y_1y_3+3y_2^2 +6y_1^2y_2+y_1^4.
\end{align*}
The pioneer work that associates bilinearisation of soliton equations with the Bell polynomials is due to
\cite{GilLNW-PRSLA-1996,LamLSW-JPA-1994}.

For the Lam\'e-type function \eqref{B.22},
$Y_n(G_1, G_2, \cdots, G_n)$ is composed by functions such as $\zeta(x+a)$ and their derivatives
with respect to $x$, which might be finally converted to the expressions in terms of $\sigma$ function
by using the formulae given in Appendix \ref{A-1}.

\section{Proof of Theorem \ref{T-1} and Theorem \ref{T-6}} \label{A-3}

Before presenting the proof,
we recall two determinantal identities which are often used when verifying bilinear equations with Wronskian solutions.

\begin{Proposition}\label{P-C1}
\cite{FreN-1983} The  relation
	\begin{equation}
	|M,\mathbf{a},\mathbf{b}||M,\mathbf{c},\mathbf{d}|
-|M,\mathbf{a},\mathbf{c}| |M,\mathbf{b},\mathbf{d}|
+|M,\mathbf{a},\mathbf{d}| |M,\mathbf{b},\mathbf{c}|=0
	\end{equation}
holds, where $M$ is a $N\times(N-2)$ matrix, $\mathbf{a}, \mathbf{b}, \mathbf{c}$ and $\mathbf{d}$
are $N$-th order column vectors.
\end{Proposition}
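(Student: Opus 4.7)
This is the classical Pl\"ucker (Jacobi-type) quadratic relation among the $N\times N$ minors of an $N\times(N+2)$ matrix, and it is the workhorse identity behind Freeman--Nimmo verifications of Wronskian solutions to bilinear equations. The plan is to give the short proof through a double Laplace expansion of a single $2N\times 2N$ auxiliary determinant, which is the standard device in this framework.

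Concretely, I would introduce the block matrix
\[
\mathcal{D}=
\begin{pmatrix}
M & \mathbf{a} & \mathbf{b} & \mathbf{c} & \mathbf{d} & 0\\
0 & \mathbf{a} & \mathbf{b} & \mathbf{c} & \mathbf{d} & M
\end{pmatrix},
\]
in which the two copies of $M$ are identical and the two zero blocks are of size $N\times(N-2)$, so that $\mathcal{D}$ is square of order $2N$. The first step is to show $\det\mathcal{D}=0$. Adding column $i$ of the left $M$ to column $N+2+i$ (the $i$-th column of the right $M$) for $i=1,\dots,N-2$, and then subtracting the top row block from the bottom row block, replaces the bottom row block by $(-M,0,0,0,0,0)$. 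That block then has only $N-2$ possibly nonzero columns, so every $N\times N$ minor extracted from it vanishes, and Laplace expansion along the bottom $N$ rows gives $\det\mathcal{D}=0$.

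The second step is to compute $\det\mathcal{D}$ directly by Laplace expansion along the top $N$ rows of the \emph{original} matrix. The zero block in the upper-right forces every nonvanishing column choice to contain all $N-2$ columns of the left $M$ together with exactly two of the four columns $\mathbf{a},\mathbf{b},\mathbf{c},\mathbf{d}$, producing six terms indexed by the pairs. Each top minor reads $|M,\mathbf{i},\mathbf{j}|$, and after moving $M$ to the front of the complementary bottom minor via $2(N-2)$ swaps (an even permutation), each bottom minor reads $|M,\mathbf{k},\mathbf{l}|$ with $\{k,l\}$ the complementary pair in $\{a,b,c,d\}$.

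The one delicate step, and where I expect the bulk of the work to lie, is the sign bookkeeping. Placing $\mathbf{a},\mathbf{b},\mathbf{c},\mathbf{d}$ in positions $N-1,N,N+1,N+2$, the Laplace sign $(-1)^{s(\sigma)}$ comes out as $+,-,+,+,-,+$ on the six pairs $\{a,b\},\{a,c\},\{a,d\},\{b,c\},\{b,d\},\{c,d\}$, and by the top--bottom symmetry of $\mathcal{D}$ the pairs $\{a,b\}/\{c,d\}$, $\{a,c\}/\{b,d\}$, $\{a,d\}/\{b,c\}$ contribute equally. Hence the six terms collapse to twice
\[
|M,\mathbf{a},\mathbf{b}||M,\mathbf{c},\mathbf{d}|-|M,\mathbf{a},\mathbf{c}||M,\mathbf{b},\mathbf{d}|+|M,\mathbf{a},\mathbf{d}||M,\mathbf{b},\mathbf{c}|,
\]
which, equated with the vanishing obtained in the first step, yields the claimed identity.
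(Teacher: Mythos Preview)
Your argument is correct and is precisely the standard Laplace--expansion proof of this Pl\"ucker relation. Note, however, that the paper does not actually prove Proposition~\ref{P-C1}: it is quoted from \cite{FreN-1983} as a known identity and used as a tool in Appendix~\ref{A-3}, so there is no ``paper's proof'' to compare against. Your write-up could therefore serve as a self-contained replacement for the bare citation; the sign bookkeeping you flag as delicate is carried out correctly (the six Laplace signs $+,-,+,+,-,+$ and the even number $2(N-2)$ of column swaps in each complementary minor are exactly right).
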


\begin{Proposition}\label{P-C2}
\cite{ZhaZSZ-RMP-2014}	Suppose that $\Xi=(\Xi_{i,j})$ is a $N\times N$ matrix with column vector set $\{\Xi_j\}$,
$\Omega=(\Omega_{i,j})$ is a $N\times N$ operator matrix with column vector set $\{\Omega_j\}$
where entries are operators. Then we have
	\begin{equation}
	\sum_{j=1}^N |\Omega_j\ast \Xi|=\sum_{j=1}^N|(\Omega^T)_j \ast \Xi^T|,
	\end{equation}
where for any $N$-th order column vectors $A_j=(A_{1,j}, \cdots, A_{N,j})^T$ and
$B_j=(B_{1,j}, \cdots, B_{N,j})^T$ we define
	\begin{equation}
	A_j\circ B_j=(A_{1,j}B_{1,j}, A_{2,j}B_{2,j},\cdots, A_{N,j}B_{N,j})^T
	\end{equation}
	and
	\begin{equation}
	|A_j\ast\Xi|=|\Xi_1,\cdots,\Xi_{j-1},A_j\circ\Xi_j,\Xi_{j+1},\cdots,\Xi_{N}|.
	\end{equation}
\end{Proposition}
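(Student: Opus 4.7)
The plan is a purely combinatorial one: expand each side by Laplace's cofactor expansion along the distinguished column and verify that both sides collapse to the same double sum over entries of $\Xi$ weighted by cofactors. Because the operator entries of $\Omega$ only ever act on a single scalar entry $\Xi_{i,j}$ (never on a minor, which is a determinant of plain scalars from $\Xi$), the argument is free of any ordering or commutativity issues between operators.

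First I would fix $j$ and expand $|\Omega_j\ast\Xi|$ along its $j$-th column. That column has $i$-th entry $\Omega_{i,j}\Xi_{i,j}$, while all other columns coincide with the corresponding columns of $\Xi$. Writing $M_{i,j}$ for the $(i,j)$-minor of $\Xi$, the expansion yields
$$|\Omega_j\ast\Xi|=\sum_{i=1}^N (-1)^{i+j}\,\Omega_{i,j}\Xi_{i,j}\, M_{i,j},$$
and summing over $j$ gives the double sum $\sum_{i,j=1}^N(-1)^{i+j}\Omega_{i,j}\Xi_{i,j}M_{i,j}$.

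For the right-hand side I would perform the analogous expansion on $\Xi^T$. The $j$-th column of the matrix in $|(\Omega^T)_j\ast\Xi^T|$ has $i$-th entry $\Omega_{j,i}\Xi_{j,i}$, and the $(i,j)$-minor of $\Xi^T$ is exactly the $(j,i)$-minor $M_{j,i}$ of $\Xi$ (transposing a matrix transposes its minor positions without changing their scalar values). Expanding along this column and summing gives
$$\sum_{j=1}^N|(\Omega^T)_j\ast\Xi^T|=\sum_{i,j=1}^N (-1)^{i+j}\,\Omega_{j,i}\Xi_{j,i}\, M_{j,i},$$
which coincides with the sum obtained on the left-hand side after the relabeling $i\leftrightarrow j$.

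I do not anticipate a genuine obstacle; what has to be handled carefully is just the bookkeeping that the $(i,j)$-minor of $\Xi^T$ equals the $(j,i)$-minor of $\Xi$, and the observation that in each summand the operator $\Omega_{i,j}$ is paired solely with the scalar $\Xi_{i,j}$, so no operator ever has to be pushed past another. The identity is in essence a refinement of $\det A=\det A^T$ enriched with Hadamard-type multipliers, and this double cofactor expansion is the cleanest way to see it.
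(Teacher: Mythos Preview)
Your argument is correct. The cofactor expansion along the distinguished column is exactly the right device: after applying the operator entrywise, each $\Omega_{i,j}\Xi_{i,j}$ is an ordinary scalar, so the determinant $|\Omega_j\ast\Xi|$ admits a standard Laplace expansion, and your double sum $\sum_{i,j}(-1)^{i+j}(\Omega_{i,j}\Xi_{i,j})M_{i,j}$ is symmetric under the simultaneous transposition of $\Omega$, $\Xi$, and the minor array. The bookkeeping with $M^T_{i,j}=M_{j,i}$ and the relabeling $i\leftrightarrow j$ is clean, and your remark that no operator is ever composed with another (each $\Omega_{i,j}$ meets only the scalar $\Xi_{i,j}$) disposes of the only conceivable subtlety.

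As for comparison: the paper does not supply its own proof of this proposition. It is quoted as a known determinantal identity from \cite{ZhaZSZ-RMP-2014} and used as a tool in the Wronskian verifications of Theorems~\ref{T-1} and~\ref{T-6}. Your self-contained combinatorial proof is therefore a genuine addition rather than a paraphrase, and it is the natural elementary argument for this statement.
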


Now we start to prove Theorem \ref{T-1}.
For the $\tau$ function given in Wronskian form \eqref{tau-W},
where entries $\{\varphi_j\}$ obey relations \eqref{Wrons-cond},
by direct calculation, we have
\[\begin{array}{l}
		 \tau_x=|\widehat{N-2}, N|,\\
		 \tau_{xx}= |\widehat{N-3}, N-1 , N|+|\widehat{N-2}, N+1|,\\
		 \tau_{xxx}=|\widehat{N-4}, N-2 , N-1 , N|+2|\widehat{N-3}, N-1, N+1|+|\widehat{N-2}, N+2|,\\
		 \tau_{xxxx}=|\widehat{N-5}, N-3 , N-2 , N-1 , N|+3|\widehat{N-4}, N-2 , N-1 , N+1|\\
		  \qquad\qquad +2|\widehat{N-3},N , N+1|+3|\widehat{N-3}, N-1 , N+2|+|\widehat{N-2}, N+3|,\\
		\tau_t=|\widehat{N-4},N-2,N-1,N|-|\widehat{N-3},N-1,N+1|+|\widehat{N-2},N+2|\\
		 \qquad\qquad -\frac{3}{2}N^2\wp'(x)\tau -3\wp(x)\tau_x,\\
\tau_{tx}=|\widehat{N-2},N+3|-|\widehat{N-3},N,N+1|+|\widehat{N-5},N-3,N-2,N-1,N|\\
  \qquad\qquad  - \frac{3}{2} N^2\wp''(x)\tau-\frac{3}{2} (N^2+2)\wp'(x)\tau_x -3\wp(x)\tau_{xx}.
\end{array}
\]
Substituting them into the left hand side of  \eqref{bilinear-a} yields
\begin{equation}\label{C-5}
\begin{array}{l}
~~~4\tau_x\tau_t-4\tau_{xt}\tau+\tau_{xxxx}\tau-4\tau_{xxx}\tau_{x}+3\tau_{xx}^2-12\wp(x)(\tau_{xx}\tau-\tau_x^2)\\
=\tau \left(6N^2\wp''(x)\tau +12\wp'(x)\tau_x
 -3|\widehat{N-2},N+3|+6|\widehat{N-3},N,N+1| +3|\widehat{N-3},N-1,N+2|\right.\\
~~~	\left.	+3|\widehat{N-4},N-2,N-1,N+1|-3|\widehat{N-5},N-3,N-2,N-1,N|\right )\\
~~~		-12|\widehat{N-2},N| |\widehat{N-3},N-1,N+1|+3(|\widehat{N-3},N-1,N|+|\widehat{N-2},N+1|)^2.
\end{array}
\end{equation}
With the help of Proposition \ref{P-C2} where we take $\Omega_{j,s}=\wp(k_j)$,
from identity
$\left(\sum^{N}_{j=1}\wp(k_j) \tau \right)^2=\tau\sum^{N}_{j=1}\wp(k_j)\left(\sum^{N}_{j=1}\wp(k_j) \tau\right)$
we have
\begin{equation}\label{C-6}
\begin{array}{l}
0=\tau\left(-2N^2\wp''(x)\tau-4\wp'(x)\tau_x+|\widehat{N-2},N+3|+2|\widehat{N-3},N,N+1|
-|\widehat{N-3},N-1,N+2| \right.\\
~~~~\left. -|\widehat{N-4},N-2,N-1,N+1|+|\widehat{N-5},N-3,N-2,N-1,N|\right)\\
~~~~-(|\widehat{N-2},N+1|-|\widehat{N-3},N-1,N|)^2,
\end{array}
\end{equation}
using which equation \eqref{C-5} is reduced to
\begin{equation}\label{C-7}
\begin{array}{l}
		12(|\widehat{N-3},N,N+1| |\widehat{N-1}|-|\widehat{N-2},N| |\widehat{N-3},N-1,N+1|\\
+|\widehat{N-3},N-1,N| |\widehat{N-2},N+1|),
		\end{array}
\end{equation}
which vanishes in light of Proposition \ref{P-C1}. Thus Theorem \ref{T-1} is proved.

In a similar way we can prove Theorem \ref{T-6} for the KP equation.
In this case, the Wronskian entries $\{\varphi_j\}$ satisfy relation \eqref{eq:kplax}.
Derivatives of $\tau$ with respect $x$ and $t$ are the same as those for the KdV equation.
Besides them, we also have
\[	\begin{array}{l}
\tau_y=2 N\wp(x)\tau+|\widehat{N-3},N-1,N|-|\widehat{N-2},N+1|,\\
\tau_{yy}=4N^2\wp^2(x)\tau+4N\wp(x)(|\widehat{N-3},N-1,N|-|\widehat{N-2},N+1|)
                -4\wp'(x)\tau_x-2N^2\wp''(x)\tau\\
\qquad\qquad +|\widehat{N-5},N-3,N-2,N-1,N|
	 +2|\widehat{N-3},N,N+1|-|\widehat{N-3},N-1,N+2|\\
\qquad\qquad -|\widehat{N-4},N-2,N-1,N+1|+|\widehat{N-2},N+3|.
\end{array}
\]
For the KP equation, we do not have identity \eqref{C-6}. However,
$\tau_{yy}\tau-\tau_y^2$ contributes the same terms as the right hand side of \eqref{C-6}.
It then follows that
\begin{align*}
(D_x^4-4D_tD_x-12\wp(x)D_x^2+3D^2_y)\tau \cdot \tau
\end{align*}
is reduced to \eqref{C-7} as well, which is zero.
Thus, we complete the proof for Theorem \ref{T-6}.

\vskip 30pt

\end{document}